\title{Parametric MMD Estimation with Missing Values: Robustness to Missingness and Data Model Misspecification}
\author{Badr-Eddine Chérief-Abdellatif$^1$, Jeffrey Näf$^2$, \vspace{0.2cm} \\
        $ ^1$CNRS, LPSM, Sorbonne Université, Université Paris Cité \vspace{0.1cm} \\
        $ ^2$Research Institute for Statistics and Information Science,\\
        University of Geneva \\
}
\date{}
\newcommand{\INPUT}[1]{\State \textbf{Input:} #1}
\newcommand{\R}{\ensuremath{{\mathbb R}}}
\newcommand{\Pjoint}{\ensuremath{{\mathbb{P}_{X,M}}}}
\newcommand{\PjointMCAR}{\ensuremath{{\mathbb{P}_{X} \times \mathbb{P}_{M}}}}
\newcommand{\PX}{\ensuremath{{\mathbb{P}_X}}}
\newcommand{\PmargmX}{\ensuremath{{\mathbb{P}^{(m)}_X}}}
\newcommand{\PmargMX}{\ensuremath{{\mathbb{P}^{(M)}_X}}}
\newcommand{\PXM}{\ensuremath{{\mathbb{P}_{X\mid M}}}}
\newcommand{\PXMm}{\ensuremath{{\mathbb{P}_{X\mid M=m}}}}
\newcommand{\PmargMXM}{\ensuremath{{\mathbb{P}_{X\mid M}^{(M)}}}}
\newcommand{\PmargmXMm}{\ensuremath{{\mathbb{P}_{X\mid M=m}^{(m)}}}}
\newcommand{\PM}{\ensuremath{{\mathbb{P}_{M}}}}
\newcommand{\pjoint}{\ensuremath{{p_{X,M}}}}
\newcommand{\pX}{\ensuremath{{p_X}}}
\newcommand{\pmargmX}{\ensuremath{{p^{(m)}_{X}}}}
\newcommand{\pXMm}{\ensuremath{{p_{X\mid M=m}}}}
\newcommand{\pmargmXMm}{\ensuremath{{p^{(m)}_{X\mid M=m}}}}
\newcommand{\PMm}{\ensuremath{{\mathbb{P}[M=m]}}}
\newcommand{\PMmX}{\ensuremath{{\mathbb{P}[M=m|X]}}}
\newcommand{\PMmx}{\ensuremath{{\mathbb{P}[M=m|X=x]}}}
\newcommand{\PMdX}{\ensuremath{{\mathbb{P}_{M\mid X}}}}
\newcommand{\Ptheta}{\ensuremath{{P_{\theta}}}}
\newcommand{\Pmargmtheta}{\ensuremath{{P^{(m)}_{\theta}}}}
\newcommand{\PmargMtheta}{\ensuremath{{P^{(M)}_{\theta}}}}
\newcommand{\Pthetastar}{\ensuremath{{P_{\theta^*}}}}
\newcommand{\PmargMthetastar}{\ensuremath{{P^{(M)}_{\theta^*}}}}
\newcommand{\ptheta}{\ensuremath{{p_{\theta}}}}
\newcommand{\pmargmtheta}{\ensuremath{{p^{(m)}_{\theta}}}}
\newcommand{\PmargMitheta}{\ensuremath{{P^{(M_i)}_{\theta}}}}
\newcommand{\QX}{\ensuremath{{\mathbb{Q}_X}}}
\newcommand{\QMmx}{\ensuremath{{\mathbb{Q}[M=m|X=x]}}}
\newcommand{\E}{\ensuremath{{\mathbb E}}}
\newcommand{\Prob}{\ensuremath{{\mathbb P}}}
\newcommand*{\MLEn}{\ensuremath{\theta_n^{\textnormal{ML}}}}
\newcommand*{\MMDn}{\ensuremath{\theta_n^{\textnormal{MMD}}}}
\newcommand*{\MLEinf}{\ensuremath{\theta_{\infty}^{\textnormal{ML}}}}
\newcommand*{\MMDinf}{\ensuremath{\theta_{\infty}^{\textnormal{MMD}}}}
\newcommand{\KL}[2]{\textnormal{KL}\left(#1\|#2\right)}
\newcommand{\MMD}[2]{\mathbb{D}\left(#1,#2\right)}
\newcommand{\MMDtwo}[2]{\mathbb{D}^2\left(#1,#2\right)}
\newcommand{\MMDm}[3]{\mathbb{D}^{#1}\left(#2, #3 \right)}
\newcommand{\Ind}[1]{\mathbb{1}\left\{#1\right\}}
\renewcommand{\H}{\ensuremath{{\mathcal H}}}
\DeclareMathOperator*{\argmax}{arg\,max}
\DeclareMathOperator*{\argmin}{arg\,min}
\newcommand{\ostar}{\mathbin{\mathpalette\make@circled\star}}
\newcommand{\make@circled}[2]{%
  \ooalign{$\m@th#1\smallbigcirc{#1}$\cr\hidewidth$\m@th#1#2$\hidewidth\cr}%
}
\newcommand{\smallbigcirc}[1]{%
  \vcenter{\hbox{\scalebox{0.77778}{$\m@th#1\bigcirc$}}}%
}
\newtheorem{thm}{Theorem}[section]
\newtheorem{prop}{Proposition}[section]
\newtheorem{asm}{Assumption}[section]
\newtheorem{lemma}[thm]{Lemma}
\newtheorem{cor}[thm]{Corollary}
\newtheorem{exm}{Example}[section]
\begin{document}

\maketitle

\begin{abstract}
In the missing data literature, the Maximum Likelihood Estimator (MLE) is celebrated for its ignorability property under missing at random (MAR) data. However, its sensitivity to misspecification of the (complete) data model, even under MAR, remains a significant limitation.  This issue is further exacerbated by the fact that the MAR assumption may not always be realistic, introducing an additional source of potential misspecification through the missingness mechanism. To address this, we propose a novel M-estimation procedure based on the Maximum Mean Discrepancy (MMD), which is provably robust to both model misspecification and deviations from the assumed missingness mechanism. Our approach offers strong theoretical guarantees and improved reliability in complex settings. We establish the consistency and asymptotic normality of the estimator under missingness completely at random (MCAR), provide an efficient stochastic gradient descent algorithm, and derive error bounds that explicitly separate the contributions of model misspecification and missingness bias. Furthermore, we analyze missing not at random (MNAR) scenarios where our estimator maintains controlled error, including a Huber setting where both the missingness mechanism and the data model are contaminated. Our contributions refine the understanding of the limitations of the MLE and provide a robust and principled alternative for handling missing data.
\end{abstract}

\section{Introduction}

Missing values present a persistent challenge across various fields. When data contain missing values, some dimensions of the observations are masked, requiring consideration not only of the data model but also of the missingness mechanism. Due to the pervasiveness of this issue, numerous approaches have been developed to obtain reliable estimates despite missing data. These include imputation methods (\cite{FCS_Van_Buuren2007, VANBUUREN2018, näf2024goodimputationmarmissingness}), weighted estimators (\cite{inverseweightingoverview, Shpitser_2016, MARinverseweighting, CANTONI2020, Malinsky2022}), and likelihood-based procedures (\cite{Rubin_Inferenceandmissing, Rubin_Imputationafter18, MLEConsistencyunderMissing, MLEMisspecificationunderMissing, Sportisse2024}, among others). 
Among these, likelihood-based methods - particularly Maximum Likelihood Estimation (MLE) - are widely used, as they provide consistent estimates under the missing at random (MAR) condition. Introduced by \cite{Rubin_Inferenceandmissing}, MAR roughly states that the probability of missingness depends only on observed data. Rubin further showed that, under an additional parameter distinctness assumption, maximizing the likelihood while ignoring the missingness mechanism remains valid.
This result established MAR as a standard assumption in missing data analysis, as evidenced by the extensive literature discussing this condition (\cite{whatismeant1, whatismeant2, whatismeant3, näf2024goodimputationmarmissingness}). Beyond MAR, a stronger assumption is missing completely at random (MCAR), where the probability of missingness is entirely independent of the data. Conversely, when the missingness mechanism depends on unobserved information, it falls under missing not at random (MNAR), which is generally more challenging to handle.

\vspace{0.2cm}
Despite its popularity, the MAR condition has been questioned, both due to its potential lack of realism (see e.g., \cite{Robins1997_ingorable, Malinsky2022, directcompetitor0} and references therein) and because it is challenging to handle outside of MLE. For instance, \cite{MNARcontamination} show that MAR can lead to parameter non-identifiability even in simple cases, while \cite{näf2024goodimputationmarmissingness} highlight the complex distribution shifts that can arise under MAR. These issues cast doubt on MAR as a default assumption for missing data. As an alternative, \cite{MNARcontamination} propose a framework in which the missingness mechanism is assumed to be MCAR, but with Huber-style MNAR deviations affecting at most a fraction $\varepsilon$ of the data. This approach is particularly relevant given that the missingness mechanism is typically unknown. Indeed, since MAR is fundamentally untestable (\cite{Robins1997_ingorable, ourresult}), identifying the true missing data mechanism without strong domain knowledge is often infeasible. More generally, the framework of \cite{MNARcontamination} allows for contamination in both the assumed data model and the missingness mechanism. This flexibility is crucial, as MLE lacks robustness against model misspecification and contamination. These issues become even more problematic in the presence of missing data, where misspecifications can stem not only from the data model but also from an incorrectly specified missingness mechanism.

\vspace{0.2cm}
A promising alternative to MLE is Maximum Mean Discrepancy (MMD) estimation, which selects parameters by minimizing the MMD between the postulated and observed distributions, rather than relying on the Kullback-Leibler divergence as in MLE. This approach is known to exhibit desirable robustness properties (\cite{briol2019statistical, BadrAlquierMMD}).
Building on the idea of safeguarding against deviations in both the assumed data distribution and the missingness mechanism, we extend MMD estimation to the case of missing values. Our first step is to reformulate the MMD estimator as an M-estimator, which allows us to adapt it to the more challenging setting of missing data. We then establish that the resulting estimator is consistent and asymptotically normal under MCAR, while remaining robust to certain deviations from both the assumed data model and the missingness mechanism. Specifically, we show that under regularity conditions, the MMD estimator converges to the parameter associated with the closest distribution (in MMD) to the true conditional within the model class. We derive explicit bounds on this MMD distance, which split into two components: one corresponding to errors from model misspecification and the other accounting for deviations from the MCAR assumption in the missingness mechanism. 
To facilitate practical implementation, we extend an existing stochastic gradient descent (SGD) algorithm for MMD estimation to handle missing values. This allows for a straightforward adaptation of the algorithm to a wide range of parametric models, maintaining its generality, flexibility, and computational efficiency while providing rigorous robustness guarantees. \textcolor{black}{In the special case of the Gaussian distribution with fixed covariance matrix, this leads to a robust estimator of the mean under missing values. While estimation of a Gaussian mean under contamination has been the subject of active research in the last decade with complete data (see e.g., \citet{diakonikolas2019recent,loh2024theoretical,anderson2025robust} and the references therein for an overview), much less attention has been paid to the problem with missing data. Two recent exceptions are \citet{Meanestimationmissing1} who study the problem theoretically in great detail and \citet{MNARcontamination} who improve on their results.}


\vspace{0.2cm}
The paper is organized as follows. After discussing related work and notation, we first give some background on MMD and maximum likelihood estimation and motivate our approach in Section \ref{Sec_MLEdiscussion}. Section \ref{Sec_EstimatorIntro} then presents our estimator and its asymptotic properties. Section \ref{Sec_robust} discusses the general robustness guarantees of the new estimator. Finally, Sections \ref{Sec_exm} and \ref{Sec_Simulation} provide several examples. Code to replicate the examples can be found on \url{https://github.com/JeffNaef/MissingnessMMD}. 

\subsection{Related Work}

This paper considers an estimation approach that accounts for deviations from both the data model and the assumed MCAR missingness mechanism. To analyze robustness under such deviations, we derive general bounds on the discrepancy between the estimated and true distributions in terms of MMD. A related contamination framework is studied in \cite{MNARcontamination}, where the entire joint distribution is affected: a fraction $\varepsilon$ of points follows a different data distribution and is subject to a different missingness mechanism. In contrast, we allow for separate contamination of the data model and the missingness mechanism, with potentially different levels of contamination. Furthermore, while \cite{MNARcontamination} develops robust estimators for means and regression parameters, our approach provides a more general framework applicable to any parametric model, with provable robustness guarantees.

\vspace{0.2cm}
A key component of our methodology is the reformulation of the MMD estimator as an M-estimator, linking our work to classical M-estimation under missing data. In particular, \cite{Mestimatormissingvalues} demonstrate that M-estimators generally fail to be consistent when the MCAR assumption is violated, leading to the development of weighted estimators that ensure consistency under more general missingness mechanisms (\cite{inverseweightingoverview, Shpitser_2016, MARinverseweighting, CANTONI2020, Malinsky2022} amont others). Notably, \cite{Malinsky2022} propose a weighted M-estimation procedure that remains consistent and asymptotically normal under a no-self-censoring assumption. However, these approaches rely on explicit modeling of the missingness mechanism, typically through parametric models, and do not explicitly address robustness to misspecifications in this mechanism. In contrast, our approach ensures consistency under MCAR while providing robustness guarantees against deviations from the assumed missingness mechanism, thereby avoiding the need to model this mechanism explicitly. Nonetheless, our method could be combined with such reweighting strategies to construct estimators that are both robust and consistent beyond MCAR.

\vspace{0.2cm}
The central statistical tool in our framework is the Maximum Mean Discrepancy (MMD) \citep{gretton2007kernel, gretton2012kernel}, a distance measure that has gained increasing attention in parametric and nonparametric estimation. Initially introduced for hypothesis testing, MMD has since been established as a general estimation principle by \cite{briol2019statistical} and \cite{BadrAlquierMMD}, who showed that MMD-based estimators are consistent, robust to model misspecification, and computationally tractable. These properties make MMD particularly well-suited for scenarios involving complex data distributions and deviations from assumed models. The application of MMD estimation has been extended to various settings, including generalized Bayesian inference \citep{cherief2020mmd, pacchiardi2024generalized,frazier2024impact,shen2024prediction}, Approximate Bayesian Computation (ABC) \citep{park2016k2, mitrovic2016dr, kajihara2018kernel, bharti2021general, legramanti2025concentration}, Bayesian nonparametric learning \citep{dellaporta2022robust}, generative adversarial networks \citep{dziugaite2015training, li2015generative, sutherland2016generative, li2017mmd, binkowski2018demystifying}, quantization \citep{teymur2021optimal}, copula estimation \citep{AlquierCopulas}, label shift quantification \citep{zhang2013domain, iyer2014maximum, dussap2023label}, and regression \citep{AlquierGerberRegression}, among others.

\vspace{0.2cm}
Our work extends this line of research by applying MMD-based estimation to the context of missing data, demonstrating that it provides a natural and robust alternative to traditional methods while avoiding explicit specification of the missingness mechanism. \textcolor{black}{Although MMD-based hypothesis testing has recently been extended to missing data scenarios \citep{TwosampleTestingMMD, zeng2024mmdtwosampletestingpresence}, to the best of our knowledge, this is the first work to address this problem within the framework of robust estimation.}

\subsection{Notation}




In missing data analysis, the underlying random vector $X$ with values in $\R^d$ gets masked by another random vector $M$ taking values in $\{0,1\}^d$, where $M_j=0$ indicates that $X_j$ is observed while $M_j=1$ means that $X_j$ is missing. Under i.i.d. data sampling of $(X_1, M_1), \ldots ,(X_n, M_n)$, this induces at most $2^d$ possible missingness patterns, with each observation being associated with a specific pattern. Each observation from a pattern $m$ can be seen as a masked sample from the distribution $X$ conditional on $M=m$, leading to potentially different distributions in each pattern, exacerbating the information loss through the non-observed values. We consider the following notation:

\begin{itemize}
    \item $\Pjoint$ is the joint distribution of $(X,M)$, whereas $\PjointMCAR$ emphasizes an MCAR mechanism. We assume that $\Pjoint$ is absolutely continuous with respect to the product of Lebesgue's and counting measures, with joint density $\pjoint$.
    \item $\PX$ is the marginal distribution of the complete data variable $X$, which is assumed to be absolutely continuous with respect to Lebesgue's measure with density $\pX$.
    \item $\PM$ is the marginal distribution of the mask variable $M$. Given its discrete nature, we introduce the probability mass function $\PMm$ for a given pattern $m\in\{0,1\}^d$ so that for every measurable set $A\subset \{0,1\}^d$, $\PM[A]=\sum_{m \in A} \PMm$.
    \item We denote $\PXMm$ the conditional distribution of $X$ given $M=m$, which is defined by its density w.r.t.\ Lebesgue's measure (provided that $\PMm\ne0$):
    $$
    \pXMm(x) = \frac{\pjoint(x,m)}{\PMm} .
    $$
    The corresponding conditional Markov kernel of $X$ given $M$ is then defined as $\PXM$. 
    \item Similarly, we define the conditional distribution of $M$ given $X=x$ by its probability mass function:
    $$
    \PMmx = \frac{\pjoint(x,m)}{\pX(x)} .
    $$
    The corresponding conditional Markov kernel of $M$ given $X$ is then defined as $\PMdX$. This is what we  refer to as the missingness mechanism.
    \item For two distributions $P_1, P_2$ with densities $p_1, p_2$, we define the Kullback–Leibler divergence: 
    \begin{align}
    \KL{P_1}{P_2} = \begin{cases}
        \int \log \left( \frac{p_1(x)}{p_2(x)} \right) p_1(x) \mathrm{d}x , \textnormal{ if }  \int \Ind{p_2(x)=0} p_1(x) \mathrm{d}x=0\\
        \infty, \textnormal{ else.}
    \end{cases}
\end{align}
    \item We finally introduce a complete data model $\{\Ptheta\}_\theta$ with density $\ptheta$ w.r.t.\ Lebesgue's measure. The model is well-specified when there is a true parameter $\theta^*$ such that $\PX=\Pthetastar$.
\end{itemize}

For $m \in \{0,1\}^d$, $X^{(m)}$ is the subvector of $X$ corresponding to the variables such that $m_j=1$, and $\PmargmX/ \pmargmX$, $\PmargmXMm/\pmargmXMm$, $\Pmargmtheta/\pmargmtheta$ are the corresponding distributions/densities. We call a missingness mechanism missing completely at random (MCAR) if, $\Pjoint=\PjointMCAR$. A missingness mechanism is called missing at random (MAR) if for all $m$ in the support of $\PM$, $\PMmx=\Prob[M=m \mid X^{(m)}=x^{(m)}]$ for $\PX$-almost every $x$.

\vspace{0.2cm}
\textcolor{black}{Note that for all the expectations in $M$, the "empty pattern" is (implicitly) excluded, i.e.
$$
\mathbb{E}_{M}\left[f(M)\right] = \mathbb{E}_{M}\left[\mathbbm{1}(M\ne\{1\}^d)f(M)\right] = \sum_{m\ne\{1\}^d} p_m f(m) .
$$
The indicator will be omitted for ease of presentation.}

\section{Background}\label{Sec_MLEdiscussion}

We present in this section all the required background to understand the rest of the paper. We first remind the definition of the MMD and introduce related concepts. Then, we properly define the minimum distance estimator based on the MMD in the complete-data setting, that we shortly compare to the MLE. Finally, we provide a brief survey of the literature on the MLE in the presence of missing data.

\subsection{Maximum Mean Discrepancy}

The Maximum Mean Discrepancy (MMD) relies on the theory of kernel methods. Consider a positive definite kernel \( k: \mathcal{X} \times \mathcal{X} \to \mathbb{R} \), which is a symmetric function satisfying the condition that, for any integer \( N \geq 1 \), any set of points \( x_1, \dots, x_N \in \mathcal{X} \), and any real coefficients \( c_1, \dots, c_N \), we have  
\[
\sum_{i=1}^{N} \sum_{j=1}^{N} c_i c_j k(x_i, x_j) \geq 0.
\]
We assume in this work that the kernel is bounded (say by $1$ without loss of generality), that is $|k(x,y)|\leq 1$ for any $x,y\in\mathcal{X}$. Associated with such a kernel is a Reproducing Kernel Hilbert Space (RKHS), denoted \( (\mathcal{H}, \langle \cdot, \cdot \rangle_{\mathcal{H}}) \), which satisfies the fundamental reproducing property: for any function \( f \in \mathcal{H} \) and any \( x \in \mathcal{X} \),  
\[
f(x) = \langle f, k(x, \cdot) \rangle_{\mathcal{H}}.
\]
We refer the interested reader to \cite[Chapter 2.7]{hilbertspacebook} for a more detailed introduction to RKHS theory. 

\vspace{0.2cm}
A key concept in kernel methods is the kernel mean embedding, which provides a way to represent probability measures as elements of the RKHS. Given a probability distribution \( P \) over $\mathcal{X}$, its mean embedding is defined as
\[
\Phi(P) := \mathbb{E}_{X \sim P} [k(X, \cdot)] \in \mathcal{H}.
\]
This embedding generalizes the feature map $\Phi(X) := k(X, \cdot)$ used in kernel-based learning algorithms such as Support Vector Machines. A major advantage of this approach is that expectations with respect to \( P \) can be expressed as inner products in \( \mathcal{H} \), i.e.,  
\[
\mathbb{E}_{X \sim P} [f(X)] = \langle f, \Phi(P) \rangle_{\mathcal{H}}, \quad \forall f \in \mathcal{H}.
\]
Using these embeddings, one can define a discrepancy measure between probability distributions known as the Maximum Mean Discrepancy (MMD). For two distributions \( P \) and \( Q \), the MMD is given by  
\[
\MMD{P_1}{P_2}=\| \Phi(P_1) - \Phi(P_2) \|_{\H},
\]
which can be rewritten explicitly as  
\begin{equation}
\label{alt_MMD}
\MMDtwo{P_1}{P_2} = \mathbb{E}_{X, X' \sim P} [k(X, X')] + \mathbb{E}_{Y, Y' \sim Q} [k(Y, Y')] - 2 \mathbb{E}_{X \sim P, Y \sim Q} [k(X, Y)].
\end{equation}
A kernel \( k \) is called characteristic if the mapping \( P \mapsto \Phi(P) \) is injective, ensuring that \( \MMD{P_1}{P_2} = 0 \) if and only if \( P_1 = P_2 \), then providing a proper metric. Many conditions ensuring that a kernel is characteristic are discussed in Section 3.3.1 of \cite{kernelmeanembeddingreview}, including examples such as the widely used Gaussian kernel:
\begin{align}\label{Gausskern}
k(x, y) = \exp\left(-\frac{\|x - y\|_2^2}{\gamma^2} \right) .
\end{align}
For the remainder of this work, we assume that \( k \) is characteristic and dimension-independent, meaning the same function can be applied to $x^{(m)}$ and $y^{(m)}$ on $\R^{|m|}$ for any $m\in\{0,1\}^d$, e.g.\ $k(x^{(m)}, y^{(m)}) = \exp(-{\|x^{(m)} - y^{(m)}\|^2}/{\gamma^2} )$ for the Gaussian kernel.
Note that $\mathcal{H}$, $\Phi$ and $\MMD{\cdot}{\cdot}$ all implicitly depend on the kernel $k$.



\subsection{MMD estimation with no missing data}


We start by revisiting the case of fully observed data. For a complete-data model $\{\Ptheta\}_\theta$, we recall the general definition of the MMD estimator as proposed by \cite{briol2019statistical,BadrAlquierMMD}. Let $X_1,\cdots,X_n$ be a collection of i.i.d.\ random variables following an unknown data generating process $\PX$, and $P_n = \frac{1}{n} \sum_{i=1}^n \delta_{\{X_i\}}$ be the associated empirical measure. We then define the MMD estimator as the following minimum distance estimator:
\begin{align}\label{MMD_Estimator_full_data}
    \MMDn = \arg\min_{\theta \in \Theta} \MMD{\Ptheta}{P_n} ,
\end{align}
assuming this minimum exists. An approximate minimizer can be used instead when the exact minimizer does not exist. The MMD estimator has several favorable properties. First, it can be shown that $\MMDn$ is consistent and asymptotically normal under appropriate conditions:
\begin{thm}[Informal variant of Proposition 1 and Theorem 2 in \cite{briol2019statistical}]
\label{MMD_missing_asymp}
    Under appropriate regularity conditions (including the existence of the involved quantities), $\MMDn$ is strongly consistent and asymptotically normal:
    $$
    \MMDn \xrightarrow[\ n \to +\infty]{\PX-\textnormal{a.s.}} \MMDinf ,
    $$
    $$
    \sqrt{n} (\MMDn- \MMDinf) 
    \xrightarrow[\ n \to +\infty]{\mathcal{L}} \mathcal{N}\left(0,B^{-1}\Sigma B^{-1}\right) ,
    $$
    where 
    $$
    \MMDinf=\arg\min_{\theta\in\Theta} \MMD{\Ptheta}{\PX} ,
    $$
    $$
    B = \nabla_{\theta,\theta}^2 \MMDtwo{P_{\MMDinf}}{\PX} ,
    $$
    and
    \vspace{-0.3cm}
    $$
    \Sigma = \mathbb{V}_{X\sim \PX}\left[\nabla_{\theta} \MMDtwo{P_{\MMDinf}}{\delta_{\{X\}}}\right] .
    $$
\end{thm}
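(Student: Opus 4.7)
The plan is to prove consistency and asymptotic normality separately by leveraging the Hilbertian structure of MMD to reduce the problem to standard M-estimator arguments.

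For strong consistency, I would first exploit the triangle inequality for MMD (which is a norm in $\mathcal{H}$): for every $\theta$,
$$|\MMD{\Ptheta}{P_n} - \MMD{\Ptheta}{\PX}| \leq \MMD{P_n}{\PX}.$$
Because the kernel is bounded by $1$, the standard kernel mean embedding concentration bound yields $\mathbb{E}[\MMD{\PX}{P_n}] = O(1/\sqrt{n})$, and McDiarmid's inequality combined with Borel--Cantelli gives $\MMD{P_n}{\PX} \to 0$ almost surely. Hence the empirical criterion converges uniformly in $\theta$ to the deterministic criterion $\theta \mapsto \MMD{\Ptheta}{\PX}$ almost surely, and the standard argmin continuous-mapping argument (e.g.\ van der Vaart, Theorem 5.7) delivers $\MMDn \to \MMDinf$ almost surely, provided $\MMDinf$ is the unique minimizer and $\Theta$ is compact (or the criterion is coercive).

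For asymptotic normality, I would recast the problem as a classical M-estimator. Expanding
$$\MMDtwo{\Ptheta}{P_n} = \mathbb{E}_{X,X'\sim\Ptheta}[k(X,X')] - \frac{2}{n}\sum_{i=1}^n \mathbb{E}_{X\sim\Ptheta}[k(X,Y_i)] + \frac{1}{n^2}\sum_{i,j} k(Y_i,Y_j),$$
the last term is independent of $\theta$, so $\MMDn$ minimizes the average-of-i.i.d.-terms objective
$$M_n(\theta) = \mathbb{E}_{X,X'\sim\Ptheta}[k(X,X')] - \frac{2}{n}\sum_{i=1}^n \mathbb{E}_{X\sim\Ptheta}[k(X,Y_i)],$$
with score $\psi_\theta(y) = \nabla_\theta \MMDtwo{\Ptheta}{\delta_{\{y\}}}$ (up to a $\theta$-constant coming from $k(y,y)$). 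Since $\mathbb{E}_{Y\sim\PX}[\MMDtwo{\Ptheta}{\delta_{\{Y\}}}]$ differs from $\MMDtwo{\Ptheta}{\PX}$ only by a $\theta$-independent constant, the score is centered at $\MMDinf$: $\mathbb{E}_{Y\sim\PX}[\psi_{\MMDinf}(Y)] = 0$. A Taylor expansion of the first-order condition $\nabla M_n(\MMDn) = 0$ around $\MMDinf$ then gives
$$\sqrt{n}(\MMDn - \MMDinf) = -\bigl[\nabla^2 M_n(\tilde\theta_n)\bigr]^{-1}\sqrt{n}\,\nabla M_n(\MMDinf),$$
with $\tilde\theta_n$ lying between $\MMDn$ and $\MMDinf$. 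By consistency and local uniform convergence of the Hessian, $\nabla^2 M_n(\tilde\theta_n) \to B$ in probability, while the ordinary CLT applied to the i.i.d.\ sum $\sqrt{n}\nabla M_n(\MMDinf)$ yields a centered Gaussian limit with covariance $\Sigma$. Slutsky's lemma then produces the sandwich form $B^{-1}\Sigma B^{-1}$.

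The main obstacle will be verifying the regularity conditions behind each step: compactness (or coercivity) and uniqueness of $\MMDinf$ for the consistency part; for the normality part, twice-differentiability of $\theta \mapsto \Ptheta$, dominated-convergence-type hypotheses permitting one to swap $\nabla_\theta$ with $\mathbb{E}_{X\sim\Ptheta}[\cdot]$ inside the kernel mean embedding, local uniform convergence of $\nabla^2 M_n$ to its population counterpart in a neighborhood of $\MMDinf$, and non-degeneracy of $B$. These are precisely the unstated hypotheses hidden behind the word "informal" in the statement; once they are in place, the preceding manipulations are routine applications of M-estimator theory.
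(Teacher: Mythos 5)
Your proposal is correct and follows essentially the same route as the paper: the paper itself defers the proof of this complete-data statement to \cite{briol2019statistical}, but its own appendix proofs of the missing-data analogues (Theorems \ref{thm_consistency} and \ref{thm_normality}) use exactly your template — almost-sure uniform convergence of the criterion followed by a standard argmin/Newey--McFadden argument for consistency, and a first-order-condition Taylor expansion with local uniform Hessian convergence, a CLT for the centered score, and Slutsky for the sandwich variance $B^{-1}\Sigma B^{-1}$. Your reverse-triangle-inequality shortcut for uniform convergence is the clean minimum-distance version available in the complete-data case (and is what Briol et al.\ use), whereas the paper's Lemma B.1 must instead expand the squared MMD into inner products because the missing-data criterion is no longer a distance to a single empirical measure.
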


Hence, when the model is correctly specified, that is $\PX=\Pthetastar \in\{\Ptheta\}_\theta$, then $\MMDn$ converges to the true parameter $\theta^*$, i.e.\ $\MMDinf=\theta^*$, as soon as the model is identifiable, that is $\theta\neq\theta' \implies P_{\theta} \neq P_{\theta'}$. In case of misspecification, i.e.\ when $\PX \notin\{\Ptheta\}_\theta$, $\MMDn$ converges towards the parameter corresponding to the best (in the MMD sense) approximation of the true distribution $\PX$ in the model. A formal statement of Theorem \ref{MMD_missing_asymp} can be found in \cite{briol2019statistical} for generative models. The extension to the general case is straightforward.

\vspace{0.2cm}
What is particularly interesting with the MMD is that it is a robustness-inducing distance. For instance, when estimating a univariate Gaussian mean $\theta^*\in\mathbb{R}$ with a proportion $\varepsilon$ of data points that are adversarially contaminated, then the limit $\MMDinf$ of $\MMDn$ will be no further (in Euclidean distance) from $\theta^*$ than $\varepsilon$, while the limit of the MLE can be made arbitrarily far away from $\theta^*$ (see the results of Section 4.1 in \cite{BadrAlquierMMD} for a formal statement). This is the case because the MLE
\begin{align}\label{MLE_Estimator_full_data}
   \MLEn=  \argmax_{\theta} \sum_{i=1}^{n} \log \ptheta(X_i),
\end{align}
where $\ptheta$ is the density of $\Ptheta$ w.r.t.\ Lebesgue's measure, converges a.s. to $\MLEinf$ whereby
\[
\MLEinf=  \argmin_{\theta} \KL{\PX}{\Ptheta} ,
\]
and the KL divergence is not bounded and does not induce the desirable robustness properties inherent to the MMD. Consider for example the case where the data are realizations of $n$ i.i.d.\ random variables $X_1, \dots, X_n$ actually drawn from the mixture:
$
\PX = \left(1 - \epsilon\right)\cdot\mathcal{N}(\theta^*,1) + \epsilon \cdot \mathcal{N}(\xi, 1),
$  
where the distribution of interest $\mathcal{N}(\theta^*,1)$ is contaminated by another univariate normal $\mathcal{N}(\xi, 1)$. Then $\MLEinf = (1-\epsilon)\theta^*+\epsilon\xi$, which can be quite far from $\theta^*$ when $|\epsilon(\xi-\theta^*)|$ is large when compared to $\theta^*$. This illustrates the non-robustness of the MLE to small deviations from the parametric assumption. An empirical example illustrating this is given in Section \ref{Sec_Simulation}.

\subsection{Maximum Likelihood Estimation under missing data}

Maximum likelihood estimation can be applied in the presence of missing values. The key idea is to treat the pair \( (X, M) \) as random and model their joint distribution using a complete-data model \( \{P_\theta\}_\theta \) over $X$ along with a missingness mechanism parameterized by some \( \phi \). Given a collection of observed i.i.d.\ random variables \( \{(X_i^{(M_i)}, M_i)\}_{i=1}^n \) with \( (X, M) \sim \Pjoint \), the MLE for \( \theta \) is obtained by maximizing the likelihood of \( \{(X_i^{(M_i)}, M_i)\}_{i=1}^n \) over \( (\theta, \phi) \). 

\vspace{0.2cm}
A fundamental result in the missing data literature, known as the ignorability property of the MLE, was established by \cite{Rubin_Inferenceandmissing}. This result states that inference based on the following estimator:  
\[
\MLEn = \arg\max_{\theta} \sum_{i=1}^{n} \log p_{\theta}^{(M_i)}\left(X_i^{(M_i)}\right)
\]
remains valid, i.e.\ that \( \MLEn \) is equivalent to the maximum likelihood estimator computed using the full available dataset \( \{(X_i^{(M_i)}, M_i)\}_{i=1}^n \), provided that:  
\begin{itemize}
    \item the assumed missingness mechanism is Missing At Random,
    \item \( \theta \) and \( \phi \) are distinct, meaning they do not share any components.  
\end{itemize}
It is important to note that \( \MLEn \) differs from the standard MLE in general, and different terminology is often used to emphasize this distinction. Specifically, $\MLEn$ is sometimes referred to as the ignoring MLE, while the regular MLE is known as the full-information MLE. However, under the above conditions, both optimization procedures yield the same estimator. A key advantage of \( \MLEn \) is that it eliminates the need to specify the missingness mechanism. By implicitly treating each \( X_i^{(M_i)} \) as a sample from a marginal of \( \PX \) (while it is in fact a marginal of \( \PXM \)), it leads to a simpler optimization problem while still adhering to the principles of maximum likelihood estimation. 

\vspace{0.2cm}
The asymptotics of $\MLEn$ under MAR have been formally studied by \cite{MLEConsistencyunderMissing}. Perhaps surprisingly, the often cited parameter distinctness condition is not necessary for consistency and asymptotic normality, although the asymptotic variance of the ignoring MLE $\MLEn$ is suboptimal compared to the full-information MLE variance when the parameter distinctness does not hold. The analysis has been extended to the MNAR setting in a recent paper \cite{MLEMisspecificationunderMissing}, as summarized below.



\begin{thm}[Informal variant of Theorems 1 and 2 in \cite{MLEMisspecificationunderMissing}]
\label{MLE_missing_asymp}
    Under appropriate regularity conditions (including the existence of the involved quantities), $\MLEn$ is strongly consistent and asymptotically normal:
    $$
    \MLEn \xrightarrow[\ n \to +\infty]{\Pjoint-\textnormal{a.s.}} \MLEinf ,
    $$
    $$
    \sqrt{n} (\MLEn- \MLEinf) 
    \xrightarrow[\ n \to +\infty]{\mathcal{L}} \mathcal{N}\left(0,A^{-1}V A^{-1}\right) ,
    $$
    where 
    $$
    \MLEinf=\arg\min_{\theta\in\Theta} \E_{M \sim \PM}\left[\KL{\PmargMXM}{\PmargMtheta}\right] ,
    $$
    $$
    A = \E_{(X,M)\sim \Pjoint}\left[\nabla_{\theta,\theta}^2 \log p_{\MLEinf}^{(M)}\big(X^{(M)}\big) \right] ,
    $$
    and
    \vspace{-0.3cm}
    $$
    V = \mathbb{E}_{(X,M)\sim \Pjoint}\left[\nabla_{\theta}\log p_{\MLEinf}^{(M)}\big(X^{(M)}\big) \cdot \nabla_{\theta}\log p_{\MLEinf}^{(M)}\big(X^{(M)}\big)^T \right] .
    $$
\end{thm}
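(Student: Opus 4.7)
The plan is to view $\MLEn$ as a standard M-estimator for the i.i.d.\ pairs $\{(X_i^{(M_i)}, M_i)\}_{i=1}^n$ based on the criterion $m_\theta(X,M) := \log p_\theta^{(M)}(X^{(M)})$, and then run the usual two-step M-estimation argument: consistency via uniform convergence together with well-separation of the population maximizer, followed by asymptotic normality via Taylor expansion of the first-order conditions.

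The first step is to identify $\MLEinf$. Using the tower property,
\begin{align*}
\E_{(X,M)\sim\Pjoint}\bigl[m_\theta(X,M)\bigr]
&= \E_{M\sim\PM}\bigl[\E\bigl[\log p_\theta^{(M)}(X^{(M)})\bigm| M\bigr]\bigr] \\
&= -\E_M\bigl[\KL{\PmargMXM}{\PmargMtheta}\bigr] + C,
\end{align*}
where $C$ does not depend on $\theta$. Maximizing the population criterion is therefore equivalent to minimizing the pattern-averaged conditional KL, yielding the claimed form of $\MLEinf$. Under continuity of $\theta\mapsto m_\theta$, an integrable envelope for $\{m_\theta\}$, and compactness of $\Theta$ (or a local argument around $\MLEinf$), a uniform law of large numbers gives $\sup_\theta |\tfrac{1}{n}\sum_i m_\theta(X_i,M_i) - \E[m_\theta(X,M)]|\to 0$ almost surely, which combined with well-separation of $\MLEinf$ delivers $\MLEn\to\MLEinf$ a.s.

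For asymptotic normality, I would start from the first-order condition $\sum_i \nabla_\theta m_{\MLEn}(X_i,M_i)=0$. A mean-value expansion around $\MLEinf$ gives, for some $\tilde\theta$ between $\MLEn$ and $\MLEinf$,
\begin{align*}
0 = \frac{1}{n}\sum_{i=1}^n \nabla_\theta m_{\MLEinf}(X_i,M_i) + \biggl(\frac{1}{n}\sum_{i=1}^n \nabla^2_\theta m_{\tilde\theta}(X_i,M_i)\biggr)(\MLEn-\MLEinf).
\end{align*}
The population optimality $\E[\nabla_\theta m_{\MLEinf}(X,M)]=0$ (obtained by differentiating under the integral sign the KL identification above) permits a multivariate CLT for the first average, with asymptotic covariance exactly $V$. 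By consistency $\tilde\theta\to\MLEinf$ a.s.\ and a uniform LLN applied to the Hessian, the bracketed term converges to $A$, assumed invertible. Slutsky's lemma then delivers $\sqrt{n}(\MLEn-\MLEinf)\Rightarrow \mathcal{N}(0, A^{-1}VA^{-1})$; the sign from $A$ being negative definite at a maximum squares out in the sandwich formula.

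The main obstacle is checking the regularity conditions uniformly across missingness patterns. One needs integrable envelopes (with respect to $\Pjoint$) dominating $\{\nabla_\theta m_\theta\}$ and $\{\nabla^2_\theta m_\theta\}$ locally near $\MLEinf$, and since $\Pjoint$ is a mixture over patterns weighted by $\PMm$, this amounts to pattern-wise domination of the score and observed information. Under misspecification $\MLEinf$ need not be unique a priori, so identifiability of the pattern-averaged KL minimizer is a non-trivial structural requirement on $\{\Ptheta\}_\theta$ that must be imposed. The discreteness of $M$ is not itself an obstruction, since $\E_M$ reduces to a finite sum over at most $2^d$ patterns and all envelope and Lindeberg-type arguments pass through pattern-by-pattern.
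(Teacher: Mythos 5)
The paper does not actually prove this statement: it is presented as an informal restatement of Theorems 1 and 2 of the cited reference, and the authors explicitly defer both the precise regularity conditions and the proof to that work (noting only the correspondence between the expected-log-density formulation there and the pattern-averaged KL formulation here, which is exactly your tower-property computation). Your sketch is the standard M-estimation sandwich argument --- identification of $\MLEinf$ via the KL decomposition, uniform LLN plus well-separation for consistency, then first-order conditions, mean-value expansion, CLT for the score, and Slutsky --- and it is sound at the level of detail the informal statement permits; it also mirrors the strategy the paper itself uses to prove its own MMD analogues (Theorems 3.2 and 3.3). The only minor technical points worth flagging are that the mean-value expansion must be applied coordinatewise (a single intermediate $\tilde\theta$ does not exist for vector-valued gradients, as the paper's own proof of asymptotic normality implicitly acknowledges), and that $V$ equals the variance of the score only because the \emph{unconditional} expectation $\E_{(X,M)\sim\Pjoint}[\nabla_\theta m_{\MLEinf}]$ vanishes at an interior minimizer --- the conditional-on-$M$ score means do not vanish, a distinction the paper discusses at length after Theorem 3.3.
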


While $\MLEinf$ is not formulated using the KL in \cite{MLEMisspecificationunderMissing} but using the expected density (see their Assumption 5), it is straightforward to see the correspondance between the two quantities. We refer the reader to \cite{MLEMisspecificationunderMissing} for a precise statement of the regularity conditions.
As without missing values, we remark that $\MLEinf$ is a KL minimizer, but trying now to match the observed marginal of the conditional $\PXM$ on average over the pattern $M$. 
Remarkably, when the model is well-specified, that is $\PX=\Pthetastar$, and the missingness mechanism is MAR, then we recover $\MLEinf=\theta^*$. 
Unfortunately, the KL also implies that when the complete-data model is misspecified, the ignoring MLE under missingness suffers from similar shortcomings as in the complete-data case.
Inspired by the form the MLE estimator takes, we introduce in the next section an MMD estimator for the case of missing values.

\section{MMD M-estimator for missing data}\label{Sec_EstimatorIntro} 
\label{sec_asymp}

In this section, we define our MMD estimator in the presence of missing data. We shortly discuss its computation using stochastic gradient-based algorithms, and provide an asymptotic analysis of the estimator. 

\subsection{Definition of the estimator}

We begin by presenting an alternative interpretation of the MMD estimator. Notably,
$$
\MMDn = \arg\min_{\theta\in\Theta} \MMDtwo{\Ptheta}{P_n} = \arg\min_{\theta\in\Theta} \frac{1}{n} \sum_{i=1}^n \MMDtwo{\Ptheta}{\delta_{\{X_i\}}} ,
$$ 
which implies that our MMD estimator can be viewed not only as a minimum distance estimator but also as an M-estimator. While this connection was previously noted in \cite{oates2022minimum,AlquierCopulas} for technical proofs, we leverage it to construct an estimator that is both computationally feasible and meaningful in the presence of missing values. Similarly, \cite{AlquierGerberRegression} introduced an MMD-based M-estimator for regression, primarily for computational efficiency. This perspective reinterprets the estimated distribution within the model as the best approximation (on average) of a delta mass $\delta_{\{X_i\}}$ using a quadratic MMD criterion, rather than as the best approximation of the empirical measure $P_n$. This distinction allows for reasoning at the individual level via each $X_i$, whereas the traditional minimum distance approach operates at the population level through $P_n$.

\vspace{0.2cm}
Inspired by the MLE framework ignoring the missingness mechanism, we then propose to compare marginal quantities at the individual level, aligning the marginal distribution $\PmargMitheta$ corresponding to the pattern $M_i$ with the observed values $X_i^{(M_i)}$ using the MMD distance. This finally leads to the following estimator:
\begin{align*}
\MMDn = \arg\min_{\theta\in\Theta} \frac{1}{n} \sum_{i=1}^n \MMDtwo{\PmargMitheta}{\delta_{\{X_i^{(M_i)}\}}} .
\end{align*}
$\MMDn$ is no longer a minimum distance estimator, but simply an M-estimator. Notice that each MMD distance considered in the sum is defined over a different space $\mathbb{R}^{|M_i|}\times\mathbb{R}^{|M_i|}$, which poses no issues due to the dimension-independence of the kernel $k$. The same remark actually applies to the KL divergence used in the definition of $\MLEinf$.



\subsection{Computation using SGD}

The computation of the MMD estimator in the complete-data setting has already been discussed in \cite{briol2019statistical,BadrAlquierMMD} using stochastic gradient-based algorithms. We explain in this subsection how to adapt the analysis to the missing values setting, assuming that $\Theta \subset \mathbb{R}^p$.

\vspace{0.2cm}
The computation of $\MMDn$ relies on the crucial identity $\mathbb{E}_{Y^{(m)} \sim \Pmargmtheta}[f(Y^{(m)})]=\mathbb{E}_{Y \sim \Ptheta}[f(Y^{(m)})]$ for any pattern $m$ and any measurable function $f:\mathbb{R}^{|m|}\rightarrow\mathbb{R}$, which allows to rewrite the definition of $\MMDn$ using the alternative formulation \eqref{alt_MMD} of the MMD:
\begin{align}
\label{criterion}
\MMDn 
=\arg\min_{\theta\in\Theta} \frac{1}{n} \sum_{i=1}^n \Big\{ \mathbb{E}_{Y,\widetilde{Y} \sim \Ptheta}\left[k\left(Y^{(M_i)}, \widetilde{Y}^{(M_i)}\right)\right] - 2 \mathbb{E}_{Y \sim \Ptheta}\left[k\left(X_i^{(M_i)}, Y^{(M_i)}\right)\right] \Big\} . 
\end{align}
To apply a gradient-based method, the first step is to compute the gradient of this criterion with respect to $\theta$. The formula is provided in the following proposition, which is a straightforward adaptation of Proposition 5.1 in \cite{BadrAlquierMMD}: 
\begin{prop}
    Assume that each $P_\theta$ has a density $p_\theta$ w.r.t.\ Lebesgue's measure. Assume that for any $x$, the map $\theta \mapsto p_\theta(x)$ is differentiable w.r.t.\ $\theta$, and there exists a nonnegative function $g(x, x')$ such that for any $\theta \in \Theta$, $\left| k(x, x') \nabla_\theta [p_\theta(x) p_\theta(x')] \right| \leq g(x, x')$ and $\int \int g(x, x') dx dx' < \infty$.
    Then the gradient of Objective \eqref{criterion} is given by:
    \[
    \mathbb{E}_{Y,\widetilde{Y} \sim \Ptheta}\bigg[ \frac{2}{n} \sum_{i=1}^n \bigg\{ \left[ k\left(Y^{(M_i)}, \widetilde{Y}^{(M_i)}\right) - k\left(X_i^{(M_i)}, Y^{(M_i)}\right) \right] \cdot \nabla_\theta[\log p_\theta(Y)]  \bigg\} \bigg] .
    \]
\end{prop}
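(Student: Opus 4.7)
The plan is to follow the proof of Proposition 5.1 in \cite{BadrAlquierMMD}, adapted to the missing-value setting. The only tools needed are the score-function (log-derivative) identity $\nabla_\theta p_\theta(y)=p_\theta(y)\nabla_\theta\log p_\theta(y)$ and Leibniz's rule to interchange $\nabla_\theta$ with the integrals, which is licensed by the dominating function $g$.

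First, I would rewrite each summand of \eqref{criterion} explicitly against the complete-data density, using the marginalization identity $\mathbb{E}_{Y^{(m)}\sim\Pmargmtheta}[f(Y^{(m)})]=\mathbb{E}_{Y\sim\Ptheta}[f(Y^{(m)})]$ already invoked in the paper. Thus the double-draw term becomes $\iint k(y^{(M_i)},\tilde y^{(M_i)})\,p_\theta(y)p_\theta(\tilde y)\,\mathrm{d}y\,\mathrm{d}\tilde y$ and the single-draw term becomes $\int k(X_i^{(M_i)},y^{(M_i)})\,p_\theta(y)\,\mathrm{d}y$.

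Next, I would differentiate under the integral sign. For the double-draw term, the assumption $|k(x,x')\nabla_\theta[p_\theta(x)p_\theta(x')]|\le g(x,x')$ with $\iint g<\infty$ licenses the interchange, and the product rule together with the log-derivative trick gives
\[
\nabla_\theta[p_\theta(y)p_\theta(\tilde y)] = p_\theta(y)p_\theta(\tilde y)\bigl[\nabla_\theta\log p_\theta(y) + \nabla_\theta\log p_\theta(\tilde y)\bigr].
\]
The symmetry of $k$ and the exchangeability of $(Y,\widetilde Y)$ collapse the two contributions into one, yielding $2\,\mathbb{E}_{Y,\widetilde Y\sim\Ptheta}[k(Y^{(M_i)},\widetilde Y^{(M_i)})\,\nabla_\theta\log p_\theta(Y)]$. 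For the single-draw term, integrating out $\tilde y$ against $p_\theta$ in the domination bound (using $|k|\le 1$) produces a single-variable integrable envelope, and the same log-derivative trick yields $\mathbb{E}_{Y\sim\Ptheta}[k(X_i^{(M_i)},Y^{(M_i)})\,\nabla_\theta\log p_\theta(Y)]$.

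Combining these pieces with the prefactor $-2$ on the single-draw term and averaging over $i$ delivers the stated expression. The step requiring the most care is the justification of Leibniz's rule uniformly in a neighbourhood of each $\theta$: this is where the hypothesis on $g$ is used, and is purely a measure-theoretic verification with no conceptual novelty compared to the complete-data case of \cite{BadrAlquierMMD}. The fact that each $M_i$ varies across $i$ introduces no additional difficulty, since $k$ is dimension-independent and the marginalization identity above handles patterns uniformly.
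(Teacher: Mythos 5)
Your proposal is correct and follows exactly the route the paper intends: the paper gives no standalone proof but declares the result a straightforward adaptation of Proposition~5.1 of \cite{BadrAlquierMMD}, and your argument (marginalization identity to write each summand against the complete-data density, differentiation under the integral justified by $g$, the log-derivative trick, and the symmetry of $k$ to collapse the two product-rule contributions) is precisely that adaptation. The only point deserving a sentence more of care is the integrable envelope for the single-draw term $\int k(X_i^{(M_i)},y^{(M_i)})\,p_\theta(y)\,\mathrm{d}y$, since the stated hypothesis only dominates $k(x,x')\nabla_\theta[p_\theta(x)p_\theta(x')]$ jointly; but this is inherited verbatim from the cited complete-data proposition and does not affect the conclusion.
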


While this gradient may be challenging to compute explicitly and not be available in closed-form, we can remark as \cite{briol2019statistical,BadrAlquierMMD} in the complete-data setting that the gradient is an expectation w.r.t.\ the model. Hence, if we can evaluate $\nabla_\theta [\log p_\theta(x)]$ and if $\{P_\theta\}_\theta$ is a generative model, then we can approximate the gradient using a simple unbiased Monte Carlo estimator: first simulate $(Y_1, \dots, Y_S) \stackrel{i.i.d.}{\sim} P_\theta$, and then use the following estimate for the gradient at $\theta$
\[
\frac{2}{nS} \sum_{i=1}^n \sum_{j=1}^S \left( \frac{1}{S - 1} \sum_{j' \neq j} k\left(Y_j^{(M_i)}, Y_{j'}^{(M_i)}\right) -  k\left(X_i^{(M_i)}, Y_j^{(M_i)}\right) \right) \nabla_\theta \log [p_\theta(Y_j)] .
\]
This unbiased estimate can then be used to perform a stochastic gradient descent (SGD), leading to Algorithm \ref{PSGAAlgorithm}. We incorporated a projection step to consider the case where $\Theta$ is strictly included in $\mathbb{R}^p$, and implicitly assumed that $\Theta$ is closed and convex. $\Pi_\Theta$ denotes the orthogonal projection onto $\Theta$.

\begin{algorithm}
\caption{Projected Stochastic Gradient Algorithm for MMD under missing data}
\label{PSGAAlgorithm}
\begin{algorithmic}
\INPUT{a dataset $(X_1^{(M_1)},\cdots,X_n^{(M_n)})$, a model $\{P_\theta:\theta \in \Theta \subset \mathbb{R}^p\}$, a kernel $k$, a sequence of step sizes $\{\eta_t\}_{t\geq1}$, an integer $S$, a stopping time $T$.}
\State Initialize $\theta^{(0)} \in \Theta$, $t = 0$.
\For{$t = 1,\ldots,T$}
    \State draw $(Y_1,\ldots,Y_S)$ i.i.d from $P_{\theta^{(t-1)}}$ 
\State $\theta^{(t)} =\Pi_\Theta \left\{\theta^{(t-1)} - \frac{2\eta_t}{nS} \sum\limits_{i=1}^n \sum\limits_{j=1}^S \left( \frac{1}{S - 1} \sum_{j' \neq j} k\left(Y_j^{(M_i)}, Y_{j'}^{(M_i)}\right) -  k\left(X_i^{(M_i)}, Y_j^{(M_i)}\right) \right) \nabla_\theta \log [p_{\theta^{(t-1)}}(Y_j)] \right\}$
\EndFor
\end{algorithmic}
\end{algorithm}

This algorithm provides a basic approach to computing our estimator, primarily relying on standard Monte Carlo estimation combined with stochastic gradient descent in the context of generative models. As detailed in Section \ref{Sec_conc}, this algorithm might be improved. In particular, a fully generative approach as in \cite{briol2019statistical} is possible.

\subsection{Asymptotic analysis}

In this section, we analyze the properties of $\MMDn$, without making any assumption on the nature of the missingness mechanism. The conditions we formulate here are broad and may differ slightly from those typically used in the MMD literature. Existing approaches generally fall into two categories: assumptions on the density \citep{key2021composite,AlquierGerberRegression} or on the generator \citep{briol2019statistical}. Some works \cite{oates2022minimum, AlquierCopulas} have proposed more general conditions that encompass both perspectives, though these can be challenging to establish due to the diverse nature of the underlying random objects.

\vspace{0.2cm}
In this work, we adopt a similarly general approach by working directly with the model distribution. This allows us to capture subtle differences between complete and missing data settings. Nevertheless, standard assumptions ensuring the consistency and asymptotic normality of the MMD estimator in the complete data case continue to guarantee the consistency of our estimator, as further detailed below.


    \begin{asm}
    \label{cond_consistency1}
    $\Theta$ is compact and $\MMDinf$ is uniquely defined as 
    $$
    \MMDinf= \arg\min_{\theta\in\Theta} \E_{M\sim\PM}\left[\MMDtwo{\PmargMtheta}{\PmargMXM} \right].
    $$
    \end{asm}

    \begin{asm}
    \label{cond_consistency2} 
    The map $\theta \mapsto \Phi(\Pmargmtheta)$ is continuous on $\Theta$ for each $m\in\{0,1\}^d$ in the support of $\PM$. 
    \end{asm}


These conditions ensuring the consistency of $\MMDn$ towards $\MMDinf$ are almost the same conditions as for the standard MMD estimator, with two notable differences:
\begin{itemize}
\item The limit is no longer the minimizer of the MMD distance to the true whole distribution $\PX$ as in the complete data setting. It now corresponds to the best average (over $M\sim\PM$) approximation of the observed marginal of the conditional $\PXM$ in terms of MMD distance, as for the ignoring MLE, and is equal to $\theta^*$ as soon as the model is well-specified and the missingness mechanism is MCAR. 
This setting of the mechanism is crucial, as under M(N)AR, distribution shifts are possible when moving from one missingness pattern to the next. In fact, as discussed in \cite{ourresult, näf2024goodimputationmarmissingness} and others, MAR can be defined as a restriction on the arbitrary distribution shifts that are allowed under MNAR. Intuitively, if these distribution shifts from one pattern to the next are not too extreme $\MMDinf$ will ``reasonably close'' to the truth.
\item The continuity in $\theta$ of the embedding of each marginal $\Phi( \Pmargmtheta)$ is required here as opposed to the continuity of the embedding of the whole distribution $\Phi(\Ptheta)$ that would be required in the complete data case. Our condition is then slightly stronger. However, both conditions are guaranteed to hold under standard regularity assumptions generally formulated over the density and the kernel, see e.g.\ \cite{key2021composite}.
\end{itemize}

\begin{thm}\label{thm_consistency}
If Conditions \ref{cond_consistency1} and \ref{cond_consistency2} are fulfilled, then $\MMDn$ is strongly consistent, i.e.
$$
\MMDn \xrightarrow[\ n \to +\infty]{\Pjoint-\textnormal{a.s.}} \MMDinf.
$$
\end{thm}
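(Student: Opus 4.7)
The plan is to follow the classical M-estimator recipe: (a) identify a deterministic criterion $L(\theta)$ whose unique minimizer is $\MMDinf$ and to which the empirical objective
$$L_n(\theta) := \frac{1}{n}\sum_{i=1}^n \MMDtwo{\PmargMitheta}{\delta_{\{X_i^{(M_i)}\}}}$$
converges; (b) upgrade pointwise convergence to uniform convergence over $\Theta$; and (c) conclude via argmin continuity. For step (a), I would set $L(\theta) := \E_{(X,M)\sim\Pjoint}\bigl[\MMDtwo{\PmargMtheta}{\delta_{\{X^{(M)}\}}}\bigr]$. Expanding each squared MMD via \eqref{alt_MMD} and applying the tower property with respect to $M$, the cross-term becomes $\E_M\langle \Phi(\PmargMtheta), \Phi(\PmargMXM)\rangle_\H$, and a direct computation yields
$$L(\theta) = \E_{M\sim\PM}\bigl[\MMDtwo{\PmargMtheta}{\PmargMXM}\bigr] + c,$$
where $c := \E_{(X,M)}[k(X^{(M)},X^{(M)})] - \E_M[\|\Phi(\PmargMXM)\|_\H^2]$ does not depend on $\theta$. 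Condition \ref{cond_consistency1} then ensures $L$ is uniquely minimized at $\MMDinf$.

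For step (b), the key tool is equicontinuity. Pointwise, the strong law of large numbers applies for each fixed $\theta$ because $|k|\leq 1$ makes every summand of $L_n$ bounded (by $4$). For equicontinuity, write $\mu_\theta^{(m)} := \Phi(\PmargMtheta)$ and note $\|k(x^{(m)},\cdot)\|_\H \leq 1$; using $a^2 - b^2 = (a-b)(a+b)$ in $\H$ together with Cauchy--Schwarz,
$$\bigl|\MMDtwo{P_{\theta_1}^{(m)}}{\delta_{\{x^{(m)}\}}} - \MMDtwo{P_{\theta_2}^{(m)}}{\delta_{\{x^{(m)}\}}}\bigr| \leq 4\,\|\mu_{\theta_1}^{(m)} - \mu_{\theta_2}^{(m)}\|_\H$$
uniformly in $x^{(m)}$. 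By Condition \ref{cond_consistency2}, for each of the at most finitely many patterns $m$ in the support of $\PM$ the map $\theta \mapsto \mu_\theta^{(m)}$ is continuous on the compact $\Theta$, hence uniformly continuous. Therefore $\theta \mapsto L_n(\theta)$ inherits a deterministic modulus of continuity, independent of both $n$ and the data. Combining this equicontinuity with pointwise a.s.\ convergence on a countable dense subset of $\Theta$ via a standard $\varepsilon/3$-type covering argument delivers $\sup_{\theta\in\Theta}|L_n(\theta) - L(\theta)| \to 0$ almost surely.

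For step (c), the same Lipschitz bound passes to the limit, making $L$ continuous on the compact $\Theta$. Uniqueness of $\MMDinf$ then gives $\delta_U := \inf_{\theta \notin U} L(\theta) - L(\MMDinf) > 0$ for any open neighborhood $U$ of $\MMDinf$. On the probability-one event where the uniform gap is eventually below $\delta_U/3$, the chain
$$L(\MMDn) \leq L_n(\MMDn) + \tfrac{\delta_U}{3} \leq L_n(\MMDinf) + \tfrac{\delta_U}{3} \leq L(\MMDinf) + \tfrac{2\delta_U}{3}$$
forces $\MMDn \in U$, yielding $\MMDn \to \MMDinf$ almost surely. The main obstacle is really the uniform convergence in step (b); fortunately, the boundedness $|k|\leq 1$ together with the finite number of missingness patterns makes the equicontinuity argument elementary, avoiding the envelope-function and empirical-process machinery that would otherwise be required.
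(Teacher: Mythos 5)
Your proof is correct, and it reaches the same two milestones as the paper---a uniform law of large numbers for the empirical criterion, followed by the standard argmin-consistency step---but it establishes the uniform convergence by a genuinely different mechanism. The paper's Lemma B.1 exploits the fact that there are only finitely many patterns: it rewrites $\widehat{L}_n-L$ as a finite sum over $m$ of products in which the $\theta$-dependence sits entirely in factors bounded by $1$ (namely $\|\Phi(\Pmargmtheta)\|_{\mathcal H}\le 1$), multiplied by purely data-dependent quantities (empirical pattern frequencies minus $\PMm$, and Hilbert-space averages of $\Phi(X_i^{(m)})$ minus their means) that converge to zero almost surely and do not involve $\theta$ at all. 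Cauchy--Schwarz then gives the uniform convergence \emph{without any continuity or compactness assumption on the model}; Condition \ref{cond_consistency2} is used only afterwards, to make $L$ continuous so that Newey--McFadden's Theorem 2.1 applies. You instead prove pointwise a.s.\ convergence by the SLLN and upgrade it to uniform convergence via the deterministic equicontinuity furnished by the bound $|\MMDm{2}{P_{\theta_1}^{(m)}}{\delta_{\{x^{(m)}\}}}-\MMDm{2}{P_{\theta_2}^{(m)}}{\delta_{\{x^{(m)}\}}}|\le 4\|\Phi(P_{\theta_1}^{(m)})-\Phi(P_{\theta_2}^{(m)})\|_{\mathcal H}$, which is valid and whose constant is right, and you then spell out the argmin step elementarily rather than citing the textbook theorem. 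The trade-off: your route leans on Condition \ref{cond_consistency2} (and compactness of $\Theta$, hence uniform continuity of the finitely many embeddings) already at the ULLN stage, whereas the paper's route isolates the stochastic error from $\theta$ entirely and thereby shows the ULLN is assumption-free---a fact the authors explicitly highlight, and which is also what lets the same decomposition be recycled for the second-order ULLN (Lemma B.2) needed for asymptotic normality. Since both conditions are in force for the theorem, either argument suffices; yours is self-contained, the paper's is structurally sharper.
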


We need a few more assumptions to establish the asymptotic normality of our estimator:

    \begin{asm}
    \label{cond_normality_3} 
    $\MMDinf$ is an interior point of $\Theta$.
    \end{asm}

    \begin{asm}
    \label{cond_normality_4} 
    There exists an open neighborhood $O \subset \Theta$ of $\MMDinf$ such that the maps $\theta \mapsto \Phi( \Pmargmtheta)$ are twice (Fr\'echet) continuously differentiable on $O$ for each $m\in\{0,1\}^d$ in the support of $\PM$, with interchangeability of integration and derivation.
    \end{asm}
   
    \begin{asm}
    \label{cond_normality_5} 
    There exists a compact set $K \subset O$ whose interior contains $\theta^*$ and such that for any pair $(j,k)$, and any $m \in \{0,1\}^d$ in the support of $\PM$,
    \[
    \sup_{\theta \in K} \left\| \frac{\partial^2 \Phi( \Pmargmtheta)}{\partial \theta_j \partial \theta_k} \right\|_{\mathcal{H}} < +\infty .
    \]
    \end{asm}

    \begin{asm}
    \label{cond_normality_6} 
    The matrix $H = \E_{M\sim\PM}\left[\nabla_{\theta,\theta}^2 \MMDm{2}{P_{\MMDinf}^{(M)}}{\PmargMXM}\right]$ is nonsingular.
    \end{asm}

    \begin{asm}
    \label{cond_normality_7} 
    The two matrices $\Sigma_1=\E_{M\sim\PM}\left[\mathbb{V}_{X\sim\PXM}\left[\nabla_{\theta} \MMDm{2}{P_{\MMDinf}^{(M)}}{\delta_{\{X^{(M)}\}}}\right]\right]$ and $\Sigma_2=\mathbb{V}_{M\sim\PM}\left[ \nabla_{\theta} \MMDm{2}{P_{\MMDinf}^{(M)}}{\PmargMXM} \right]$ are nonsingular. 
    \end{asm}

Once again, we recover almost exactly the main conditions as for the standard MMD estimators with four notable differences:
\begin{itemize}
\item The continuous differentiability in $\theta$ of the embedding of the whole distribution $\Phi(\Ptheta)$ is replaced by the embedding of each marginal $\Phi( \Pmargmtheta)$, as for the continuity required in Assumption \ref{cond_consistency2}. Note that interchangeability is usually implicitly ensured through domination conditions imposed on the densities.
\item The boundedness assumption of the second order derivative holds once again for each marginal rather than for the whole distribution.
\item The Hessian matrix $B=\nabla_{\theta,\theta}^2 \mathbb{D}^2\left(P_{\theta^*}, \PX \right)$ involving the distance to $\PX$ that was used in the definition of the asymptotic variance of $\theta_n$ without missing data is replaced by the expected (over $M\sim\PM$) value of the Hessian matrix $\nabla_{\theta,\theta}^2 \MMDm{2}{P_{\MMDinf}^{(M)}}{\PmargMXM}$ involving the distance to the observed marginal of the conditional $\PXM$.
\item The new version of the variance matrix $\Sigma=\mathbb{V}_{X\sim\PX}\left[\nabla_{\theta} \MMDtwo{P_{\theta^*}}{\delta_{\{X\}}}\right]$ that was used for $\theta_n$ without missing data now involves two terms: $\Sigma_1$ is the expected (over $M\sim\PM$) value of the conditional variance $\mathbb{V}_{X\sim\PXM}\left[\nabla_{\theta} \MMDm{2}{P_{\MMDinf}^{(M)}}{\delta_{\{X^{(M)}\}}}\right]$, while $\Sigma_2$ is a term that solely takes into account the variability in the missingness mechanism. 
\end{itemize}
We now state the weak convergence of $\sqrt{n}(\MLEn-\MMDinf)$, which is a remarkable result that holds for any missingness mechanism (even for MNAR data) and that provides a $n^{-1/2}$ rate of convergence:

\begin{thm}\label{thm_normality}
If Conditions  \ref{cond_consistency1} to \ref{cond_normality_7} are fulfilled, then $\sqrt{n}(\MMDn - \MMDinf)$ is asymptotically normal, with limit:
\[
\sqrt{n} (\MLEn - \MMDinf) 
\xrightarrow[\ n \to +\infty]{\mathcal{L}} \mathcal{N}\left(0,H^{-1}\Sigma_1 H^{-1} + H^{-1}\Sigma_2 H^{-1}\right) .
\]
\end{thm}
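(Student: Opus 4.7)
The plan is to treat $\MMDn$ as a classical M-estimator with per-observation loss
$\ell_\theta(x,m) := \MMDtwo{\Pmargmtheta}{\delta_{\{x\}}}$
evaluated at $(X_i^{(M_i)}, M_i)$, and to follow the standard asymptotic-normality route: Taylor-expand the first-order condition around $\MMDinf$, show that the Hessian converges, apply the multivariate central limit theorem to the gradient, and combine via Slutsky's lemma. Writing $M_n(\theta) = \frac{1}{n}\sum_{i=1}^n \ell_\theta(X_i^{(M_i)}, M_i)$, Theorem \ref{thm_consistency} yields $\MMDn \to \MMDinf$ a.s., and Assumption \ref{cond_normality_3} places $\MMDinf$ in the interior of $\Theta$, so eventually $\nabla M_n(\MMDn) = 0$ and
\[
0 \;=\; \nabla M_n(\MMDinf) + \bar{H}_n\,(\MMDn - \MMDinf),
\]
where $\bar{H}_n = \int_0^1 \nabla^2 M_n\bigl(\MMDinf + t(\MMDn - \MMDinf)\bigr)\,dt$.

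For the Hessian term, Assumption \ref{cond_normality_4} gives twice continuously Fréchet differentiable maps $\theta \mapsto \Phi(\Pmargmtheta)$ with interchange of derivative and expectation, so by the chain rule each $\ell_\theta$ is $C^2$ in $\theta$, with derivatives expressible as $\mathcal{H}$-inner products involving $\Phi(\Pmargmtheta) - k(X^{(M_i)}, \cdot)$ and the first two $\theta$-derivatives of $\Phi(\Pmargmtheta)$. Assumption \ref{cond_normality_5} together with the boundedness of $k$ (yielding $\|\Phi(Q)\|_{\mathcal{H}} \leq 1$ for any probability $Q$ and $\|k(X^{(m)},\cdot)\|_{\mathcal{H}} \leq 1$) bounds $\|\nabla^2 \ell_\theta(x,m)\|$ uniformly for $\theta$ in a compact neighborhood of $\MMDinf$. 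A uniform law of large numbers combined with the a.s.\ consistency of $\MMDn$ then implies $\bar{H}_n \to H$ almost surely, with $H^{-1}$ existing by Assumption \ref{cond_normality_6}.

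For the gradient term, the i.i.d.\ vectors $\nabla \ell_{\MMDinf}(X_i^{(M_i)}, M_i)$ have zero mean (because $\MMDinf$ is an interior minimizer of $\theta \mapsto \E[\ell_\theta]$) and finite second moment (by the same boundedness arguments), so the multivariate CLT gives
\[
\sqrt{n}\,\nabla M_n(\MMDinf) \;\xrightarrow{\mathcal{L}}\; \mathcal{N}(0,\Sigma), \qquad \Sigma \;:=\; \mathbb{V}\bigl[\nabla \ell_{\MMDinf}(X^{(M)}, M)\bigr].
\]
The decomposition $\Sigma = \Sigma_1 + \Sigma_2$ follows from the law of total variance conditioned on $M$, using the identity
\[
\E_{X\mid M=m}\!\bigl[\MMDtwo{\Pmargmtheta}{\delta_{\{X^{(m)}\}}}\bigr] \;=\; \MMDtwo{\Pmargmtheta}{\PmargmXMm} + c_m,
\]
where $c_m$ is independent of $\theta$: the conditional mean of $\nabla \ell_\theta$ given $M=m$ is therefore $\nabla_\theta \MMDtwo{\Pmargmtheta}{\PmargmXMm}$, whose variance over $M$ reproduces $\Sigma_2$, while the conditional variance reproduces $\Sigma_1$. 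Slutsky's lemma then gives
\[
\sqrt{n}(\MMDn - \MMDinf) \;=\; -\bar{H}_n^{-1}\sqrt{n}\,\nabla M_n(\MMDinf) \;\xrightarrow{\mathcal{L}}\; \mathcal{N}\bigl(0,\, H^{-1}(\Sigma_1 + \Sigma_2)H^{-1}\bigr).
\]

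The main technical obstacle will be transferring the abstract Hilbert-space differentiability of $\theta \mapsto \Phi(\Pmargmtheta)$ to pointwise $C^2$-regularity of the real-valued loss $\ell_\theta$, in particular justifying the interchange $\nabla_\theta \E_X[\ell_\theta(X^{(m)}, m)] = \E_X[\nabla_\theta \ell_\theta(X^{(m)}, m)]$ that underlies both the mean-zero property of the gradient and the variance decomposition, and deriving the uniform LLN for $\nabla^2 M_n$ from the pointwise bound in Assumption \ref{cond_normality_5}; everything else is routine M-estimation machinery.
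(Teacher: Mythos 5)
Your proposal is correct and follows essentially the same route as the paper: a first-order Taylor expansion of $\nabla \widehat{L}_n$ around $\MMDinf$ (the paper uses a componentwise mean-value form where you use the integral remainder, an immaterial difference), almost-sure convergence of the Hessian to $H$ via a uniform law of large numbers for the second derivatives built from Assumptions \ref{cond_normality_4}--\ref{cond_normality_5} and the boundedness of the kernel, a CLT for the centred gradient with the variance split $\Sigma_1+\Sigma_2$ obtained from the law of total variance together with the observation that the conditional mean of the gradient given $M=m$ equals $\nabla_\theta\MMDm{2}{\Pmargmtheta}{\PmargmXMm}$, and Slutsky's lemma to conclude. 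The technical points you flag at the end (interchange of differentiation and expectation, and the uniform LLN for $\nabla^2 M_n$) are exactly the ones the paper handles via the interchangeability clause of Assumption \ref{cond_normality_4} and its Lemma on uniform convergence of second-order derivatives, so nothing is missing.
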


Interestingly, the decomposition of the central asymptotic variance term \(\Sigma_1 + \Sigma_2\) for our estimator \(\MMDn\) is not explicit in the case of the MLE under missing data (see Theorem \ref{MLE_missing_asymp}). This arises because, in the absence of missing data, the central term \(V\) in the asymptotic variance of the MLE can be alternatively formulated as the variance of the score:
\[
\mathbb{E}_{X\sim \PX}\left[\nabla_{\theta}\log p_{\MLEinf}(X) \cdot \nabla_{\theta}\log p_{\MLEinf}(X)^T \right] = \mathbb{V}_{X\sim \PX}\left[\nabla_{\theta}\log p_{\MLEinf}(X)\right].
\]
However, this identity no longer holds in the presence of missing data, since the expectation \(\mathbb{E}_{X\sim \PX}[\nabla_{\theta}\log p_{\MLEinf}(X)]\) becomes \(\mathbb{E}_{X\sim \PXM}[\nabla_{\theta}\log p_{\MLEinf}^{(M)}(X^{(M)})]\) which is not zero. As a result, expressing \(V\) as an expectation rather than a variance under missing data obscures the additional term. Instead, \(V\) can be decomposed as the average (over \(M\sim\PM\)) conditional variance of the score plus an explicit additional term:
\begin{align*}
    V &= \mathbb{E}_{(X,M)\sim \Pjoint}\left[\nabla_{\theta}\log p_{\MLEinf}^{(M)}(X^{(M)}) \cdot \nabla_{\theta}\log p_{\MLEinf}^{(M)}(X^{(M)})^T \right] \\
    &= \mathbb{E}_{M\sim \PM}\left[\mathbb{V}_{X\sim \PXM}\left[\nabla_{\theta}\log p_{\MLEinf}^{(M)}(X^{(M)})\right]\right] \\
    &\quad + \mathbb{E}_{M\sim \PM}\left[\mathbb{E}_{X\sim \PXM}\left[\nabla_{\theta}\log p_{\MLEinf}^{(M)}(X^{(M)}) \right] \cdot \mathbb{E}_{X\sim \PXM}\left[\nabla_{\theta}\log p_{\MLEinf}^{(M)}(X^{(M)}) \right]^T\right].
\end{align*}

Having established these results we analyze the robustness of our new estimator.

\section{Robustness}
\label{Sec_robust}


In the previous section, we analyzed the asymptotic behavior of our estimator $\MMDn$, and determined its limit value. While it is evident that under a well-specified model $\PX=\Pthetastar$ and a MCAR mechanism, $\MMDn$ consistenly converges to the true parameter $\theta^*$, the situation is less clear for an M(N)AR mechanism. Specifically, it remains uncertain whether the limit $\MMDinf$ remains close to $\theta^*$ in the presence of a non-ignorable missingness mechanism, that is here an M(N)AR mechanism. In this section, we examine the implications of an M(N)AR mechanism on the behavior of $\MMDn$ and even briefly consider its impact on the maximum likelihood estimator $\MLEn$. Additionally, we address the scenario of a misspecified complete-data model, where $\PX\notin\{P_\theta:\theta\in\Theta\}$, with a specific focus on contamination models accounting for the presence of outliers in the data.

\subsection{Robustness of the MLE to MNAR}

The first question to consider is whether the MLE is inherently sensitive to contamination introduced by the missingness mechanism. Perhaps unexpectedly, the following result demonstrates that, for one-dimensional data and well-specified complete-data models, the MLE exhibits robustness under an MNAR mechanism.

\begin{thm}[Robustness of the MLE limit to an MNAR missingness mechanism when $d=1$]
\label{MLE_Robust}
    Assume that $d=1$ and that the model is well-specified $\PX=\Pthetastar$. Then:
    $$    \textnormal{TV}^2\left(P_{\theta^\textnormal{MLE}_\infty},\Pthetastar\right) \leq 2 \cdot \frac{\mathbb{V}_{X\sim\PX}[\pi(X)]}{\pi^2} 
    $$
    where $\mathbb{V}_{X}[\pi(X)]$ is the variance (w.r.t.\ $X\sim\PX$) of the missingness mechanism $\pi(X)=\mathbb{P}[M=0|X]$, and $\pi=\mathbb{P}[M=0]=\E_{X\sim\PX}[\pi(X)]$.
\end{thm}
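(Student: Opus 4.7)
The plan is to exploit the fact that when $d=1$, only the pattern $m=0$ contributes to the MLE objective, since the empty pattern $m=\{1\}^d=1$ (fully missing) is excluded by convention. Hence by Theorem \ref{MLE_missing_asymp}, $\MLEinf$ is the minimizer over $\theta\in\Theta$ of $\KL{\mathbb{P}_{X\mid M=0}}{P_\theta}$. Under the well-specified assumption $\PX = \Pthetastar$, the tilted conditional $\mathbb{P}_{X\mid M=0}$ admits, by Bayes' rule, the explicit density $p_{\theta^*}(x)\pi(x)/\pi$.

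The central idea would then be to insert $\mathbb{P}_{X\mid M=0}$ as a pivot between $P_{\MLEinf}$ and $\Pthetastar$ via the triangle inequality for TV, and square using $(a+b)^2 \leq 2a^2 + 2b^2$ to produce the factor of $2$ appearing in the claim:
\[
\textnormal{TV}^2(P_{\MLEinf},\Pthetastar) \leq 2\,\textnormal{TV}^2(P_{\MLEinf},\mathbb{P}_{X\mid M=0}) + 2\,\textnormal{TV}^2(\mathbb{P}_{X\mid M=0},\Pthetastar).
\]
For each of the two $\textnormal{TV}^2$ terms I apply Pinsker's inequality $\textnormal{TV}^2 \leq \tfrac{1}{2}\,\textnormal{KL}$. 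For the first term, the defining optimality of $\MLEinf$ yields $\KL{\mathbb{P}_{X\mid M=0}}{P_{\MLEinf}} \leq \KL{\mathbb{P}_{X\mid M=0}}{\Pthetastar}$, so both terms are controlled by the single KL divergence $\KL{\mathbb{P}_{X\mid M=0}}{\Pthetastar}$, yielding
\[
\textnormal{TV}^2(P_{\MLEinf},\Pthetastar) \leq 2\,\KL{\mathbb{P}_{X\mid M=0}}{\Pthetastar}.
\]

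The final step is the standard bound $\textnormal{KL} \leq \chi^2$ combined with an explicit computation from the density formula:
\[
\chi^2\bigl(\mathbb{P}_{X\mid M=0}\,\big\|\,\Pthetastar\bigr) = \mathbb{E}_{X\sim \Pthetastar}\!\left[\left(\frac{\pi(X)}{\pi} - 1\right)^{\!2}\right] = \frac{\mathbb{V}_{X\sim\PX}[\pi(X)]}{\pi^2},
\]
which combined with the previous display gives exactly $\textnormal{TV}^2(P_{\MLEinf},\Pthetastar) \leq 2\,\mathbb{V}[\pi(X)]/\pi^2$.

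I do not anticipate a serious technical obstacle: the reduction to a single pattern relies entirely on the $d=1$ setting together with the convention that empty patterns are excluded, and Pinsker together with the $\textnormal{KL}$-to-$\chi^2$ bound are classical. The only delicate point is that one is implicitly assuming the KL minimizer $\MLEinf$ exists and satisfies the optimality inequality, which is already encoded in the statement by the appearance of $P_{\MLEinf}$. When $\chi^2 = +\infty$ (equivalently, when $\pi(X)/\pi$ is not square-integrable under $\Pthetastar$) the bound is vacuous and there is nothing to prove; otherwise every step above is finite and the chain of inequalities goes through cleanly.
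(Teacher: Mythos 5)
Your proposal is correct and follows essentially the same route as the paper: pivot through $\mathbb{P}_{X\mid M=0}$ via the triangle inequality for TV, apply Pinsker to both terms, use the KL-optimality of $\MLEinf$ to reduce to the single divergence $\KL{\mathbb{P}_{X\mid M=0}}{\Pthetastar}$, and bound that by the relative variance $\mathbb{V}_{X\sim\PX}[\pi(X)]/\pi^2$. The only cosmetic differences are that you square the triangle inequality via $(a+b)^2\leq 2a^2+2b^2$ where the paper first equalizes the two terms and then squares, and you invoke $\textnormal{KL}\leq\chi^2$ where the paper proves the same inequality inline via $\log(1+u)\leq u$; both yield the identical constant.
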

This theorem establishes that the discrepancy between $\MLEinf$ and $\theta^*$, quantified by the total variation (TV) distance between $\Pthetastar$ and $P_{\theta^\textnormal{MLE}_\infty}$, is upper bounded by the relative variance of the missingness mechanism with respect to $X\sim\PX$. This relative variance serves as a measure of the level of misspecification to the M(C)AR assumption: a large variance indicates that the missingness mechanism deviates significantly from being M(C)AR, whereas a small variance implies it is nearly M(C)AR. In the extreme case where the missingness mechanism is exactly M(C)AR - meaning that ignoring it is valid - we recover the equality $\MLEinf=\theta^*$ provided the model is identifiable, that is $\theta\neq\theta' \implies P_{\theta} \neq P_{\theta'}$.

\vspace{0.2cm}
However, this result has two key limitations: (i) it is unclear whether it extends beyond the one-dimensional case, where MCAR and MAR are distinct concepts; and (ii) the result fails entirely when the assumption $\PX=\Pthetastar$ no longer holds. In particular, no guarantees can be derived in scenarios involving data contamination, where the true data distribution $\PX$ deviates from the target distribution $\Pthetastar$ and instead represents a contaminated version of it - this issue arises even in the absence of missing values.

\subsection{Robustness of the MMD to M(N)AR and misspecification}


Denote in the following the missingness mechanism $\pi_m(X)=\PMmX$ 
and the marginal probability $\pi_m=\PMm=\E_{X\sim\PX}\left[\pi_m(X)\right]$.

\vspace{0.2cm}

\begin{thm}[Robustness of the MMD limit to M(N)AR and model misspecification]
\label{MMD_Robust}
   Assuming that $\PX=\Pthetastar$, we have:
    \begin{equation}     
    \label{MMD_Robust_1} 
       \E_{M \sim \PM} \left[ \MMDtwo{P_{\MMDinf}^{(M)}}{P_{\theta^*}^{(M)}}\right] \leq 4 \cdot \E_{M \sim \PM}\left[ \frac{\mathbb{V}_{X\sim \PX} \left[ \pi_M(X) \right]}{\pi_M^2} \right]
    \end{equation}   
    Furthermore, if $\PX\notin\{P_\theta:\theta\in\Theta\}$:
    \begin{equation}
    \label{MMD_Robust_2} 
       \E_{M \sim \PM}\left[\MMDtwo{P_{\MMDinf}^{(M)}}{\PmargMX}\right] \leq 2\cdot \inf_{\theta\in\Theta}\E_{M \sim \PM}\left[\MMDtwo{P_{\theta}^{(M)}}{\PmargMX}\right] + 4 \cdot \E_{M \sim \PM}\left[ \frac{\mathbb{V}_{X\sim \PX} \left[ \pi_M(X) \right]}{\pi_M^2} \right] .   
    \end{equation}
\end{thm}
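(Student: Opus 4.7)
The plan is to combine a Minkowski-type triangle inequality for the MMD pseudo-metric, viewed in the Bochner space $L^2(\PM;\mathcal{H})$, with the defining optimality of $\MMDinf$ (Assumption~\ref{cond_consistency1}), and a kernel-mean-embedding calculation bounding the ``missingness bias'' between $\PmargMX$ and $\PmargMXM$ induced by a non-MCAR mechanism.

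The first step is an auxiliary lemma: for every $m$ in the support of $\PM$,
\[
\MMDtwo{\PmargmXMm}{\PmargmX} \leq \frac{\mathbb{V}_{X \sim \PX}[\pi_m(X)]}{\pi_m^2}.
\]
To establish it, I would use Bayes' rule $\pXMm(x) = \pi_m(x)\pX(x)/\pi_m$ together with the definition of the mean embedding to write
\[
\Phi(\PmargmXMm) - \Phi(\PmargmX) = \frac{1}{\pi_m}\,\mathbb{E}_{X \sim \PX}\!\left[k(X^{(m)},\cdot)\,(\pi_m(X) - \pi_m)\right],
\]
and then control the squared RKHS norm by the vector-valued Jensen inequality $\|\mathbb{E}[V]\|_{\mathcal{H}}^2 \leq \mathbb{E}[\|V\|_{\mathcal{H}}^2]$ combined with the bound $\|k(x^{(m)},\cdot)\|_{\mathcal{H}}^2 = k(x^{(m)},x^{(m)}) \leq 1$.

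For \eqref{MMD_Robust_1}, I would exploit $\PX = \Pthetastar$ (which gives $P_{\theta^*}^{(M)} = \PmargMX$) and apply the triangle inequality in $L^2(\PM;\mathcal{H})$ through the intermediate point $\PmargMXM$:
\[
\sqrt{\mathbb{E}_M\!\left[\MMDtwo{P_{\MMDinf}^{(M)}}{P_{\theta^*}^{(M)}}\right]} \leq \sqrt{\mathbb{E}_M\!\left[\MMDtwo{P_{\MMDinf}^{(M)}}{\PmargMXM}\right]} + \sqrt{\mathbb{E}_M\!\left[\MMDtwo{\PmargMXM}{P_{\theta^*}^{(M)}}\right]}.
\]
Since $\MMDinf$ is the minimizer of $\theta \mapsto \mathbb{E}_M[\MMDtwo{\PmargMtheta}{\PmargMXM}]$, plugging $\theta^*$ in as a competitor bounds the first summand by the second; using $P_{\theta^*}^{(M)} = \PmargMX$, both terms coincide with $\sqrt{\mathbb{E}_M[\MMDtwo{\PmargMX}{\PmargMXM}]}$. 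Squaring yields the factor $4$, and applying the auxiliary lemma delivers \eqref{MMD_Robust_1}.

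For \eqref{MMD_Robust_2}, the competitor in the optimality step is an arbitrary $\theta \in \Theta$, and a second triangle inequality converts $\mathbb{E}_M[\MMDtwo{\PmargMtheta}{\PmargMXM}]$ into $\mathbb{E}_M[\MMDtwo{\PmargMtheta}{\PmargMX}]$ plus a missingness-bias term. Chaining the two triangle inequalities around the optimality step produces
\[
\sqrt{\mathbb{E}_M\!\left[\MMDtwo{P_{\MMDinf}^{(M)}}{\PmargMX}\right]} \leq \sqrt{\mathbb{E}_M\!\left[\MMDtwo{\PmargMtheta}{\PmargMX}\right]} + 2\sqrt{\mathbb{E}_M\!\left[\MMDtwo{\PmargMX}{\PmargMXM}\right]}.
\]
Taking the infimum over $\theta$, squaring, and plugging the auxiliary lemma into the second summand yields the two terms of \eqref{MMD_Robust_2}. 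The main technical obstacle is the squaring step: to obtain the stated constants $(2,4)$, I would either apply Young's inequality $(a+b)^2 \leq (1+\tau)a^2 + (1+1/\tau)b^2$ with a carefully tuned $\tau$, or proceed directly from the quadratic identity $\|u - w\|^2 = \|u-v\|^2 + 2\langle u-v,\, v-w\rangle + \|v-w\|^2$ applied with $u = \Phi(P_{\MMDinf}^{(M)})$, $v = \Phi(\PmargMXM)$, $w = \Phi(\PmargMX)$ alongside the analogous identity for an arbitrary competitor $\theta$, so that optimality cancels the $\|u-v\|^2$ contributions and a single Cauchy--Schwarz on the leftover cross-term delivers the tight constants.
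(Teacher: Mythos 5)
Your overall strategy coincides with the paper's: both rest on the same auxiliary ``missingness-bias'' bound $\MMDtwo{\PmargmXMm}{\PmargmX}\leq \mathbb{V}_{X\sim\PX}[\pi_m(X)]/\pi_m^2$ (your derivation via Bayes' rule and vector-valued Jensen is a correct, essentially equivalent variant of the paper's, which pulls the RKHS norm inside the expectation and then applies Cauchy--Schwarz), followed by a triangle inequality through the intermediate point $\PmargMXM$ and the optimality of $\MMDinf$ from Assumption~\ref{cond_consistency1}. For Inequality~\eqref{MMD_Robust_1} your route is fully correct: Minkowski in $L^2(\PM;\mathcal{H})$ plus optimality at the competitor $\theta^*$ gives $2\sqrt{\E_M[\MMDtwo{\PmargMX}{\PmargMXM}]}$ before squaring, hence the constant $4$; the paper instead uses the pointwise bound $\MMDtwo{P}{R}\leq 2\,\MMDtwo{P}{Q}+2\,\MMDtwo{Q}{R}$ inside the expectation and lands on the same constant.

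For Inequality~\eqref{MMD_Robust_2} there is a genuine gap, and you have correctly located it but not closed it. Your chain gives $\sqrt{\E_M[\MMDtwo{P_{\MMDinf}^{(M)}}{\PmargMX}]}\leq a+2b$ with $a^2=\E_M[\MMDtwo{P_{\theta}^{(M)}}{\PmargMX}]$ and $b^2=\E_M[\MMDtwo{\PmargMX}{\PmargMXM}]$. Neither of your proposed repairs yields $(2,4)$ from this: Young's inequality gives $(a+2b)^2\leq(1+\tau)a^2+4(1+1/\tau)b^2$, and no $\tau>0$ makes both coefficients simultaneously $\leq 2$ and $\leq 4$ (with $\tau=1$ you get $(2,8)$, which is the honest constant your argument delivers); and the polarization-identity route, once the cross terms are bounded by Cauchy--Schwarz and the competitor is moved from $\PmargMXM$ to $\PmargMX$, gives at best $(4,6)$. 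For what it is worth, the paper's own proof reaches $(2,4)$ through the step $2\,\E_M[\MMDtwo{P_{\theta}^{(M)}}{\PmargMXM}]\leq 2\,\E_M[\MMDtwo{P_{\theta}^{(M)}}{\PmargMX}]+2\,\E_M[\MMDtwo{\PmargMX}{\PmargMXM}]$, i.e.\ a subadditivity of the \emph{squared} MMD, which does not follow from the triangle inequality alone (it fails when the angle at $\Phi(\PmargMX)$ is obtuse) and would itself require additional justification. So your instinct that the $(2,4)$ constants are the hard part is well founded; as written, your argument rigorously establishes the statement only with the missingness term weakened from $4$ to $8$.
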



This theorem leverages the expected (w.r.t.\ the missing pattern $M\sim\PM$) squared MMD distance between the observed marginals $P_{\MMDinf}^{(M)}$ and $P_{\theta^*}^{(M)}$ as a meaningful metric to assess some notion of distance between $\MMDinf$ and $\theta^*$. While this is not a formal distance between $P_{\MMDinf}$ and $\Pthetastar$ in the strict mathematical sense, it provides a measure of divergence between them, which is particularly relevant in settings where the model is identifiable, that is $\theta\neq\theta' \implies P_{\theta} \neq P_{\theta'}$. Under this condition, a zero expected squared MMD discrepancy implies perfect recovery, i.e., $\MMDinf = \theta^*$, as soon as the kernel is characteristic and the probability of complete data is not zero (i.e. $\Prob(M=0) > 0$). Moreover, in the special case of one-dimensional data, the expected squared MMD distance reduces to the standard MMD distance $\MMDtwo{P_{\theta}}{P_{\theta'}}$ weighted by the probability of observing the datapoint. Finally, it is worth noting that, under standard regularity conditions (see, e.g.\ Theorem 4 in \cite{AlquierGerberRegression}), the expected squared MMD distance between distributions can be related to the Euclidean distance between their parameters. This further supports the use of this metric as a meaningful proxy for comparing $\MMDinf$ and $\theta^*$. For instance, the expected squared MMD distance between two normal distributions $\mathcal{N}(\theta,I_d)$ and $\mathcal{N}(\theta',I_d)$ is equivalent (up to a factor) to the weighted Euclidean distance $\E_M[\|\theta^{(M)}-\theta'^{(M)}\|^2_2]$ when $\theta$ and $\theta'$ are close enough. 

\vspace{0.2cm}
Inequality \eqref{MMD_Robust_1} in Theorem \ref{MMD_Robust} is noteworthy as it guarantees robustness to any M(N)AR mechanism, while when the missingness mechanism is MCAR - ensuring the validity of ignorability - the bound becomes zero, and we recover the consistency of $\MMDn$ towards $\MMDinf = \theta^*$. This result generalizes the bound established in the one-dimensional case for the MLE, offering a quantification of the deviation from MCAR based on the variability of the missingness mechanism with respect to the observed data. 
Finally, Inequality \eqref{MMD_Robust_2} is particularly remarkable as it guarantees robustness to both sources of misspecification and characterizes them separately: the first term in the bound quantifies robustness to complete-data misspecification (which vanishes when $\PX=\Pthetastar$), while the second term captures robustness to deviation-to-MCAR (which vanishes when the missingness mechanism is MCAR). Note that this inequality holds without any assumption on the model. It is also worth noting that the variance appearing in the bounds could be tightened further, replacing it with an absolute mean deviation over the observed variables $X^{(m)}$ only. However, we present the bound using the variance for the sake of clarity and simplicity.

\vspace{0.2cm}
We finally derive in the one-dimensional case. where the expected squared MMD distance can be written as a proper distance, a non-asymptotic inequality on the MMD estimator $\MMDn$ which holds in expectation over the sample $\mathcal{S}=\{(X_i,M_i)\}_{1\leq i\leq n}$.

\begin{thm}[Robustness of the MMD estimator to MNAR and misspecification when $d=1$]
\label{thm_finite_main}
    With the notations $\pi(X)=\mathbb{P}[M=0|X]$ and $\pi=\mathbb{P}[M=0]=\E_{X\sim\PX}[\pi(X)]$, we have:
    $$
        \E_{\mathcal{S}}\left[ \MMD{P_{\MMDn}}{\PX} \right]  \leq \inf_{\theta\in\Theta} \MMD{P_\theta}{\PX} + \frac{2\sqrt{\mathbb{V}_{X\sim \PX}\left[\pi(X)\right]}}{\pi} + \frac{2\sqrt{2}}{\sqrt{n\cdot\pi}} .
    $$
\end{thm}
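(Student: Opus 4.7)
The plan is to reduce the missing-values MMD problem to standard MMD estimation on a random sub-sample, then apply a classical oracle inequality. In the $d=1$ case the mask takes values in $\{0,1\}$ and the empty pattern $M=1$ is excluded by convention, so the criterion $\frac{1}{n}\sum_{i=1}^n \MMDtwo{P_\theta^{(M_i)}}{\delta_{\{X_i^{(M_i)}\}}}$ collapses to a sum over indices with $M_i=0$. Writing $N=\#\{i:M_i=0\}\sim\textnormal{Binomial}(n,\pi)$ and $\widehat{Q}_N$ for the empirical distribution of those observed $X_i$'s, expanding the squared norm shows that this objective equals $\MMDtwo{P_\theta}{\widehat{Q}_N}$ up to a $\theta$-independent term; hence $\MMDn$ coincides with the complete-data MMD estimator based on $\widehat{Q}_N$. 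Conditionally on the masks, the observed $X_i$'s are i.i.d.\ from $Q := \mathbb{P}_{X\mid M=0}$, whose density is $\pX(x)\pi(x)/\pi$.

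Next, I would apply the standard triangle-inequality oracle bound. The minimizing property of $\MMDn$ together with triangle inequalities in the RKHS gives $\MMD{P_{\MMDn}}{Q}\leq \inf_\theta \MMD{P_\theta}{Q}+2\MMD{\widehat{Q}_N}{Q}$; two further triangle steps against $\PX$ then yield
\[
\MMD{P_{\MMDn}}{\PX}\leq \inf_{\theta\in\Theta}\MMD{P_\theta}{\PX}+2\MMD{Q}{\PX}+2\MMD{\widehat{Q}_N}{Q}.
\]
The first term already matches the theorem; what remains is to control the bias $\MMD{Q}{\PX}$ (the deviation from MCAR) and the fluctuation $\E[\MMD{\widehat{Q}_N}{Q}]$.

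For the bias, writing out the density ratio gives $\Phi(Q)-\Phi(\PX) = \E_{X\sim \PX}[(\pi(X)/\pi-1)\Phi(X)]$; pushing the norm inside the Bochner integral together with $\|\Phi(x)\|_\H=\sqrt{k(x,x)}\leq 1$ yields $\MMD{Q}{\PX}\leq \E_X[|\pi(X)/\pi-1|]$. A single Cauchy--Schwarz step then bounds the mean absolute deviation by the standard deviation, giving $\MMD{Q}{\PX}\leq \sqrt{\mathbb{V}_X[\pi(X)]}/\pi$, which matches the second term of the theorem.

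The main obstacle is the fluctuation term because $N$ is random. Conditional on $N\geq 1$, the standard bound $\E[\MMDtwo{\widehat{Q}_N}{Q}\mid N]\leq 1/N$ holds thanks to $k\leq 1$ and i.i.d.\ sampling from $Q$. Since a naive Jensen application to $1/\sqrt{N}$ goes in the wrong direction, I would instead use the elementary inequality $\mathbbm{1}_{N\geq 1}/N \leq 2/(N+1)$ together with the closed-form identity $\E[1/(N+1)] = (1-(1-\pi)^{n+1})/((n+1)\pi) \leq 1/((n+1)\pi)$ for $N\sim\textnormal{Binomial}(n,\pi)$, followed by Jensen's inequality for $\sqrt{\cdot}$, to obtain $\E[\MMD{\widehat{Q}_N}{Q}\mathbbm{1}_{N\geq 1}]\leq \sqrt{2/((n+1)\pi)}\leq \sqrt{2/(n\pi)}$, which produces the final $2\sqrt{2}/\sqrt{n\pi}$ term. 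The degenerate event $\{N=0\}$ has probability $(1-\pi)^n$, which is exponentially small in $n\pi$ and is absorbed using the crude deterministic bound $\MMD{\cdot}{\cdot}\leq 2$ inherited from $|k|\leq 1$.
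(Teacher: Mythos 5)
Your proposal follows essentially the same route as the paper's proof: the same reduction of the criterion to the complete-data MMD estimator on the observed sub-sample, the same two-stage triangle-inequality decomposition into oracle, bias, and fluctuation terms, the same Bochner-integral bound $\MMD{\QX[\textnormal{sic}]}{\PX}$ — more precisely $\MMD{\PXMm[\textnormal{with }m=0]}{\PX}\leq \E_{X\sim\PX}[|\pi(X)/\pi-1|]\leq\sqrt{\mathbb{V}_{X\sim\PX}[\pi(X)]}/\pi$ — and the same binomial negative-moment argument via $1/N\leq 2/(N+1)$ and $\E[1/(N+1)]\leq 1/((n+1)\pi)$ followed by Jensen. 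The only (cosmetic) divergence is the event $\{N=0\}$: treating it separately with the crude bound $\MMD{\cdot}{\cdot}\leq 2$ adds an extra term of order $(1-\pi)^n$ that is not present in the stated inequality, whereas the paper avoids any case split by defining $P_n$ with denominator $\max(1,N)$ (so that $P_n$ is the zero measure when $N=0$), under which $\MMDtwo{\mathbb{P}_{X\mid M=0}}{P_n}\leq 2/(N+1)$ holds on the whole sample space and the exact constants follow.
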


This bound offers a detailed breakdown of the three types of errors involved. The first term represents the misspecification error arising from using an incorrect model. The second term reflects the error due to incorrectly assuming that the missingness mechanism follows the M(C)AR assumption. Finally, the third term accounts for the estimation error, which stems from the finite size of the dataset and the presence of unobserved data points, effectively reducing the available sample size to $n\pi$.

\subsection{Robustness of the MMD to M(N)AR and contamination}

We also consider contamination models, which account for the presence of outliers in the dataset. We first focus on Huber's contamination model, where the complete-data are drawn from a distribution of interest $\Pthetastar$ with probability $1 - \epsilon$, and from a noise distribution $\mathbb{Q}_X$ with probability $\epsilon$. This can be expressed as $\PX = (1 - \epsilon) \Pthetastar + \epsilon \mathbb{Q}_X$, where $\epsilon$ denotes the proportion of outliers. The goal is to estimate the parameter of interest $\theta^*$, and we would like to replace that any dependence on $\PX$ in our bounds solely by a dependence on $\epsilon$ and $\Pthetastar$, irrespective of the contamination $\QX$.  We thus define for all $m$, $\pi_m^*=\E_{X\sim P_{\theta^*}}\left[\pi_m(X)\right]$ and $P^*_M(A)=\sum_{m \in A}  \pi_m^*$, the marginal distribution of $M$ not under $\Pjoint$ but under the joint distribution of interest $\Pthetastar\times\PMmX$.
The following theorem quantifies the error as a function of the proportion of outliers and of the deviation-to-MCAR:

\begin{thm}[Robustness of the MMD limit to M(N)AR and Huber's contamination model]
\label{MMD_Robust_cont}
    We have if $\PX=(1-\epsilon)\Pthetastar+\epsilon\mathbb{Q}_X$:
    $$
    \E_{M\sim P^*_M} \left[\mathbb{D}^2\left(P_{\MMDinf}^{(M)},\PmargMthetastar\right)\right] \leq \frac{24\epsilon^2}{1-\epsilon} + 64 \cdot \mathbb{E}_{M\sim P^*_M}\left[ \frac{1}{\pi_M^{*2}}\right] \left(\frac{\epsilon}{1-\epsilon}\right)^2 + 16 \cdot \mathbb{E}_{M\sim P^*_M}\left[\frac{\mathbb{V}_{X\sim \Pthetastar} \left[ \pi_M(X) \right]}{\pi_M^{* 2}}\right] .
    $$
\end{thm}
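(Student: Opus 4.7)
The plan is to apply inequality \eqref{MMD_Robust_2} of Theorem \ref{MMD_Robust} together with the Huber mixture structure to transfer the bound from the contaminated distribution $\PX$ and its induced pattern marginal $\PM$ to the clean targets $\Pthetastar$ and $P^*_M$. The starting point is that, because $\PX = (1-\epsilon)\Pthetastar + \epsilon\QX$, both marginalization and expectation are linear: $\PmargmX = (1-\epsilon)\Pmargmthetastar + \epsilon\QX^{(m)}$, so in the RKHS one has $\Phi(\PmargmX) - \Phi(\Pmargmthetastar) = \epsilon(\Phi(\QX^{(m)}) - \Phi(\Pmargmthetastar))$, hence $\MMDtwo{\PmargmX}{\Pmargmthetastar} \leq 4\epsilon^2$ uniformly in $m$, and similarly $\pi_m = (1-\epsilon)\pi_m^* + \epsilon\,\mathbb{E}_{\QX}[\pi_m(X)] \geq (1-\epsilon)\pi_m^*$.

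Using the squared triangle inequality $(a+b)^2 \leq 2a^2 + 2b^2$ I obtain
\[
\MMDtwo{P_{\MMDinf}^{(m)}}{\Pmargmthetastar} \leq 2\,\MMDtwo{P_{\MMDinf}^{(m)}}{\PmargmX} + 8\epsilon^2,
\]
and inequality \eqref{MMD_Robust_2} evaluated at $\theta = \theta^*$, where the linearization again yields $\inf_\theta \E_{M\sim\PM}[\MMDtwo{P_\theta^{(M)}}{\PmargMX}] \leq 4\epsilon^2$, bounds
\[
\E_{M\sim\PM}\left[\MMDtwo{P_{\MMDinf}^{(M)}}{\PmargMX}\right] \leq 8\epsilon^2 + 4\,\E_{M\sim\PM}\left[\frac{\mathbb{V}_{X\sim\PX}[\pi_M(X)]}{\pi_M^2}\right].
\]
Combining yields $\E_{M\sim\PM}[\MMDtwo{P_{\MMDinf}^{(M)}}{\PmargMthetastar}] \leq 24\epsilon^2 + 8\,\E_{M\sim\PM}[\mathbb{V}_{X\sim\PX}[\pi_M(X)]/\pi_M^2]$, and the lower bound on $\pi_m$ passes this to an $\E_{M\sim P^*_M}$ expectation at the cost of a factor $1/(1-\epsilon)$, producing the first term $24\epsilon^2/(1-\epsilon)$ of the stated bound.

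The remaining step is to replace the weighted variance by its counterpart under $\Pthetastar$ and $P^*_M$. I would rewrite $\E_{M\sim\PM}[\mathbb{V}_{X\sim\PX}[\pi_M(X)]/\pi_M^2] = \sum_m \mathbb{V}_{X\sim\PX}[\pi_m(X)]/\pi_m$ (the outer weight $\pi_m$ cancels one factor in the denominator), and then bound $\mathbb{V}_{X\sim\PX}[\pi_m(X)] \leq \mathbb{E}_{X\sim\PX}[(\pi_m(X) - \pi_m^*)^2] \leq (1-\epsilon)\mathbb{V}_{X\sim\Pthetastar}[\pi_m(X)] + \epsilon$ (since $(\pi_m(X) - \pi_m^*)^2 \leq 1$) and use $\pi_m \geq (1-\epsilon)\pi_m^*$, giving
\[
\sum_m \frac{\mathbb{V}_{X\sim\PX}[\pi_m(X)]}{\pi_m} \leq \E_{M\sim P^*_M}\left[\frac{\mathbb{V}_{X\sim\Pthetastar}[\pi_M(X)]}{\pi_M^{*2}}\right] + \frac{\epsilon}{1-\epsilon}\,\E_{M\sim P^*_M}\left[\frac{1}{\pi_M^{*2}}\right],
\]
using the identity $\sum_m g(m)/\pi_m^* = \sum_m \pi_m^* g(m)/\pi_m^{*2}$ to recognise $P^*_M$-expectations.

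The main obstacle is this final bookkeeping: the cancellation of the outer weight $\pi_m$ against one of the $\pi_m$'s in $\pi_m^2$ is precisely what allows the clean re-aggregation of the pointwise bounds into $P^*_M$-expectations of $\mathbb{V}_{X\sim\Pthetastar}[\pi_M(X)]/\pi_M^{*2}$ and $1/\pi_M^{*2}$. A secondary subtlety lies in the variance decomposition: the plain bound $\mathbb{V}_{X\sim\PX}[\pi_m(X)] \leq (1-\epsilon)\mathbb{V}_{X\sim\Pthetastar}[\pi_m(X)] + \epsilon$ produces only a linear-in-$\epsilon$ contamination contribution, and exploiting the tighter mean-shift identity $\pi_m - \pi_m^* = \epsilon(\mathbb{E}_{\QX}[\pi_m(X)] - \pi_m^*)$ inside a more refined decomposition is what converts the corresponding term into the $(\epsilon/(1-\epsilon))^2$ scaling displayed in the theorem, from which the particular constants $16$ and $64$ follow after aggregating all the triangle-inequality losses.
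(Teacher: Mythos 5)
Your argument tracks the paper's proof closely up to the last step: the $4\epsilon^2$ bound on $\E_{M}[\MMDtwo{\PmargMthetastar}{\PmargMX}]$, the aggregation into $24\epsilon^2$, the lower bound $\pi_m\geq(1-\epsilon)\pi_m^*$, the change of measure from $\PM$ to $P^*_M$ at a cost of $1/(1-\epsilon)$, and the weight cancellation $\sum_m\pi_m\, g(m)/\pi_m^2=\sum_m g(m)/\pi_m$ are all exactly what the paper does. The genuine gap is in the step you flag as the ``secondary subtlety'', and it is not mere bookkeeping: the route through the variance cannot produce the $(\epsilon/(1-\epsilon))^2$ term. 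Indeed $\mathbb{V}_{X\sim\PX}[\pi_m(X)]\leq\E_{X\sim\PX}[(\pi_m(X)-\pi_m^*)^2]=(1-\epsilon)\,\mathbb{V}_{X\sim\Pthetastar}[\pi_m(X)]+\epsilon\,\E_{X\sim\QX}[(\pi_m(X)-\pi_m^*)^2]$, and the $\QX$-term, though weighted by $\epsilon$, can be of order one, so the contamination contribution to the variance is irreducibly \emph{linear} in $\epsilon$. The mean-shift identity $\pi_m-\pi_m^*=\epsilon(\E_{X\sim\QX}[\pi_m(X)]-\pi_m^*)$ that you invoke as the fix does not help: the obstruction lives in the second moment under $\QX$, not in the choice of centering, so no refinement of the centering argument converts this term into an $\epsilon^2$ one.

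The paper's resolution is to avoid the variance altogether. It reruns Theorem \ref{MMD_Robust} with the squared mean absolute deviation $\mathbb{M}_{X\sim\PX}[\pi_m(X)]=\E_{X\sim\PX}\left[|\pi_m(X)-\pi_m|\right]^2$ in place of $\mathbb{V}$, which is legitimate because the proof of Theorem \ref{MMD_Robust} bounds the MMD by $\E_{X\sim\PX}[|\pi_m(X)-\pi_m|]/\pi_m$ \emph{before} ever invoking Cauchy--Schwarz. For this $L^1$ quantity the mixture decomposes as $\mathbb{M}^{1/2}_{X\sim\PX}[\pi_m(X)]\leq(1-\epsilon)\,\mathbb{M}^{1/2}_{X\sim\Pthetastar}[\pi_m(X)]+4\epsilon$, i.e.\ the contamination enters linearly before squaring and hence quadratically after; this is precisely what yields the $(\epsilon/(1-\epsilon))^2$ scaling and, after aggregation, the constants $16$ and $64$ (the final statement with $\mathbb{V}$ then follows since $\mathbb{M}\leq\mathbb{V}$ by Jensen). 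Without importing this $L^1$ device, the inequality you actually display, with its additive $\frac{\epsilon}{1-\epsilon}\,\E_{M\sim P^*_M}[1/\pi_M^{*2}]$ term, remains linear rather than quadratic in $\epsilon$ in the mechanism-contamination term and so does not establish the theorem.
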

This theorem ensures robustness to both M(N)AR data and the presence of outliers, with the expected squared MMD distance bounded by the squared proportion of outliers $\epsilon^2$ weighted by $\E_M[\pi_M^{* -2}]/(1-\epsilon)^2$, except in cases where the deviation-to-MCAR missingness term is significant and dominates. This result is remarkable, as such robustness is unattainable for the MLE, where even a single outlier can cause the estimate to deteriorate arbitrarily. We emphasize that $\pi_m^*$ and thus $P_M^*$ is defined here as expectations w.r.t.\ the distribution of interest $\Pthetastar$, not under $\PX$. The second term in the bound $\E_M[\pi_M^{* -2}]\epsilon^2/(1-\epsilon)^2$ reflects this shift in the marginal of $M$, while the first term $\epsilon^2/(1-\epsilon)$ corresponds to the error in the specification of the marginal of $X$. We believe that this last term can be tightened to $\epsilon^2$ and that the factor $1/(1-\epsilon)$ is just an artifact from the proof, although we are only able to establish it in dimension one, as shown below. This is anyway only little improvement, since this model misspecification term is always bounded by the pattern shift error $\E_M[\pi_M^{* -2}]\epsilon^2/(1-\epsilon)^2$.

\vspace{0.2cm}
In the one-dimensional setting, we can derive a similar bound with tighter constants for the MMD estimator $\MMDn$ directly:

\begin{thm}[Robustness of the MMD estimator to MNAR and Huber's contamination model when $d=1$]
\label{MMD_Robust_cont_finite}
    With the notations $\pi(X)=\mathbb{P}[M=0|X]$ and $\pi^*=\E_{X\sim\Pthetastar}[\pi(X)]$, we have for $\PX=(1-\epsilon)\Pthetastar+\epsilon\mathbb{Q}_X$:
    \begin{equation}
    \label{Huber_data_MNAR}
        \E_{\mathcal{S}}\left[ \mathbb{D}\left(P_{\MMDn},\Pthetastar\right) \right] \leq 4\cdot\epsilon + \frac{8\cdot\epsilon}{\pi^*(1-\epsilon)} + \frac{2\sqrt{\mathbb{V}_{X\sim \Pthetastar}\left[\pi(X)\right]}}{\pi^*} + \frac{2\sqrt{2}}{\sqrt{n\pi^*(1-\epsilon)}} .
    \end{equation}
\end{thm}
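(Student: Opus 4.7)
The plan is to bypass Theorem~\ref{thm_finite_main} and argue directly, exploiting that in the $d=1$ case the M-estimator criterion $\frac{1}{n}\sum_{i:M_i=0}\MMDtwo{P_\theta}{\delta_{X_i}}$ is minimised by the same $\theta$ as $\MMDtwo{P_\theta}{\hat{P}_n}$, where $\hat{P}_n = \frac{1}{N}\sum_{i:M_i=0}\delta_{X_i}$ is the empirical measure of the $N$ observed points. Optimality of $P_{\MMDn}$ with $\Pthetastar\in\Theta$ as a competitor, combined with two uses of the triangle inequality, yields the master bound
$$
\MMD{P_{\MMDn}}{\Pthetastar} \leq 2\MMD{\hat{P}_n}{\Pthetastar} \leq 2\MMD{\hat{P}_n}{\mathbb{P}_{X\mid M=0}} + 2\MMD{\mathbb{P}_{X\mid M=0}}{\Pthetastar}.
$$
The first summand is the estimation error (under $\mathbb{P}_{X\mid M=0}$, the true law of the observed data), and the second is a deterministic bias capturing both the MNAR deviation and the Huber contamination.

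For the estimation term, conditioning on $N$ gives $\E[\MMDtwo{\hat{P}_n}{\mathbb{P}_{X\mid M=0}}\mid N=k]\leq 1/k$ by the standard U-statistic identity, and the Binomial formula $\E[(N+1)^{-1}]=(1-(1-\pi)^{n+1})/((n+1)\pi)$ together with the elementary bound $1/N\leq 2/(N+1)$ on $\{N\geq 1\}$ gives $\E[N^{-1}\mathbbm{1}(N\geq 1)]\leq 2/(n\pi)$; Jensen's inequality then yields $\E[\MMD{\hat{P}_n}{\mathbb{P}_{X\mid M=0}}]\leq\sqrt{2/(n\pi)}$. Combining this with the Huber identity $\pi=(1-\epsilon)\pi^*+\epsilon\pi_Q\geq (1-\epsilon)\pi^*$ produces the $\frac{2\sqrt{2}}{\sqrt{n\pi^*(1-\epsilon)}}$ contribution in the target inequality.

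The main obstacle is bounding the deterministic bias $\MMD{\mathbb{P}_{X\mid M=0}}{\Pthetastar}$ by $\sqrt{\mathbb{V}_{X\sim\Pthetastar}[\pi(X)]}/\pi^*$ plus an error that is linear, rather than $\sqrt{\cdot}$, in $\epsilon$. Introduce the ``clean'' observation-conditional law $\mathbb{P}^*_{X\mid M=0}$ with density $p_{\theta^*}(x)\pi(x)/\pi^*$ as an intermediate, and decompose via triangle inequality $\MMD{\mathbb{P}_{X\mid M=0}}{\Pthetastar}\leq\MMD{\mathbb{P}_{X\mid M=0}}{\mathbb{P}^*_{X\mid M=0}}+\MMD{\mathbb{P}^*_{X\mid M=0}}{\Pthetastar}$. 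The MNAR leg $\MMD{\mathbb{P}^*_{X\mid M=0}}{\Pthetastar}$ is handled by writing $\Phi(\mathbb{P}^*_{X\mid M=0})-\Phi(\Pthetastar)=\frac{1}{\pi^*}\int(\pi(x)-\pi^*)p_{\theta^*}(x)k(x,\cdot)\,dx$ and applying Cauchy--Schwarz in the RKHS together with the reproducing property and $|k|\leq 1$ (which give $\|\int f(x)p(x)k(x,\cdot)\,dx\|_{\mathcal{H}}\leq\|f\|_{L^2(P)}$), producing the bound $\sqrt{\mathbb{V}_{X\sim\Pthetastar}[\pi(X)]}/\pi^*$. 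The Huber leg $\MMD{\mathbb{P}_{X\mid M=0}}{\mathbb{P}^*_{X\mid M=0}}$ is the delicate one: substituting $p_X=(1-\epsilon)p_{\theta^*}+\epsilon q$ in $\Phi(\mathbb{P}_{X\mid M=0})$ and using $(1-\epsilon)\pi^*-\pi=-\epsilon\pi_Q$, the difference of embeddings reduces to two RKHS vectors of masses $\pi^*$ and $\pi_Q$, each scaled by $\epsilon/\pi$; bounding their norms via $|k|\leq 1$ gives $\MMD{\mathbb{P}_{X\mid M=0}}{\mathbb{P}^*_{X\mid M=0}}\leq 2\epsilon\pi_Q/\pi\leq 2\epsilon/((1-\epsilon)\pi^*)$. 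Packaging these linear-in-$\epsilon$ contributions (absorbing additional $2\epsilon$-type terms from optionally routing through $\PX$, using $\MMD{\PX}{\Pthetastar}=\epsilon\,\MMD{\QX}{\Pthetastar}\leq 2\epsilon$ via the linearity of $\Phi$ and $\|\Phi(P)\|_{\mathcal{H}}\leq 1$) into $4\epsilon+8\epsilon/((1-\epsilon)\pi^*)$ completes the proof. The crux is precisely this RKHS-level Huber decomposition, which avoids the spurious $\sqrt{\epsilon}$ correction that any naive variance bound such as $\mathbb{V}_{\PX}[\pi(X)]\leq (1-\epsilon)\mathbb{V}_{\Pthetastar}[\pi(X)]+O(\epsilon)$ followed by a square root would produce.
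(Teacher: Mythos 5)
Your proof is correct, and for the bias term it takes a genuinely different route from the paper. The paper's proof goes through the marginal $\PX$ as the intermediate distribution, writing
\[
\mathbb{D}\left(\Pthetastar,P_{\MMDn}\right) \leq 2\,\mathbb{D}\left(\Pthetastar,\PX\right) + 2\,\mathbb{D}\left(\PX,\mathbb{P}_{X|M=0}\right) + 2\,\mathbb{D}\left(\mathbb{P}_{X|M=0},P_n\right),
\]
bounding the first term by $2\epsilon$ and the second by a mean-absolute-deviation ratio $\E_{X\sim\PX}[|\pi(X)-\E_{\PX}[\pi(X)]|]/\E_{\PX}[\pi(X)]$, which it then splits under the mixture $\PX=(1-\epsilon)\Pthetastar+\epsilon\QX$ into a clean deviation term plus $4\epsilon/((1-\epsilon)\pi^*)$. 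You instead insert the tilted clean conditional $\mathbb{P}^*_{X\mid M=0}\propto p_{\theta^*}\pi$ between $\mathbb{P}_{X\mid M=0}$ and $\Pthetastar$; this separates the MNAR deviation (evaluated purely under $\Pthetastar$, with no mixture bookkeeping) from the Huber contamination, which you control at the level of RKHS embeddings via the exact cancellation $\pi^*p_X-\pi p_{\theta^*}=\epsilon(\pi^*q-\pi_Q p_{\theta^*})$. Your estimation-error term and the $L^1$-to-$L^2$ step for the variance are the same as the paper's. What your route buys is a strictly sharper bound: you never pay the $4\epsilon$ term for $\mathbb{D}(\Pthetastar,\PX)$ and your contamination coefficient is $4\epsilon/((1-\epsilon)\pi^*)$ rather than $8\epsilon/((1-\epsilon)\pi^*)$, so the stated inequality follows a fortiori; you correctly note this when "packaging" the constants. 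The only point to make explicit in a write-up is the $N=0$ event in the estimation term (the paper handles it with the convention $\max(1,|\{i:M_i=0\}|)$, and your bound $\E[N^{-1}\mathbbm{1}(N\geq1)]\leq 2/(n\pi)$ is at the same level of rigor), and the one-line verification that the M-estimation criterion and $\MMDtwo{P_\theta}{\hat P_n}$ differ by an additive constant and a positive factor $N/n$, hence share minimizers.
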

The rate achieved by the estimator is $\max\left(\epsilon/\pi^*(1-\epsilon),\mathbb{V}[\pi(X)]^{1/2}/\pi^*,(n\pi^*(1-\epsilon))^{-1/2}\right)$ with respect to the MMD distance. We will provide explicit rates with respect to the Euclidean distance for a Gaussian model in the next section. Once again, the first term $\epsilon$ corresponds to the mismatch error between $\PX$ and $\Pthetastar$, and is always dominated by the second term, $\epsilon/\pi^*(1-\epsilon)$, which is the pattern shift error induced by the contamination. The third term measures the deviation-to-M(C)AR level, while the last one is the convergence rate under M(C)AR using the effective sample size $n\pi^*(1-\epsilon)$ from the M(C)AR component. 

\section{Examples}
\label{Sec_exm}

We investigate in this subsection explicit applications of our theory for specific examples of MNAR missingness mechanisms. We provide three of them: the first one involves a truncation mechanism; the second one is a Huber contamination setting for the missingness mechanism; while the third one allows for adversarial contamination of the missingness mechanism.

\subsection{MNAR Truncation Mechanism}

\begin{exm}
\label{exm_trunc}
    A one-dimensional dataset $\{X_i\}_i$ composed of i.i.d.\ copies of $X$ is collected, but we only observe a subsample of it $\{X_i:M_i=0\}_i$, where $M_i$ is an observed missingness random variable equal to $0$ if $X_i$ is observed, and $1$ otherwise. The true missingness mechanism is actually MNAR, with $X$ being observed only when it belongs to some large subset $\mathcal{I}$ of $\mathbb{R}$:
    $$
    \mathbb{P}[M = 0 | X] = 1 \quad \textnormal{if and only if} \hspace{0.2cm} X\in\mathcal{I} .
    $$
    We denote $\varepsilon =\mathbb{P}[M=1]=\mathbb{P}[X\notin\mathcal{I}]\in[0,1]$ the missingness level (no missingness when $\varepsilon=0$).
\end{exm}

\vspace{0.2cm}
\begin{cor}
\label{cor_trunc}
    Under the setting adopted in Example \ref{exm_trunc}, we have when $\PX=\Pthetastar$:
    $$
     \MMD{P_{\MMDinf}}{\Pthetastar} \leq 4 \cdot \varepsilon .
    $$
\end{cor}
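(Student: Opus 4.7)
The plan is to argue directly from the variational characterization of $\MMDinf$ rather than plug the setting into Theorem~\ref{MMD_Robust}: the latter yields only an $O(\sqrt{\varepsilon})$ bound, whereas the corollary claims the stronger $O(\varepsilon)$ rate. The key observation is that in dimension $d=1$ the only non-empty missingness pattern contributing to the sum defining $\MMDinf$ is $M=0$ (the all-missing pattern $M=1$ is the empty pattern and is excluded by the convention fixed in the notation section). Hence the asymptotic objective $\E_{M\sim\PM}\left[\MMDtwo{\PmargMtheta}{\PmargMXM}\right]$ collapses, up to the positive factor $\mathbb{P}[M=0]=1-\varepsilon$, to $\MMDtwo{\Ptheta}{\mathbb{P}_{X\mid M=0}}$, where $\mathbb{P}_{X\mid M=0}$ is the law of $X$ given $X\in\mathcal{I}$ under $\PX=\Pthetastar$, i.e.\ $\Pthetastar$ truncated to $\mathcal{I}$ and renormalized by $1/(1-\varepsilon)$.

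Using $\theta^*$ as a feasible competitor in this argmin then yields $\MMD{P_{\MMDinf}}{\mathbb{P}_{X\mid M=0}} \leq \MMD{\Pthetastar}{\mathbb{P}_{X\mid M=0}}$, and a triangle inequality gives
\[
\MMD{P_{\MMDinf}}{\Pthetastar} \leq 2\, \MMD{\Pthetastar}{\mathbb{P}_{X\mid M=0}} ,
\]
so the whole problem reduces to controlling the MMD between $\Pthetastar$ and its truncation to $\mathcal{I}$.

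For this last step I would invoke the elementary bound $\MMD{P}{Q} \leq 2\, \textnormal{TV}(P,Q)$, which holds whenever $|k|\leq 1$ since $\MMDtwo{P}{Q} = \iint k(x,y)\,d(P-Q)(x)\,d(P-Q)(y) \leq \left(\int |d(P-Q)|\right)^2$. A direct computation of $\int |p_{\theta^*}(x) - p_{\theta^*}(x)\,\Ind{x\in\mathcal{I}}/(1-\varepsilon)|\,dx$, splitting according to whether $x$ lies in $\mathcal{I}$ or not, produces $\varepsilon+(1-\varepsilon)\cdot\varepsilon/(1-\varepsilon)=2\varepsilon$, so $\textnormal{TV}(\Pthetastar,\mathbb{P}_{X\mid M=0}) = \varepsilon$, hence $\MMD{\Pthetastar}{\mathbb{P}_{X\mid M=0}} \leq 2\varepsilon$ and the claimed bound follows. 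There is no genuine obstacle here; the main subtlety is noticing that Theorems~\ref{MMD_Robust} and~\ref{thm_finite_main} both lose a factor $\sqrt{\varepsilon}$ because they pass through the variance of $\pi(X)=\Ind{X\in\mathcal{I}}$ (which is $\varepsilon(1-\varepsilon)$), so one must bypass that variance step and exploit the exact total variation between a distribution and its truncation to obtain the sharp $O(\varepsilon)$ rate.
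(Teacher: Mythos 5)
Your proposal is correct and follows essentially the same route as the paper's proof: the paper also applies the triangle inequality to the (unsquared) MMD together with the competitor argument at $\MMDinf$, and replaces the variance of $\pi(X)=\Ind{X\in\mathcal{I}}$ by its absolute mean deviation, $\E_{X\sim\PX}\left[\left|\pi(X)-\pi\right|\right]/\pi = 2\varepsilon$, which is exactly your quantity $2\,\textnormal{TV}\left(\Pthetastar,\mathbb{P}_{X\mid M=0}\right)$ in disguise. Your diagnosis that the variance step is what costs the $\sqrt{\varepsilon}$ factor matches the paper's own remark that a ``finer version'' of Theorem~\ref{MMD_Robust} is needed here.
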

In the particular case where the model is Gaussian $\Ptheta=\mathcal{N}(\theta,1)$, we have an explicit formula for the MMD distance between $\Ptheta$ and $P_{\theta'}$ using a Gaussian kernel as in \eqref{Gausskern}: 
$$
\mathbb{D}^2(P_{\theta},P_{\theta'}) = 2 \left(\frac{\gamma^2}{4 + \gamma^2}\right)^{\frac{1}{2}} \left[ 1 - \exp\left( -\frac{(\theta-\theta')^2}{4 + \gamma^2} \right) \right] ,
$$
so that
$$
|\theta-\theta'| = \sqrt{-(4+\gamma^2) \log\left( 1 - \frac{1}{2} \left(\frac{4+\gamma^2}{\gamma^2}\right)^{\frac{1}{2}} \mathbb{D}^2(P_{\theta},P_{\theta'}) \right)} .
$$
Hence, if the model is well-specified $\Pthetastar=\PX$, Corollary \ref{cor_trunc} becomes for small values of $\varepsilon$:
$$
|\MMDinf-\theta^*| \leq \sqrt{-(4+\gamma^2) \log\left( 1 - 8 \left(\frac{4+\gamma^2}{\gamma^2}\right)^{\frac{1}{2}} \varepsilon^2 \right)} \underset{\varepsilon\rightarrow0}{\sim} \sqrt{8(4+\gamma^2) \left(\frac{4+\gamma^2}{\gamma^2}\right)^{\frac{1}{2}}} \cdot \varepsilon ,
$$
and for $\gamma=\sqrt{2}$ (which minimizes the above linear factor), we have:
$$
|\MMDinf-\theta^*| \leq \sqrt{-6 \log\left( 1 - 8 \sqrt{3} \varepsilon^2 \right)} \underset{\varepsilon\rightarrow0}{\sim} \sqrt{48 \sqrt{3}} \cdot \varepsilon .
$$
Notice that we have almost the same guarantee for the MLE which is available in closed-form:
$$
{\theta}_n^{\textnormal{MLE}} = \frac{1}{\sum_i (1-M_i)} \sum_{i:M_i=0} X_i ,
$$
and which converges to the conditional expectation $\E[X | M = 0]$ computed as:
$$
{\theta}_\infty^{\textnormal{MLE}} = - \frac{\phi(b)-\phi(a)}{\Phi(b)-\Phi(a)},
$$
where $\phi$ is the density of the standard Gaussian and $\mathcal{I}=(a,b)$. In the particular case where $\mathcal{I}=(a,+\infty)$ (left-truncation), we thus have
$$
{\theta}_\infty^{\textnormal{MLE}}  = \frac{\phi(a)}{1-\Phi(a)},
$$
and then as a function of the missingness level $\varepsilon$, the error is:
$$
\big| {\theta}_\infty^{\textnormal{MLE}}  - \theta^* \big| = \frac{\phi\left(\Phi^{-1}(\varepsilon)\right)}{1-\varepsilon} \underset{\varepsilon\rightarrow0}{\sim} \varepsilon \cdot \sqrt{\log\left(\frac{1}{2\pi\varepsilon^2}\right)} ,
$$
which is slightly worse than the MMD estimator by a logarithmic factor, but with a better constant.

\subsection{Huber's Contamination Mechanism}\label{Sec_Huber}

\begin{exm}
\label{exm_huber}
    We now collect a dataset $\{X_i\}_i$ composed of i.i.d.\ copies of $X$ and only observe $\{X_i^{(M_i)}\}_i$. While we believe that the MDM is MCAR with constant (w.r.t.\ $X$) probability $\PMmX=\alpha_m$ a.s.\ (with $\sum_m\alpha_m=1$) and thus ignore the missingness mechanism, the true missingness mechanism is actually M(N)AR, with a proportion $\varepsilon$ of outliers in the missingness mechanism from a contamination process $\mathbb{Q}[M=m|X]$:
    $$
    \mathbb{P}[M=m|X=x] = (1-\varepsilon) \cdot \alpha_m + \varepsilon \cdot \mathbb{Q}[M=m|X=x] \quad \text{for any } x. 
    $$
\end{exm}
In the following we define the expectation $\E_{M\sim(\alpha_m)}$ with respect to the MCAR uncontaminated process, that is $\E_{M\sim(\alpha_m)}[f(M)]=\sum_{m} f(m) \alpha_m$.

\begin{thm}
\label{cor_huber}
    Under the setting adopted in Example \ref{exm_huber}, assuming $\PX = (1 - \epsilon) \Pthetastar + \epsilon \mathbb{Q}_X$, we have:
    $$
    \E_{M\sim(\alpha_m)} \left[\mathbb{D}^2\left(P_{\theta^*}^{(M)},P_{\MMDinf}^{(M)}\right)\right] \leq \frac{24\epsilon^2}{1-\varepsilon} + 8 \cdot \E_{M \sim (\alpha_m)}\left[\frac{1}{\alpha_M^2}\right] \cdot \left(\frac{\varepsilon}{1-\varepsilon}\right)^2 .
    $$
\end{thm}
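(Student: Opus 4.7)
The strategy is to obtain Theorem~\ref{cor_huber} as a specialization of Theorem~\ref{MMD_Robust_cont} to the concrete Huber missingness mechanism of Example~\ref{exm_huber}. I would first compute, for $\pi_m(X) = (1-\varepsilon)\alpha_m + \varepsilon\,\mathbb{Q}[M=m\mid X]$, the two ingredients that appear on the right-hand side of Theorem~\ref{MMD_Robust_cont}: the pattern marginal under $\theta^*$,
\[
\pi_m^* \;=\; \E_{X\sim\Pthetastar}[\pi_m(X)] \;=\; (1-\varepsilon)\alpha_m + \varepsilon q_m, \qquad q_m := \E_{X\sim\Pthetastar}[\mathbb{Q}[M=m\mid X]],
\]
and the $\theta^*$-conditional variance, which, because the deterministic MCAR component contributes none, satisfies $\mathbb{V}_{X\sim\Pthetastar}[\pi_m(X)] = \varepsilon^2\,\mathbb{V}_{X\sim\Pthetastar}[\mathbb{Q}[M=m\mid X]] \leq \varepsilon^2 q_m$ using $\mathbb{Q}[M=m\mid X] \in [0,1]$.

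The workhorse for the remainder of the argument is the pointwise bound $\pi_m^* \geq (1-\varepsilon)\alpha_m$, which I would use in three places. First, on the left-hand side, it gives $\E_{M\sim(\alpha_m)}[f(M)] \leq (1-\varepsilon)^{-1}\,\E_{M\sim P_M^*}[f(M)]$ for any nonnegative $f$, so that Theorem~\ref{MMD_Robust_cont}'s bound on $\E_{M\sim P_M^*}[\mathbb{D}^2]$ transfers to $\E_{M\sim(\alpha_m)}[\mathbb{D}^2]$. Second,
\[
\E_{M\sim P_M^*}[1/\pi_M^{*2}] \;=\; \sum_m \frac{1}{\pi_m^*} \;\leq\; \frac{1}{1-\varepsilon}\,\E_{M\sim(\alpha_m)}[1/\alpha_M^2].
\]
Third, combining the variance estimate with the same inequality,
\[
\E_{M\sim P_M^*}\!\left[\frac{\mathbb{V}_{X\sim\Pthetastar}[\pi_M(X)]}{\pi_M^{*2}}\right] \;\leq\; \varepsilon^2 \sum_m \frac{q_m}{(1-\varepsilon)\alpha_m} \;\leq\; \frac{\varepsilon^2}{1-\varepsilon}\,\E_{M\sim(\alpha_m)}[1/\alpha_M^2],
\]
where the final step uses $q_m \leq 1$. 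Substituting the three displays into Theorem~\ref{MMD_Robust_cont} and collecting the $\epsilon^2$-contributions on one side and the $\varepsilon^2$-contributions on the other should reproduce the two stated terms of the claim, one of order $\epsilon^2/(1-\varepsilon)$ from the pure data contamination and one of order $\varepsilon^2/(1-\varepsilon)^2$ carrying the $\E_{M\sim(\alpha_m)}[1/\alpha_M^2]$ factor from the missingness contamination.

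The main obstacle is the constant bookkeeping. Theorem~\ref{MMD_Robust_cont} contributes three terms---a pure data piece $24\epsilon^2/(1-\epsilon)$, an intermediate piece $64\,\E_{M\sim P_M^*}[1/\pi_M^{*2}]\cdot\epsilon^2/(1-\epsilon)^2$ that entangles both contaminations through $\pi_m^*$, and the MNAR-variance piece---while the claim has only two. Showing that the intermediate piece can be absorbed either into the data term $24\epsilon^2/(1-\varepsilon)$ or into the missingness term $8\,\E_{M\sim(\alpha_m)}[1/\alpha_M^2]\,(\varepsilon/(1-\varepsilon))^2$ with the precise coefficients $24$ and $8$ is the delicate part of the argument: one must trade $(1-\epsilon)^{-1}$ for $(1-\varepsilon)^{-1}$ in a regime-aware way, using that $\E_{M\sim(\alpha_m)}[1/\alpha_M^2] \geq 1$ so that the remaining cross term is dominated by the missingness contribution. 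Everything else is a mechanical substitution once the three displayed inequalities above are in place.
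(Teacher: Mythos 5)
Your overall plan --- specializing the general Huber bound of Theorem \ref{MMD_Robust_cont} to the mechanism of Example \ref{exm_huber} --- is not the route the paper takes, and you have correctly located the obstruction, but it is fatal rather than a matter of ``delicate bookkeeping''. The middle term of Theorem \ref{MMD_Robust_cont}, $64\,\E_{M\sim P^*_M}[1/\pi_M^{*2}]\,(\epsilon/(1-\epsilon))^2$, is proportional to the \emph{data} contamination level $\epsilon^2$ and, after your substitution $\pi_m^*\geq(1-\varepsilon)\alpha_m$, carries the factor $\E_{M\sim(\alpha_m)}[1/\alpha_M^2]$. The target bound has no such term: its only $\epsilon^2$ contribution is $24\epsilon^2/(1-\varepsilon)$ \emph{without} that factor, and its only term carrying $\E_{M\sim(\alpha_m)}[1/\alpha_M^2]$ is proportional to $\varepsilon^2$. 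Since $\epsilon$ and $\varepsilon$ are independent parameters and $\E_{M\sim(\alpha_m)}[1/\alpha_M^2]=\sum_m 1/\alpha_m$ can be arbitrarily large, the middle term can dominate both target terms simultaneously (take $\varepsilon=0$, $\epsilon>0$ fixed, and many patterns with small $\alpha_m$). No ``regime-aware'' trade of $(1-\epsilon)^{-1}$ for $(1-\varepsilon)^{-1}$ can absorb it, and the bound $\E_{M\sim(\alpha_m)}[1/\alpha_M^2]\geq1$ only helps when $\epsilon\lesssim\varepsilon$, which is not assumed; what you would end up proving is strictly weaker than the statement.

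The paper gets the sharper result by \emph{not} passing through Theorem \ref{MMD_Robust_cont}. It returns to the more primitive bound \eqref{MMD_Robust_2} of Theorem \ref{MMD_Robust}, in which the deviation-to-MCAR term is the variance of $\pi_M(X)$ under the \emph{actual} data law $\PX$, not under $\Pthetastar$. In Example \ref{exm_huber} one has $\pi_m(X)=(1-\varepsilon)\alpha_m+\varepsilon\widetilde{\pi}_m(X)$ with $\widetilde{\pi}_m(X)=\mathbb{Q}[M=m\mid X]$, so
$$
\mathbb{V}_{X\sim\PX}\left[\pi_m(X)\right]=\varepsilon^2\,\mathbb{V}_{X\sim\PX}\left[\widetilde{\pi}_m(X)\right]\leq\varepsilon^2
$$
holds \emph{whatever} $\PX$ is; there is no need to transfer the variance from $\PX$ to $\Pthetastar$, and it is precisely that transfer (in the proof of Theorem \ref{MMD_Robust_cont}) which generates the $64\,\E[1/\pi^{*2}]\epsilon^2/(1-\epsilon)^2$ cross term. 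The paper then combines this with $\E_{X\sim\PX}[\pi_m(X)]\geq(1-\varepsilon)\alpha_m$, the per-pattern bound $\MMD{P_{\theta^*}^{(m)}}{\mathbb{P}_{X}^{(m)}}\leq2\epsilon$, a triangle inequality, and the change of measure $\E_{M\sim(\alpha_m)}[f(M)]\leq(1-\varepsilon)^{-1}\E_{M\sim\PM}[f(M)]$ to obtain the stated constants. Your three displayed inequalities are correct and reusable, but they must be fed into \eqref{MMD_Robust_2} directly rather than into Theorem \ref{MMD_Robust_cont}.
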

Contrary to the general Theorem \ref{MMD_Robust_cont}, the complete-data contamination ratio $\epsilon$ is only present in the first term accounting for model misspecification error. This is due to a finer analysis of the proofs in this specific mechanism contamination setting. We believe once again that the $1/(1-\varepsilon)$ factor in front of the $\epsilon^2$ error may be removed, as it is the case in the one-dimensional setting below. We also provide a non-asymptotic bound in the one-dimensional case:

\begin{thm}
\label{cor_huber_finite}
    Under the setting adopted in Example \ref{exm_huber}, assuming $\PX = (1 - \epsilon) \Pthetastar + \epsilon \mathbb{Q}_X$, we have:
    $$
       \E_{\mathcal{S}}\left[ \mathbb{D}\left(P_{\MMDn},\PX\right) \right] \leq 4\cdot\epsilon + \frac{2\cdot\varepsilon}{\alpha(1-\varepsilon)} + \frac{2\sqrt{2}}{\sqrt{n\alpha(1-\varepsilon)}}.
    $$
\end{thm}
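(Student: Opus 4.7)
\emph{Strategy.} The plan is to exploit the one-dimensional structure to rewrite $\MMDn$ as the classical complete-data MMD estimator applied to a biased subsample, then to decompose the error via optimality and triangle inequalities. Since $d=1$, the mask $M$ takes values in $\{0,1\}$ and the all-missing pattern drops out of the M-estimation objective by the convention recalled in the notation section. Let $N_0 = \sum_{i=1}^n \mathbbm{1}(M_i=0)$ denote the number of actually kept observations and $Q_n = \frac{1}{N_0}\sum_{i:M_i=0}\delta_{\{X_i\}}$ on $\{N_0\ge 1\}$; rescaling the criterion by $N_0/n$ yields
\[
\MMDn=\argmin_{\theta\in\Theta}\MMDtwo{\Ptheta}{Q_n},
\]
so $\MMDn$ is exactly the standard MMD estimator of \cite{BadrAlquierMMD} applied to a sample from the conditional distribution $P^{\mathrm{obs}} := \mathbb{P}_{X\mid M=0}$. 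The event $\{N_0=0\}$ has probability $(1-\pi)^n$ with $\pi := \mathbb{P}[M=0]\ge(1-\varepsilon)\alpha$ and will be absorbed into the sampling term by setting $\MMDn$ to an arbitrary default value.

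\emph{Decomposition.} The optimality inequality $\MMD{P_{\MMDn}}{Q_n}\le\MMD{\Pthetastar}{Q_n}$ combined with two triangle inequalities in the RKHS gives
\[
\MMD{P_{\MMDn}}{\PX}\le \MMD{\Pthetastar}{\PX}+2\MMD{Q_n}{P^{\mathrm{obs}}}+2\MMD{P^{\mathrm{obs}}}{\PX},
\]
cleanly separating the three sources of error: complete-data contamination, Monte Carlo fluctuation on the observed subsample, and missingness-mechanism bias. A final triangle step through $\Pthetastar$, using $\MMD{\Pthetastar}{\PX}\le 2\epsilon$ once more, accounts for the $4\epsilon$ coefficient of the statement.

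\emph{Bounding the three pieces and main obstacle.} The data-contamination piece is immediate: since the kernel is bounded by $1$, $\|\Phi(P)\|_{\mathcal{H}}\le 1$ for any $P$, and $\MMD{\Pthetastar}{\PX}=\epsilon\,\MMD{\Pthetastar}{\mathbb{Q}_X}\le 2\epsilon$. For the mechanism bias, I invoke the inequality $\MMD{P}{Q}\le 2\|P-Q\|_{\mathrm{TV}}$ together with the Huber form $\pi(x)=(1-\varepsilon)\alpha+\varepsilon q(x)$ to compute
\[
\|P^{\mathrm{obs}}-\PX\|_{\mathrm{TV}}=\frac{\varepsilon}{2\pi}\int|q(x)-\bar{q}|\,d\PX(x)\le\frac{\varepsilon\min(\bar{q},1-\bar{q})}{\pi}\le\frac{\varepsilon}{2\pi},
\]
where $\bar{q}=\int q\,d\PX$ and the key step uses $0\le q\le 1$ together with the elementary $\int|q-\bar{q}|\,d\PX\le 2\min(\bar{q},1-\bar{q})$; combined with $\pi\ge(1-\varepsilon)\alpha$ this yields $2\MMD{P^{\mathrm{obs}}}{\PX}\le 2\varepsilon/[\alpha(1-\varepsilon)]$. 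For the sampling term, Parseval in the RKHS gives $\mathbb{E}[\MMDtwo{Q_n}{P^{\mathrm{obs}}}\mid N_0]\le 1/N_0$ on $\{N_0\ge 1\}$, and combining the binomial moment inequality $\mathbb{E}[\mathbbm{1}(N_0\ge 1)/N_0]\le 2/(n\pi)$ (via $1/N_0\le 2/(N_0+1)$ and the closed form $\mathbb{E}[1/(N_0+1)]=[1-(1-\pi)^{n+1}]/[(n+1)\pi]$) with Jensen's inequality yields $\mathbb{E}[\MMD{Q_n}{P^{\mathrm{obs}}}]\le\sqrt{2/(n\pi)}\le\sqrt{2/[n\alpha(1-\varepsilon)]}$. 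The principal obstacle is precisely this sampling step: the random effective sample size $N_0\sim\mathrm{Bin}(n,\pi)$ prevents a naive $1/\sqrt{n\pi}$ plug-in and forces the binomial-moment computation above to obtain the sharp $\sqrt{2}$ constant (and to control the $\{N_0=0\}$ event). A smaller but essential subtlety is the $\min(\bar{q},1-\bar{q})\le 1/2$ step in the mechanism-bias term, without which the $\varepsilon/[\alpha(1-\varepsilon)]$ constant would inflate by a factor of two. Assembling the three bounds produces the announced inequality.
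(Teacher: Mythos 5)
Your proof is correct and follows essentially the same route as the paper's: the paper obtains this statement as a corollary of its Theorems \ref{thm_finite_main} and \ref{MMD_Robust_cont_finite}, whose proofs contain exactly your three ingredients --- the reduction of $\MMDn$ to the complete-data MMD estimator on the observed subsample, the triangle-inequality decomposition through $\mathbb{P}_{X\mid M=0}$ with the mean-absolute-deviation bound $\E_{X\sim\PX}\left[|\pi(X)-\pi|\right]/\pi\leq\varepsilon/\pi$, and the negative binomial moment bound $\E\left[1/N_0\right]\leq 2/(n\pi)$. Your total-variation phrasing of the mechanism-bias term is an equivalent rewriting of the paper's RKHS-norm computation, your treatment of $\{N_0=0\}$ is no less careful than the paper's own $\max(1,\cdot)$ convention, and your observation that the $4\epsilon$ is only forced when measuring distance to $\Pthetastar$ rather than to $\PX$ is accurate (the paper inherits the $4\epsilon$ from the proof of Theorem \ref{MMD_Robust_cont_finite}, which targets $\Pthetastar$).
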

An application of this result in the univariate Gaussian model would lead to a rate of order $\max(\varepsilon/\alpha(1-\varepsilon),\{n\alpha(1-\varepsilon)\}^{-1/2})$ (for $|\MMDn-\theta^*|$) when $\varepsilon = \epsilon$, which aligns with the minimax optimal (high-probability) rate established in Table 1 of \cite{MNARcontamination}. The two contamination settings are however different, since we are interested here in the separate contamination of the complete data marginal distribution and of the missingness mechanism, while \cite{MNARcontamination} rather focus on the contamination of the joint distribution. Nevertheless, while our contamination model for $\epsilon=\varepsilon$ does not directly correspond to the so-called \textit{arbitrary contamination} setting of \cite{MNARcontamination}, our framework coincides with their \textit{realizable contamination} scenario when $\epsilon=0$, meaning that $\PX = \Pthetastar$.
Interestingly, the authors highlighted a surprising result in this case - and which seems to be unique to the Gaussian model; they demonstrated that consistency towards $\theta^*$ can still be achieved as $n \to \infty$, even if $\varepsilon > 0$. They also provided a simple estimator - the average of the extreme values - that satisfies such consistency. This phenomenon is indeed surprising, and it comes at a cost: the convergence is logarithmic in the sample size, specifically ${\log(n\alpha(1-\varepsilon))}^{-1/2}$. While our MMD estimator does not achieve consistency towards $\theta^*$, its convergence rate towards $\MMDinf$ is in $\{(n\alpha(1-\varepsilon))\}^{-1/2}$, thus outperforming the average of extremes estimator in this regard for finite samples. The blue curves in Figure \ref{fig:illustration} illustrate this behavior for the two estimators in a Gaussian mean example: while the MMD quickly suffers an error of $\varepsilon=0.1$ as $n$ increases, the average of extremes moves very slowly to the true value $\theta^*=0$.

\vspace{0.2cm}
Furthermore, another advantage of our estimator is its robustness to adversarial contamination in the missingness mechanism, as established in the next subsection.


\subsection{Adversarial Contamination Mechanism} \label{Sec_AdversarialCont}

\begin{exm}
\label{exm_adver}
    Let us now introduce a more sophisticated adversarial contamination setting for the missingness mechanism in the one-dimensional case: from the dataset $\{X_i\}_i$ composed of i.i.d.\ copies of $X\sim\PX$, only a subsample of it $\{X_i:M_i=0\}_i$ is observed according to a MCAR missingness mechanism $\mathbb{P}[M=0|X=x]=\alpha$ for any $x$. However, the selected observed examples $M_i$'s have been modified by an adversary, and we finally observe $\{X_i:\widetilde{M}_i=0\}_i$ where at most $|\{i:M_i=0\}|\cdot\varepsilon/\alpha$ (with $0<\varepsilon<\alpha$) missing variables $\widetilde{M}_i$ are different from the original $M_i$. 
\end{exm}

The choice $|\{i:M_i=0\}|\cdot\varepsilon/\alpha$ with $\varepsilon<\alpha$ is such that the adversary cannot remove all the examples in the finally observed dataset by switching all $M_i$'s equal to $0$ to the value $\widetilde{M}_i=1$, and such that there are on average (over the sample $\{M_i\}_i$) at most $n\varepsilon$ contaminated values (recall that contamination is in the missing variable $M$). The following theorem provides a bound that is valid for any finite value of the sample size $n$: 

\begin{thm}
\label{cor_adver}
    Under the setting adopted in Example \ref{exm_adver}, assuming $\PX=(1-\epsilon)\Pthetastar+\epsilon\QX$, we have:
    $$    \E_{\mathcal{S}}\left[\mathbb{D}\left(\Pthetastar,P_{\MMDn}\right)\right] \leq 4\cdot\epsilon + \frac{6\cdot\varepsilon}{\alpha-\varepsilon} + \frac{2\sqrt{2}}{\sqrt{n\cdot\alpha}} .
    $$
\end{thm}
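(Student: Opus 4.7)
The plan is to adapt the finite-sample strategy of Theorems~\ref{thm_finite_main} and~\ref{MMD_Robust_cont_finite} by inserting an additional triangle inequality that isolates the adversarial perturbation of the mask. Writing $n_0=|\{i:\widetilde M_i=0\}|$ and $\widetilde P_n=\tfrac{1}{n_0}\sum_{i:\widetilde M_i=0}\delta_{X_i}$, the equivalence of argmins shows that $\MMDn$ minimises $\MMDtwo{P_\theta}{\widetilde P_n}$; testing against $\theta^*$ and applying the triangle inequality gives $\MMD{\Pthetastar}{P_{\MMDn}}\le 2\MMD{\Pthetastar}{\widetilde P_n}$. I then introduce two intermediate measures, the true marginal $\PX$ and the \emph{clean} MCAR empirical $P_n^c=\tfrac{1}{n_0^{(M)}}\sum_{i:M_i=0}\delta_{X_i}$ with $n_0^{(M)}=|\{i:M_i=0\}|$, and decompose $\MMD{\Pthetastar}{\widetilde P_n}\le\MMD{\Pthetastar}{\PX}+\MMD{\PX}{P_n^c}+\MMD{P_n^c}{\widetilde P_n}$, producing three pieces matching the three summands in the statement.

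The Huber piece $\MMD{\Pthetastar}{\PX}\le 2\epsilon$ follows from the convexity of $P\mapsto\Phi(P)$ and $\MMD{\cdot}{\cdot}\le 2$ (from $|k|\le 1$), contributing $4\epsilon$ after the outer factor. The estimation piece uses that, because the true mechanism is MCAR with constant rate $\alpha$, the samples $\{X_i:M_i=0\}$ are i.i.d.\ from $\PX$ conditionally on $n_0^{(M)}$; the standard variance identity then gives $\E[\MMDtwo{\PX}{P_n^c}\mid n_0^{(M)}]\le 2/n_0^{(M)}$, which combined with Jensen and a Binomial inverse-moment estimate $\E[1/n_0^{(M)}]\lesssim 1/(n\alpha)$ yields $\E_\mathcal{S}[\MMD{\PX}{P_n^c}]\le\sqrt{2/(n\alpha)}$ and hence the $2\sqrt{2}/\sqrt{n\alpha}$ summand. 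For the genuinely new adversarial piece, I partition the indices into $A=\{M_i=\widetilde M_i=0\}$, $B=\{M_i=1,\widetilde M_i=0\}$ (inserted by the adversary) and $C=\{M_i=0,\widetilde M_i=1\}$ (removed), and expand, with $\phi_i=k(X_i,\cdot)$,
\begin{align*}
\Phi(P_n^c)-\Phi(\widetilde P_n)=\Bigl(\tfrac{1}{|A|+|C|}-\tfrac{1}{|A|+|B|}\Bigr)\sum_{i\in A}\phi_i+\tfrac{1}{|A|+|C|}\sum_{i\in C}\phi_i-\tfrac{1}{|A|+|B|}\sum_{i\in B}\phi_i.
\end{align*}
Using $\|\phi_i\|_{\mathcal H}\le 1$, the budget $|B|+|C|\le(|A|+|C|)\varepsilon/\alpha$, and the fact that $\varepsilon<\alpha$ forces $|A|+|B|\ge(|A|+|C|)(\alpha-\varepsilon)/\alpha$, each of the three $\mathcal H$-norm terms is bounded by $\varepsilon/(\alpha-\varepsilon)$, giving $\MMD{P_n^c}{\widetilde P_n}\le 3\varepsilon/(\alpha-\varepsilon)$ \emph{deterministically} for any admissible adversary. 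Assembling the three contributions produces the stated bound $4\epsilon+6\varepsilon/(\alpha-\varepsilon)+2\sqrt{2}/\sqrt{n\alpha}$.

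The hard part will be the deterministic control of $\MMD{P_n^c}{\widetilde P_n}$ with the sharp denominator $\alpha-\varepsilon$: obtaining this tight rate (rather than the naive $\varepsilon/\alpha$) requires invoking the adversarial budget \emph{twice}---once on the numerators $|B|$ and $|C|$, and once on the denominator $|A|+|B|$ to prevent it from collapsing below a constant fraction of the clean observed size $|A|+|C|$. A secondary technical point is the careful handling of the random Binomial normaliser $n_0^{(M)}$ in the estimation term, which is performed exactly as in the proofs of Theorems~\ref{thm_finite_main} and~\ref{MMD_Robust_cont_finite}.
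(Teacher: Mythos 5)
Your proposal is correct and follows essentially the same route as the paper's proof: the same triangle-inequality decomposition through $\PX$, the clean MCAR empirical measure, and $\widetilde P_n$, the same $2\epsilon$ Huber bound, the same binomial inverse-moment argument for the $2\sqrt{2}/\sqrt{n\alpha}$ term, and the same deterministic bound $3\varepsilon/(\alpha-\varepsilon)$ on the adversarial perturbation (your three-way split into $A$, $B$, $C$ is just a slightly cleaner organization of the paper's term-by-term computation). No gaps.
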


Note that the rate is different between Huber's and adversarial contamination frameworks, which are by nature very different, and two terms vary: (i) First, the contamination error term, which slightly deteriorates and becomes $\varepsilon/(\alpha-\varepsilon)$ instead of $\varepsilon/\alpha(1-\varepsilon)$, and accounts for the expected number of outliers relative to the size of the uncontaminated observed dataset. In both settings, there are (on average) $n\varepsilon$ outliers, but while there are $n\alpha(1-\varepsilon)$ uncontaminated observed examples in Huber's model, this number becomes $n(\alpha-\varepsilon)$ in the adversarial setting. (ii) Second, the rate of convergence slightly improves and becomes $\{n\alpha\}^{-1/2}$ instead of $\{n\alpha(1-\varepsilon)\}^{-1/2}$. 
This difference arises because, in Huber's model, only a fraction $1-\varepsilon$ of the total number $n$ of data points has been generated following the MCAR mechanism of interest, making the effective MCAR sample size $n\alpha(1-\varepsilon)$. The contaminated fraction $\varepsilon$ follows a different generating process and does not contribute to the MCAR sample. This is not the case in the adversarial model, where all of the initially sampled data points come from the MCAR mechanism.  

\vspace{0.2cm}
Consequences of this subtle difference can be important in the study of minimax theory for population mean estimation problems, with completely different minimax rates when compared to Huber's setting. It is very unlikely that consistency towards $\theta^*$ is possible any longer in the realizable setting where $\PX=\Pthetastar$, and very likely that optimal estimators for Huber's model behave extremely bad in this adversarial model. For instance, the average of extreme procedure becomes a terrible estimator under adversarial contamination of the mechanism: it is sufficient for the adversary to remove the $\varepsilon/\alpha\cdot|\{i:M_i=0\}|-1$ smallest points and observe the largest one for a symmetric distribution to mislead the average of extremes estimator, no matter the value of the contamination ratio. At the opposite, the MMD estimator still behaves correctly in this setting, with an error which is (almost) linear in $\varepsilon$. The red curves in Figure \ref{fig:illustration} illustrate the behavior of the MLE, MMD and Average of Extremes Estimator: while the MMD is stable around $\varepsilon=0.1$, the average of extremes moves farther away from the true value of the mean as $n$ increases. The minimax optimality in the adversarial contamination setting is an open question.

\begin{figure}
    \centering
    \includegraphics[width=0.95\linewidth]{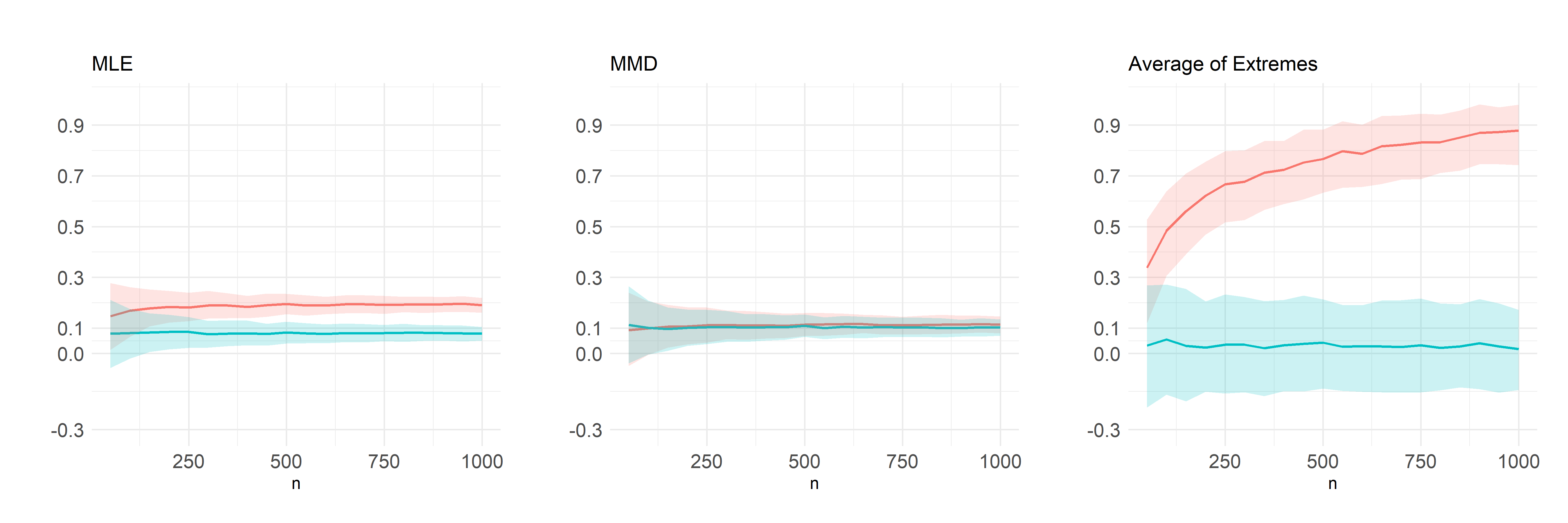}
    \caption{MLE, MMD (with Gaussian kernel) and Average of Extremes Estimator for a Gaussian mean estimation problem following Example \ref{exm_huber} (Huber contamination) in light blue and Example \ref{exm_adver} (Adversarial Contamination) in red. The shaded regions correspond to the empirical 50\% quantile range obtained by rerunning the experiment $1000$ times for each $n$. 
    In both settings, we simulate $X \sim N(0,1)$ without contamination. For the case of Huber contamination, the MCAR mechanism with $\alpha=0.5$ probability of missingness is contaminated by an MNAR mechanism for which $M=0$ iff $X > 0$. For the adversarial mechanism, each $M_i$ is first generated following an MCAR mechanism with $\alpha=0.5$ probability of missingness, and in a second step, the $M_i$'s corresponding to the $2\varepsilon\cdot|\{i:M_i=0\}|-1$ smallest values are set to 1, while the $M_i$ corresponding to the largest value is set to $0$. In both cases, $\varepsilon=0.1$ and $\epsilon=0$. 
    Implementation was performed in R with the help of the R package \texttt{regMMD} \citep{regMMD}.
    }
    \label{fig:illustration}
\end{figure}




\section{Simulation Study} \label{Sec_Simulation}

In this section we empirically study the problem of Gaussian mean estimation under missing values and contamination. The base data distribution is $N(\mathbf{0}, I_d)$, the Gaussian distribution with identity covariance matrix. Following \cite{BadrAlquierMMD} we consider $n=500$, $d=10$ and contaminate the Gaussian distribution with a fraction $\epsilon=0.2$ of different distributions, in particular Gaussians with different means and delta measures, as shown in Table \ref{tab:results_fullcontamination}. As the baseline missing mechanism, we consider the following (blockwise) MCAR mechanism: With probability $\alpha_{m_1}$, $(X_1, X_2, X_3)$ are missing, with probability $\alpha_{m_2}$, $(X_3, X_4, X_5)$ are missing and with probability $\alpha_{m_3}$, $(X_6, X_7, X_8)$ are missing. One could think of a measurement process, where the measurements itself are independent from each other, but if one fails, the others fail automatically. Finally with probability $\alpha_{m_4}$ all variables are observed. We choose $\alpha_{m_j}=0.25$ for all $j\in[4]$, so that around $1/4$ of draws are fully observed. We then contaminate this MCAR mechanism with a fraction $\varepsilon=0.2$ coming from a challenging self-censoring missingness mechanism. We adapt the self-censoring mechanism discussed in \citet{SelfCensoring}: we define $\beta=d^{-1/2}\mathbf{1}=[1/\sqrt{d},\cdots,1/\sqrt{d}]^{\top}$ and 
\begin{align*}
\mathbb{Q}[M=m_1 \mid X=x]&=\mathbb{Q}[M=m_4 \mid X=x]=F(\beta^{\top}x)/2\\
\mathbb{Q}[M=m_2 \mid X=x]&=\mathbb{Q}[M=m_3 \mid X=x]=1/2-F(\beta^{\top}x)/2.
\end{align*}
Here $F$ is the cdf of $\beta^{\top}X$, where $X$ follows the contaminated distribution. For instance, if we contaminate with $N(\mu, I_d)$, $F$ will be the cdf of a mixture of $\mathcal{N}(0,\beta^{\top} \beta)=\mathcal{N}(0,1)$ and  $\mathcal{N}(\beta^{\top}\mu,1)$. This choice of parameters ensures that $\mathbb{Q}[M=m \mid X=x]$ is a valid distribution and that $\mathbb{Q}[M=m_j]=\alpha_{m_j}$ for $j \in \{1,\ldots, 4\}$. As in Section \ref{Sec_Huber}, we then have $\PMmx=(1-\varepsilon) \alpha_m + \varepsilon \QMmx$, with $\varepsilon=0.2$. 

\vspace{0.2cm}
While mean estimation under contamination has attracted considerable research interest, less attention has been devoted to the setting with missing data. Notable recent exceptions include the approaches of \citet{Meanestimationmissing1, MNARcontamination}, although these methods appear primarily designed for theoretical exploration rather than practical implementation. As a baseline, we thus include the coordinate-wise median as a simple but robust comparator.
Under MCAR missingness, one can apply any robust method to the subset of fully observed data. We therefore incorporate the estimators ``evfiltering'' and ``QUE'' introduced in \citet{EVpruning0, EVpruning1} and \citet{QUE0}, respectively. Both methods rely on pruning observations using different criteria based on the assumed Gaussian distribution. We adopt the implementations from \citet{anderson2025robust}, who adapted these algorithms for high-dimensional, small-sample settings. These methods have been shown to perform well under various contamination scenarios, particularly for $n=500$, $d=500$.
Although we focus on $d=10$ here, both methods still perform strongly in the fully observed case. For our proposed MMD estimator, we use the Gaussian kernel with an implementation based on the \texttt{regMMD} R package \citep{regMMD}. The kernel bandwidth is selected via an adaptation of the median heuristic proposed by \citet{gretton2012optimal}.


\vspace{0.2cm}
Results for the root mean squared error (RMSE) are shown in Table \ref{tab:results_fullcontamination}. We note that in the first column, the Gaussian distribution is ``contaminated'' by the same distribution, corresponding to $\epsilon=0$. Thus, this setting corresponds to the reliasable contamination model in \citet{MNARcontamination}, with contamination in the missing value mechanism only. Based on their results, we expect the sample mean to perform well in this setting. Indeed, the MLE performs extremely well here, though the MMD is not too far behind with the same performance as the median. However, as the degree of contamination is increased, the performance of the mean deteriorates rapidly, while the MMD estimate remains stable. The only exception to this is the case of $\delta_{\mathbf{1}}$, which, as discussed in \cite{BadrAlquierMMD}, is a particularly bad contamination for MMD in this setting. Overall, the MMD is extremely competitive and in particular manages to beat the robust methods evfiltering and QUE. Although these methods are highly competitive, they only have access to around 1/4 of data points.

\vspace{0.2cm}
Compared to the similar simulation setting in \citet{BadrAlquierMMD} with complete data, we see that the RMSE is considerably higher throughout, indicating the increased difficulty of the problem. In particular, it appears that the effect of outliers on the mean worsens with missing data; while in Table 1 of \citet{BadrAlquierMMD} the RMSE of the mean is around 2 for the case of $N(\mathbf{10},I_d)$ and $\delta_{\mathbf{10}}$, it is more than 3 times higher in Table \ref{tab:resultsMCAR}. This is also the case without contamination of the MCAR mechanism, as shown in Appendix \ref{Sec_datacontamination}. Remarkably, the RMSE values of the MMD in Table~\ref{tab:results_fullcontamination} (with MNAR contamination) and in Table~\ref{tab:resultsMCAR} of Appendix~\ref{Sec_datacontamination} (with $\varepsilon=0$) are nearly identical, suggesting that the additional MNAR contamination has little to no impact on the MMD estimator.

\begin{table}[ht]
\centering
\caption{RMSE under contamination of both the data distribution and missingness mechanism ($\varepsilon=\epsilon=0.2$). Parentheses indicate estimated standard deviations.} 
\label{tab:results_fullcontamination}
\begin{tabular}{rrrrrr}
  \hline
 & $N(\mathbf{0},I_d)$  & $N(\mathbf{1},I_d)$  & $N(\mathbf{10},I_d)$ & $\delta_{\mathbf{1}}$ & $\delta_{\mathbf{10}}$ \\ 
  \hline
MMD & 0.2 & \textbf{0.56} & \textbf{0.25} & 0.83 & \textbf{0.26} \\ 
   & (0.06) & \textbf{(0.08)} & \textbf{(0.06)} & (0.09) & \textbf{(0.05)} \\ 
  MLE & \textbf{0.16} & 0.67 & 6.39 & \textbf{0.68} & 6.53 \\ 
   & \textbf{(0.03)} & (0.07) & (0.65) & \textbf{(0.07)} & (0.64) \\ 
  Median & 0.2 & 0.64 & 1.06 & 1.08 & 1.11 \\ 
   & (0.04) & (0.08) & (0.15) & (0.13) & (0.14) \\ 
  evfiltering & 0.3 & 0.86 & 0.36 & 0.89 & 0.37 \\ 
   & (0.06) & (0.15) & (0.11) & (0.13) & (0.09) \\ 
  QUE & 0.3 & 0.86 & 3.92 & 0.89 & 5.28 \\ 
   & (0.06) & (0.15) & (9.41) & (0.13) & (10.77) \\ 
   \hline
\end{tabular}
\end{table}



\section{Conclusion}
\label{Sec_conc}

In this paper, we adapted the concept of MMD estimation for missing values to obtain a parametric estimation procedure with provable robustness against misspecificiation of the data model as well as the missingness mechanism.

There are several directions for further improvements. Though we introduced a convenient SGD algorithm to use our methodology in practice, our considerations were mostly of theoretical nature, and more empirical analysis of the performance of the estimator should be done. In addition, there is significant room for further development of the algorithm itself by leveraging techniques exploited in the complete data setting. For instance, quasi-Monte Carlo methods, as explored in \cite{niu2023discrepancy,AlquierCopulas}, could enhance efficiency. Another promising improvement involves replacing standard gradient methods with natural gradient methods, as done in \cite{briol2019statistical}, where the estimation problem is reformulated as a gradient flow using the statistical Riemannian geometry induced by the MMD metric on the parameter space. Such a stochastic natural gradient algorithm could lead to substantial computational gains compared to traditional stochastic gradient descent.

Finally as also mentioned in the introduction, our approach might also be combined with weighting approaches for M-Estimators under missing values, which might render the estimator consistent under a weaker condition on the missingness mechanism than MCAR. However, in this case the robustness properties need to be reevaluated and the advantages of this approach would need to be carefully weighted against the more complex estimation procedure.

\section*{Acknowledgements}

This work is part of the DIGPHAT project which was supported by a grant from the French government, managed by the National Research Agency (ANR), under the France 2030 program, with reference ANR-22-PESN-0017. BECA acknowledges funding from the ANR grant project BACKUP ANR-23-CE40-0018-01.

{\small
\bibliographystyle{apalike}
\bibliography{bibfile}

\begin{thebibliography}{}

\bibitem[Alquier et~al., 2023]{AlquierCopulas}
Alquier, P., Ch{\'e}rief-Abdellatif, B.-E., Derumigny, A., and Fermanian, J.-D. (2023).
\newblock Estimation of copulas via maximum mean discrepancy.
\newblock {\em Journal of the American Statistical Association}, 118(543):1997--2012.

\bibitem[Alquier and Gerber, 2024a]{regMMD}
Alquier, P. and Gerber, M. (2024a).
\newblock {\em regMMD: Robust Regression and Estimation Through Maximum Mean Discrepancy Minimization}.
\newblock R package version 0.0.1.

\bibitem[Alquier and Gerber, 2024b]{AlquierGerberRegression}
Alquier, P. and Gerber, M. (2024b).
\newblock Universal robust regression via maximum mean discrepancy.
\newblock {\em Biometrika}, 111(1):71--92.

\bibitem[Anderson and Phillips, 2025]{anderson2025robust}
Anderson, C. and Phillips, J.~M. (2025).
\newblock Robust high-dimensional mean estimation with low data size, an empirical study.
\newblock {\em arXiv preprint arXiv:2502.11324}.

\bibitem[Bharti et~al., 2021]{bharti2021general}
Bharti, A., Briol, F.-X., and Pedersen, T. (2021).
\newblock A general method for calibrating stochastic radio channel models with kernels.
\newblock {\em {IEEE} transactions on antennas and propagation}, 70(6):3986--4001.

\bibitem[Bi{\'n}kowski et~al., 2018]{binkowski2018demystifying}
Bi{\'n}kowski, M., Sutherland, D.~J., Arbel, M., and Gretton, A. (2018).
\newblock Demystifying {MMD} {GAN}s.
\newblock In {\em International Conference on Learning Representations}.

\bibitem[Briol et~al., 2019]{briol2019statistical}
Briol, F.-X., Barp, A., Duncan, A.~B., and Girolami, M. (2019).
\newblock Statistical inference for generative models with maximum mean discrepancy.
\newblock {\em Preprint arXiv:1906.05944}.

\bibitem[Cantoni and {de Luna}, 2020]{CANTONI2020}
Cantoni, E. and {de Luna}, X. (2020).
\newblock Semiparametric inference with missing data: Robustness to outliers and model misspecification.
\newblock {\em Econometrics and Statistics}, 16:108--120.

\bibitem[Chao and Strawderman, 1972]{ChaoStrawderman1972}
Chao, M.~T. and Strawderman, W.~E. (1972).
\newblock Negative moments of positive random variables.
\newblock {\em Journal of the American Statistical Association}, 67(338):429--431.

\bibitem[Ch{\'e}rief-Abdellatif and Alquier, 2020]{cherief2020mmd}
Ch{\'e}rief-Abdellatif, B.-E. and Alquier, P. (2020).
\newblock {MMD-B}ayes: Robust bayesian estimation via maximum mean discrepancy.
\newblock In {\em Symposium on Advances in Approximate Bayesian Inference}, pages 1--21. PMLR.

\bibitem[Ch{\'e}rief-Abdellatif and Alquier, 2022]{BadrAlquierMMD}
Ch{\'e}rief-Abdellatif, B.-E. and Alquier, P. (2022).
\newblock Finite sample properties of parametric {MMD} estimation: robustness to misspecification and dependence.
\newblock {\em Bernoulli}, 28(1):181--213.

\bibitem[Daniel~Malinsky and Tchetgen, 2022]{Malinsky2022}
Daniel~Malinsky, I.~S. and Tchetgen, E. J.~T. (2022).
\newblock Semiparametric inference for nonmonotone missing-not-at-random data: The no self-censoring model.
\newblock {\em Journal of the American Statistical Association}, 117(539):1415--1423.

\bibitem[Dellaporta et~al., 2022]{dellaporta2022robust}
Dellaporta, C., Knoblauch, J., Damoulas, T., and Briol, F.-X. (2022).
\newblock Robust bayesian inference for simulator-based models via the {MMD} posterior bootstrap.
\newblock In {\em International Conference on Artificial Intelligence and Statistics}, pages 943--970. PMLR.

\bibitem[Diakonikolas et~al., 2019]{EVpruning1}
Diakonikolas, I., Kamath, G., Kane, D., Li, J., Moitra, A., and Stewart, A. (2019).
\newblock Robust estimators in high-dimensions without the computational intractability.
\newblock {\em SIAM Journal on Computing}, 48(2):742--864.

\bibitem[Diakonikolas et~al., 2017]{EVpruning0}
Diakonikolas, I., Kamath, G., Kane, D.~M., Li, J., Moitra, A., and Stewart, A. (2017).
\newblock Being robust (in high dimensions) can be practical.
\newblock In {\em International Conference on Machine Learning}, pages 999--1008. PMLR.

\bibitem[Diakonikolas and Kane, 2019]{diakonikolas2019recent}
Diakonikolas, I. and Kane, D.~M. (2019).
\newblock Recent advances in algorithmic high-dimensional robust statistics.
\newblock {\em arXiv preprint arXiv:1911.05911}.

\bibitem[Dong et~al., 2019]{QUE0}
Dong, Y., Hopkins, S.~B., and Li, J. (2019).
\newblock Quantum entropy scoring for fast robust mean estimation and improved outlier detection.
\newblock In {\em Thirty-third Conference on Neural Information Processing Systems (NeurIPS 2019)}.

\bibitem[Dussap et~al., 2023]{dussap2023label}
Dussap, B., Blanchard, G., and Ch{\'e}rief-Abdellatif, B.-E. (2023).
\newblock Label shift quantification with robustness guarantees via distribution feature matching.
\newblock In {\em Joint European Conference on Machine Learning and Knowledge Discovery in Databases}, pages 69--85. Springer.

\bibitem[Dziugaite et~al., 2015]{dziugaite2015training}
Dziugaite, G.~K., Roy, D.~M., and Ghahramani, Z. (2015).
\newblock Training generative neural networks via maximum mean discrepancy optimization.
\newblock {\em arXiv preprint arXiv:1505.03906}.

\bibitem[Farewell et~al., 2021]{whatismeant2}
Farewell, D.~M., Daniel, R.~M., and Seaman, S.~R. (2021).
\newblock {Missing at random: a stochastic process perspective}.
\newblock {\em Biometrika}, 109(1):227--241.

\bibitem[Frahm et~al., 2020]{Mestimatormissingvalues}
Frahm, G., Nordhausen, K., and Oja, H. (2020).
\newblock M-estimation with incomplete and dependent multivariate data.
\newblock {\em Journal of Multivariate Analysis}, 176:104569.

\bibitem[Frazier et~al., 2024]{frazier2024impact}
Frazier, D.~T., Knoblauch, J., and Drovandi, C. (2024).
\newblock The impact of loss estimation on gibbs measures.
\newblock {\em arXiv preprint arXiv:2404.15649}.

\bibitem[Golden et~al., 2019]{MLEMisspecificationunderMissing}
Golden, R.~M., Henley, S.~S., White, H., and Kashner, T.~M. (2019).
\newblock Consequences of model misspecification for maximum likelihood estimation with missing data.
\newblock {\em Econometrics}, 7(3).

\bibitem[Gretton et~al., 2007]{gretton2007kernel}
Gretton, A., Borgwardt, K., Rasch, M., Sch{\"o}lkopf, B., and Smola, A.~J. (2007).
\newblock A kernel method for the two-sample-problem.
\newblock In {\em Advances in neural information processing systems}, pages 513--520.

\bibitem[Gretton et~al., 2012a]{gretton2012kernel}
Gretton, A., Borgwardt, K.~M., Rasch, M.~J., Sch{\"o}lkopf, B., and Smola, A. (2012a).
\newblock A kernel two-sample test.
\newblock {\em The Journal of Machine Learning Research}, 13(1):723--773.

\bibitem[Gretton et~al., 2012b]{gretton2012optimal}
Gretton, A., Sejdinovic, D., Strathmann, H., Balakrishnan, S., Pontil, M., Fukumizu, K., and Sriperumbudur, B.~K. (2012b).
\newblock Optimal kernel choice for large-scale two-sample tests.
\newblock In {\em Advances in Neural Information Processing Systems}, pages 1205--1213.

\bibitem[Hsing and Eubank, 2015]{hilbertspacebook}
Hsing, T. and Eubank, R. (2015).
\newblock {\em Theoretical Foundations of Functional Data Analysis, with an Introduction to Linear Operators}.
\newblock Wiley Series in Probability and Statistics. Wiley.

\bibitem[Hu and Reingold, 2021]{Meanestimationmissing1}
Hu, L. and Reingold, O. (2021).
\newblock Robust mean estimation on highly incomplete data with arbitrary outliers.
\newblock In Banerjee, A. and Fukumizu, K., editors, {\em Proceedings of The 24th International Conference on Artificial Intelligence and Statistics}, volume 130 of {\em Proceedings of Machine Learning Research}, pages 1558--1566. PMLR.

\bibitem[Iyer et~al., 2014]{iyer2014maximum}
Iyer, A., Nath, S., and Sarawagi, S. (2014).
\newblock Maximum mean discrepancy for class ratio estimation: Convergence bounds and kernel selection.
\newblock In {\em International Conference on Machine Learning}, pages 530--538. PMLR.

\bibitem[Kajihara et~al., 2018]{kajihara2018kernel}
Kajihara, T., Kanagawa, M., Yamazaki, K., and Fukumizu, K. (2018).
\newblock Kernel recursive abc: Point estimation with intractable likelihood.
\newblock In {\em International Conference on Machine Learning}, pages 2400--2409. PMLR.

\bibitem[Key et~al., 2021]{key2021composite}
Key, O., Gretton, A., Briol, F.-X., and Fernandez, T. (2021).
\newblock Composite goodness-of-fit tests with kernels.
\newblock {\em arXiv preprint arXiv:2111.10275}.

\bibitem[Legramanti et~al., 2025]{legramanti2025concentration}
Legramanti, S., Durante, D., and Alquier, P. (2025).
\newblock Concentration of discrepancy-based approximate bayesian computation via rademacher complexity.
\newblock {\em The Annals of Statistics}, 53(1):37--60.

\bibitem[Li et~al., 2017]{li2017mmd}
Li, C.-L., Chang, W.-C., Cheng, Y., Yang, Y., and P{\'o}czos, B. (2017).
\newblock Mmd gan: Towards deeper understanding of moment matching network.
\newblock {\em Advances in neural information processing systems}, 30.

\bibitem[Li et~al., 2011]{inverseweightingoverview}
Li, L., Shen, C., Li, X., and Robins, J.~M. (2011).
\newblock On weighting approaches for missing data.
\newblock {\em Stat Methods Med Res}, 22(1):14--30.

\bibitem[Li et~al., 2015]{li2015generative}
Li, Y., Swersky, K., and Zemel, R. (2015).
\newblock Generative moment matching networks.
\newblock In {\em International conference on machine learning}, pages 1718--1727.

\bibitem[Loh, 2024]{loh2024theoretical}
Loh, P.-L. (2024).
\newblock A theoretical review of modern robust statistics.
\newblock {\em Annual Review of Statistics and Its Application}, 12.

\bibitem[Ma et~al., 2024]{MNARcontamination}
Ma, T., Verchand, K.~A., Berrett, T.~B., Wang, T., and Samworth, R.~J. (2024).
\newblock Estimation beyond missing (completely) at random.
\newblock {\em Preprint arXiv:2410.10704}.

\bibitem[Matabuena et~al., 2023]{TwosampleTestingMMD}
Matabuena, M., Félix, P., Ditzhaus, M., Vidal, J., and Gude, F. (2023).
\newblock Hypothesis testing for matched pairs with missing data by maximum mean discrepancy: An application to continuous glucose monitoring.
\newblock {\em The American Statistician}, 77(4):357--369.

\bibitem[Mealli and Rubin, 2015]{whatismeant3}
Mealli, F. and Rubin, D.~B. (2015).
\newblock {Clarifying missing at random and related definitions, and implications when coupled with exchangeability}.
\newblock {\em Biometrika}, 102(4):995--1000.

\bibitem[Miao et~al., 2016]{SelfCensoring}
Miao, W., Ding, P., and and, Z.~G. (2016).
\newblock Identifiability of normal and normal mixture models with nonignorable missing data.
\newblock {\em Journal of the American Statistical Association}, 111(516):1673--1683.

\bibitem[Mitrovic et~al., 2016]{mitrovic2016dr}
Mitrovic, J., Sejdinovic, D., and Teh, Y.-W. (2016).
\newblock {DR-ABC: Approximate Bayesian} computation with kernel-based distribution regression.
\newblock In {\em International Conference on Machine Learning}, pages 1482--1491.

\bibitem[Molenberghs et~al., 2008]{ourresult}
Molenberghs, G., Beunckens, C., Sotto, C., and Kenward, M.~G. (2008).
\newblock Every missingness not at random model has a missingness at random counterpart with equal fit.
\newblock {\em Journal of the Royal Statistical Society. Series B (Statistical Methodology)}, 70(2):371--388.

\bibitem[Muandet et~al., 2017]{kernelmeanembeddingreview}
Muandet, K., Fukumizu, K., Sriperumbudur, B., and Sch\"{o}lkopf, B. (2017).
\newblock Kernel mean embedding of distributions: A review and beyond.
\newblock {\em Foundations and Trends in Machine Learning}, 10(1--2):1--141.

\bibitem[Newey and McFadden, 1994]{NeweyMcFadden}
Newey, W.~K. and McFadden, D. (1994).
\newblock Large sample estimation and hypothesis testing.
\newblock {\em Handbook of econometrics}, 4:2111--2245.

\bibitem[Niu et~al., 2023]{niu2023discrepancy}
Niu, Z., Meier, J., and Briol, F.-X. (2023).
\newblock Discrepancy-based inference for intractable generative models using quasi-monte carlo.
\newblock {\em Electronic Journal of Statistics}, 17(1):1411--1456.

\bibitem[Näf et~al., 2024]{näf2024goodimputationmarmissingness}
Näf, J., Scornet, E., and Josse, J. (2024).
\newblock What is a good imputation under {MAR} missingness?
\newblock {\em Preprint arXiv:2403.19196}.

\bibitem[Oates, 2022]{oates2022minimum}
Oates, C.~J. (2022).
\newblock Minimum kernel discrepancy estimators.
\newblock In {\em International Conference on Monte Carlo and Quasi-Monte Carlo Methods in Scientific Computing}, pages 133--161.

\bibitem[Pacchiardi et~al., 2024]{pacchiardi2024generalized}
Pacchiardi, L., Khoo, S., and Dutta, R. (2024).
\newblock Generalized bayesian likelihood-free inference.
\newblock {\em Electronic Journal of Statistics}, 18(2):3628--3686.

\bibitem[Park et~al., 2016]{park2016k2}
Park, M., Jitkrittum, W., and Sejdinovic, D. (2016).
\newblock {K2-ABC: Approximate Bayesian} computation with kernel embeddings.
\newblock In {\em Artificial intelligence and statistics}, pages 398--407.

\bibitem[Ren et~al., 2023]{directcompetitor0}
Ren, B., Lipsitz, S.~R., Weiss, R.~D., and Fitzmaurice, G.~M. (2023).
\newblock Multiple imputation for non-monotone missing not at random data using the no self-censoring model.
\newblock {\em Stat Methods Med Res}, 32(10):1973--1993.

\bibitem[Robins and Gill, 1997]{Robins1997_ingorable}
Robins, J.~M. and Gill, R.~D. (1997).
\newblock Non-response models for the analysis of non-monotone ignorable missing data.
\newblock {\em Stat Med}, 16(1-3):39--56.

\bibitem[Rubin, 1976]{Rubin_Inferenceandmissing}
Rubin, D.~B. (1976).
\newblock {Inference and missing data}.
\newblock {\em Biometrika}, 63(3):581--592.

\bibitem[Rubin, 1996]{Rubin_Imputationafter18}
Rubin, D.~B. (1996).
\newblock Multiple imputation after 18+ years.
\newblock {\em Journal of the American Statistical Association}, 91(434):473--489.

\bibitem[Seaman et~al., 2013]{whatismeant1}
Seaman, S., Galati, J., Jackson, D., and Carlin, J. (2013).
\newblock {What is meant by “Missing at Random”?}
\newblock {\em Statistical Science}, 28(2):257--268.

\bibitem[Shen et~al., 2024]{shen2024prediction}
Shen, Z., Knoblauch, J., Power, S., Oates, C., et~al. (2024).
\newblock Prediction-centric uncertainty quantification via {MMD}.
\newblock {\em arXiv preprint arXiv:2410.11637}.

\bibitem[Shpitser, 2016]{Shpitser_2016}
Shpitser, I. (2016).
\newblock Consistent estimation of functions of data missing non-monotonically and not at random.
\newblock In {\em Advances in Neural Information Processing Systems}, volume~29.

\bibitem[Sportisse et~al., 2024]{Sportisse2024}
Sportisse, A., Marbac, M., Laporte, F., Celeux, G., Boyer, C., Josse, J., and Biernacki, C. (2024).
\newblock Model-based clustering with missing not at random data.
\newblock {\em Statistics and Computing}, 34(4):135.

\bibitem[Sun and Tchetgen, 2018]{MARinverseweighting}
Sun, B. and Tchetgen, E. J.~T. (2018).
\newblock On inverse probability weighting for nonmonotone missing at random data.
\newblock {\em Journal of the American Statistical Association}, 113(521):369--379.

\bibitem[Sutherland et~al., 2016]{sutherland2016generative}
Sutherland, D.~J., Tung, H.-Y., Strathmann, H., De, S., Ramdas, A., Smola, A., and Gretton, A. (2016).
\newblock Generative models and model criticism via optimized maximum mean discrepancy.
\newblock In {\em International Conference on Learning Representations}.

\bibitem[Takai and Kano, 2013]{MLEConsistencyunderMissing}
Takai, K. and Kano, Y. (2013).
\newblock Asymptotic inference with incomplete data.
\newblock {\em Communications in Statistics - Theory and Methods}, 42(17):3174--3190.

\bibitem[Teymur et~al., 2021]{teymur2021optimal}
Teymur, O., Gorham, J., Riabiz, M., and Oates, C. (2021).
\newblock Optimal quantisation of probability measures using maximum mean discrepancy.
\newblock In {\em International Conference on Artificial Intelligence and Statistics}, pages 1027--1035.

\bibitem[Van~Buuren, 2007]{FCS_Van_Buuren2007}
Van~Buuren, S. (2007).
\newblock Multiple imputation of discrete and continuous data by fully conditional specification.
\newblock {\em Stat Methods Med Res}, 16(3):219--242.

\bibitem[{Van Buuren}, 2018]{VANBUUREN2018}
{Van Buuren}, S. (2018).
\newblock {\em Flexible Imputation of Missing Data. Second Edition}.
\newblock Chapman \& Hall/CRC Press.

\bibitem[Zeng et~al., 2024]{zeng2024mmdtwosampletestingpresence}
Zeng, Y., Adams, N.~M., and Bodenham, D.~A. (2024).
\newblock Mmd two-sample testing in the presence of arbitrarily missing data.
\newblock {\em arXiv preprint arXiv:2405.15531}.

\bibitem[Zhang et~al., 2013]{zhang2013domain}
Zhang, K., Sch{\"o}lkopf, B., Muandet, K., and Wang, Z. (2013).
\newblock Domain adaptation under target and conditional shift.
\newblock In {\em International conference on machine learning}, pages 819--827.

\end{thebibliography}
}

\newpage

\appendix




    
\section{Additional Empirical Results}\label{Sec_AdditonalEmpirical}








\subsection{Gaussian Mean Estimation under MCAR missingness and Model Contamination}\label{Sec_datacontamination}

We consider here the same setting as in the main text, but with $\varepsilon=0$, i.e. the MCAR mechanism is not contaminated, such that $\Prob[M=m_{j} \mid X=x]= \alpha_{m_j}=0.25$ for $j \in \{1,\ldots, 4\}$. Again this leaves an expected fraction of 1/4 points fully observed. Results are shown in Table \ref{tab:resultsMCAR}. Even without MNAR contamination, it appears that the effect of outliers on the mean becomes worse with MCAR data; while in Table 1 of \citet{BadrAlquierMMD} the RMSE of the mean is around 2 for the case of $N(\mathbf{10},I_d)$ and $\delta_{\mathbf{10}}$, it is more than 3 times higher in Table \ref{tab:resultsMCAR}. Here, the first row corresponds to no contamination ($\epsilon=\varepsilon=0$). As with MNAR contamination, the MMD estimator is very competitive, only performing worse than the other robust methods in the notorious case of a $\delta_{\mathbf{1}}$ contamination. Notable, the RMSE values of the MMD in Table \ref{tab:resultsMCAR} are very close to the ones in Table \ref{tab:results_fullcontamination}, indicating the the MNAR contamination makes very little difference to the MMD.


\begin{table}[ht]
\centering
\caption{RMSE under contamination of the data distribution ($\epsilon=0.2$, $\varepsilon=0$). Estimated standard deviations are given in brackets.} 
\label{tab:resultsMCAR}
\begin{tabular}{rrrrrr}
  \hline
 & $N(\mathbf{0},I_d)$ & $N(\mathbf{1},I_d)$  & $N(\mathbf{10},I_d)$ & $\delta_{\mathbf{1}}$ & $\delta_{\mathbf{10}}$ \\ 
  \hline
  \hline
MMD & 0.21 & \textbf{0.57} & \textbf{0.25} & 0.84 & \textbf{0.24} \\ 
   & (0.05) & \textbf{(0.09)} & \textbf{(0.06)} & (0.12) & \textbf{(0.05)} \\ 
  MLE & \textbf{0.16} & 0.66 & 6.34 & \textbf{0.66} & 6.27 \\ 
   & \textbf{(0.04)} & (0.09) & (0.54) & \textbf{(0.08)} & (0.71) \\ 
  Median & 0.2 & 0.63 & 1.05 & 1.04 & 1.04 \\ 
   & (0.04) & (0.1) & (0.13) & (0.15) & (0.15) \\ 
  evfiltering & 0.28 & 0.7 & 0.33 & 0.67 & 0.33 \\ 
   & (0.06) & (0.14) & (0.08) & (0.13) & (0.06) \\ 
  QUE & 0.28 & 0.7 & 1.97 & 0.67 & 1.95 \\ 
   & (0.06) & (0.14) & (6.28) & (0.13) & (6.15) \\ 
   \hline
\end{tabular}
\end{table}

\subsection{Increased Fraction of Contamination}

Table \ref{tab:results_fullcontamination_03} presents the same results as in Section \ref{Sec_Simulation}, but with an increased contamination fraction $(\epsilon=\varepsilon=0.3)$. Similarly, Table \ref{tab:resultsMCAR_03} shows the results under MCAR with an increased data contamination ($\epsilon=0.3$, $\varepsilon=0$).

\begin{table}[ht]
\centering
\caption{RMSE under contamination of both the data distribution and missingness mechanism ($\varepsilon=\epsilon=0.3$). Estimated standard deviations are given in brackets.} 
\label{tab:results_fullcontamination_03}
\begin{tabular}{rrrrrr}
  \hline
 & $N(\mathbf{0},I_d)$  & $N(\mathbf{1},I_d)$  & $N(\mathbf{10},I_d)$ & $\delta_{\mathbf{1}}$ & $\delta_{\mathbf{10}}$ \\ 
  \hline
MMD & 0.21 & \textbf{0.86} & \textbf{0.3} & 1.28 & \textbf{0.29} \\ 
   & (0.05) & \textbf{(0.11)} & \textbf{(0.07)} & (0.11) & \textbf{(0.08)} \\ 
  MLE & \textbf{0.16} & 0.99 & 9.92 & \textbf{1} & 9.75 \\ 
   & \textbf{(0.03)} & (0.08) & (0.62) & \textbf{(0.08)} & (0.64) \\ 
  Median & 0.2 & 0.95 & 1.96 & 1.93 & 1.94 \\ 
   & (0.05) & (0.09) & (0.22) & (0.21) & (0.23) \\ 
  evfiltering & 0.32 & 1.24 & 0.49 & 1.31 & 0.55 \\ 
   & (0.07) & (0.16) & (0.17) & (0.17) & (0.17) \\ 
  QUE & 0.32 & 1.24 & 11.54 & 1.31 & 13.91 \\ 
   & (0.07) & (0.16) & (14.38) & (0.17) & (14.63) \\ 
   \hline
\end{tabular}
\end{table}

\begin{table}[ht]
\centering
\caption{RMSE under contamination of the data distribution ($\epsilon=0.3$, $\varepsilon=0$). Estimated standard deviations are given in brackets.} 
\label{tab:resultsMCAR_03}
\begin{tabular}{rrrrrr}
  \hline
 & $N(\mathbf{0},I_d)$  & $N(\mathbf{1},I_d)$  & $N(\mathbf{10},I_d)$ & $\delta_{\mathbf{1}}$ & $\delta_{\mathbf{10}}$ \\ 
  \hline
MMD & 0.21 & \textbf{0.88} & \textbf{0.3} & 1.3 & \textbf{0.29} \\ 
  2 & (0.04) & \textbf{(0.1)} & \textbf{(0.06)} & (0.12) & \textbf{(0.06)} \\ 
  mean & \textbf{0.16} & 0.96 & 9.4 & \textbf{0.97} & 9.57 \\ 
  4 & \textbf{(0.03)} & (0.08) & (0.76) & \textbf{(0.08)} & (0.73) \\ 
  median & 0.21 & 0.92 & 1.79 & 1.85 & 1.87 \\ 
  6 & (0.05) & (0.09) & (0.25) & (0.23) & (0.23) \\ 
  evpruning & 0.28 & 0.97 & 0.38 & 1 & 0.37 \\ 
  8 & (0.07) & (0.12) & (0.11) & (0.14) & (0.09) \\ 
  que & 0.28 & 0.97 & 7.53 & 1 & 8.11 \\ 
  10 & (0.07) & (0.12) & (12.35) & (0.14) & (12.64) \\ 
   \hline
\end{tabular}
\end{table}

\section{Proofs}

\subsection{Proofs from Section \ref{sec_asymp}}

In order to simplify the proofs, we first introduce the following notations:
$$
\widehat{L}_n(\theta)  = \frac{1}{n} \sum_{i=1}^n \ell(X_i,M_i;\theta) = \frac{1}{n} \sum_{i=1}^n \MMDtwo{P_\theta^{(M_i)}}{\delta_{\{X_i^{(M_i)}\}}} - \frac{1}{n} \sum_{i=1}^n \left\|\Phi(X_i^{(M_i)})\right\|_{\mathcal{H}}^2 
$$ 
and
$$
L(\theta) = \mathbb{E}_{(M,X)}\left[\ell(M,X;\theta)\right] = \mathbb{E}_{(X,M)}\left[ \MMDtwo{P_\theta^{(M)}}{\delta_{\{X^{(M)}\}}}\right] - \mathbb{E}_{(X,M)}\left[\left\|\Phi(X^{(M)})\right\|_{\mathcal{H}}^2\right] ,
$$
where 
$$
\ell(x,m;\theta) = \left\langle \Phi(P_\theta^{(m)}) , \Phi(P_\theta^{(m)}) \right\rangle_{\mathcal{H}} - 2 \left\langle \Phi(P_\theta^{(m)}) , \Phi(x^{(m)}) \right\rangle_{\mathcal{H}} . 
$$
Hence, we have
$$
\widehat{\theta}_n = \arg\min_{\theta\in\Theta} \widehat{L}_n(\theta)= \arg\min_{\theta\in\Theta} \frac{1}{n} \sum_{i=1}^n \MMDtwo{P_\theta^{(M_i)}}{\delta_{\{X_i^{(M_i)}\}} }
$$
and
$$
\MMDinf = \arg\min_{\theta\in\Theta} L(\theta) = \arg\min_{\theta\in\Theta} \E_{M}\left[\MMDm{2}{\PmargMtheta}{\PmargMXM} \right] .
$$

\subsubsection{Intermediate lemmas}

We first show an almost sure uniform law of large numbers. This is a remarkable result as it holds without any assumption on the model.

\begin{lemma}
\label{lemma_ULLN}
The empirical loss $\widehat{L_n}$ almost surely converges uniformly to $L$, i.e.\:
$$
\sup_{\theta\in\Theta} |\widehat{L_n}(\theta) - L(\theta)| \xrightarrow[\ n \to +\infty]{\Pjoint-\textnormal{a.s.}} 0 .
$$
\end{lemma}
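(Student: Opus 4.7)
The key observation is that boundedness of the kernel by $1$ gives $\|\Phi(P_\theta^{(m)})\|_{\mathcal{H}} \le 1$ and $\|\Phi(x^{(m)})\|_{\mathcal{H}} \le 1$ uniformly, so after an appropriate decomposition the $\theta$-dependence can be pulled out by Cauchy--Schwarz in $\mathcal{H}$ and no regularity of $\theta \mapsto P_\theta$ is needed. This is what makes the claim striking. My plan is to split $\widehat{L}_n(\theta)-L(\theta)$ into the two natural pieces appearing in $\ell$ and handle them separately, exploiting that $M$ lives in the finite set $\{0,1\}^d$.

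First, I would write
\[
\widehat{L}_n(\theta) - L(\theta) = A_n(\theta) - 2 B_n(\theta),
\]
where
\[
A_n(\theta) = \frac{1}{n}\sum_{i=1}^n \|\Phi(P_\theta^{(M_i)})\|_{\mathcal{H}}^2 - \mathbb{E}_M\!\left[\|\Phi(P_\theta^{(M)})\|_{\mathcal{H}}^2\right],
\]
\[
B_n(\theta) = \frac{1}{n}\sum_{i=1}^n \langle \Phi(P_\theta^{(M_i)}),\Phi(X_i^{(M_i)})\rangle_{\mathcal{H}} - \mathbb{E}_{(X,M)}\!\left[\langle \Phi(P_\theta^{(M)}),\Phi(X^{(M)})\rangle_{\mathcal{H}}\right].
\]
For $A_n(\theta)$, since $M_i \in \{0,1\}^d$ takes only finitely many values, grouping by pattern gives $A_n(\theta) = \sum_m \|\Phi(P_\theta^{(m)})\|_{\mathcal{H}}^2 (\hat p_m - p_m)$ where $\hat p_m = \frac{1}{n}\sum_i \mathbb{1}\{M_i=m\}$. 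Using $\|\Phi(P_\theta^{(m)})\|_{\mathcal{H}}^2 \le 1$, I bound $|A_n(\theta)| \le \sum_m |\hat p_m - p_m|$, which is a deterministic upper bound in $\theta$ and vanishes a.s.\ by the standard SLLN applied to each of the finitely many patterns.

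For $B_n(\theta)$, the same partition by pattern yields
\[
B_n(\theta) = \sum_m \left\langle \Phi(P_\theta^{(m)}),\; \tfrac{1}{n}\sum_{i:M_i=m} \Phi(X_i^{(m)}) - p_m \Phi(\PmargmXMm) \right\rangle_{\!\mathcal{H}},
\]
so Cauchy--Schwarz and $\|\Phi(P_\theta^{(m)})\|_{\mathcal{H}} \le 1$ give
\[
\sup_{\theta \in \Theta} |B_n(\theta)| \le \sum_m \Bigl\| \tfrac{1}{n}\sum_{i=1}^n \Phi(X_i^{(M_i)})\mathbb{1}\{M_i=m\} - p_m \Phi(\PmargmXMm) \Bigr\|_{\mathcal{H}}.
\]
Each summand is the norm-error of an empirical mean of i.i.d.\ Hilbert-valued random elements $\Phi(X_i^{(M_i)})\mathbb{1}\{M_i=m\}$ that are uniformly bounded by $1$ in $\mathcal{H}$; hence the strong law of large numbers in a separable Hilbert space (equivalently, Mourier's SLLN for Bochner-integrable random elements) yields a.s.\ convergence to $\mathbb{E}[\Phi(X^{(M)})\mathbb{1}\{M=m\}] = p_m \Phi(\PmargmXMm)$. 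Since the sum is finite, $\sup_\theta |B_n(\theta)| \to 0$ a.s.

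Combining the two bounds, $\sup_{\theta \in \Theta} |\widehat{L}_n(\theta) - L(\theta)| \le \sum_m |\hat p_m - p_m| + 2 \sum_m \|\cdot\|_{\mathcal{H}} \to 0$ almost surely, which gives the claim. The only nontrivial ingredient is the Hilbert-space SLLN, but it applies cleanly here because the kernel is bounded and $\mathcal{H}$ is separable; the rest is the algebraic decomposition and Cauchy--Schwarz. No continuity or differentiability of $\theta \mapsto \Phi(P_\theta^{(m)})$ is needed, which is why the lemma holds with no assumption on the model.
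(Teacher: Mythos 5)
Your proposal is correct and follows essentially the same route as the paper's proof: decompose $\widehat{L}_n - L$ into the quadratic and cross terms, group by the finitely many patterns $m$, pull out the $\theta$-dependence via Cauchy--Schwarz and the boundedness $\|\Phi(P_\theta^{(m)})\|_{\mathcal{H}}\le 1$, and conclude with the scalar SLLN for the pattern frequencies and the Hilbert-space SLLN for the empirical embeddings. The only (cosmetic) difference is that you explicitly invoke Mourier's SLLN for Bochner-integrable random elements, whereas the paper simply cites ``the strong law of large numbers'' for the same quantities.
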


\begin{proof}
The proof is straightforward. We start from:
\begin{align*}
    & \widehat{L_n}(\theta) - L(\theta) \\
    & = \frac{1}{n} \sum_{i=1}^n \left\langle \Phi(P_\theta^{(M_i)}) , \Phi(P_\theta^{(M_i)}) \right\rangle_{\mathcal{H}} - \mathbb{E}_{M} \left[ \left\langle \Phi(P_\theta^{(M)}) , \Phi(P_\theta^{(M)}) \right\rangle_{\mathcal{H}} \right] \\
    & \quad + 2 \cdot \mathbb{E}_{(X,M)} \left[ \left\langle \Phi(P_\theta^{(M)}) , \Phi(X^{(M)}) \right\rangle_{\mathcal{H}} \right] - \frac{2}{n} \sum_{i=1}^n \left\langle \Phi(P_\theta^{(M_i)}) , \Phi(X_i^{(M_i)}) \right\rangle_{\mathcal{H}} \\
    & = \sum_{m\in\{0,1\}^d} \left\{ \frac{1}{n} \sum_{i=1}^n \mathds{1}(M_i=m) - \mathbb{P}[M=m] \right\} \left\langle \Phi(P_\theta^{(m)}) , \Phi(P_\theta^{(m)}) \right\rangle_{\mathcal{H}} \\
    & \quad + 2 \sum_{m\in\{0,1\}^d} \left\{ \mathbb{P}[M=m] \left\langle \Phi(P_\theta^{(m)}) , \Phi(\PmargmXMm) \right\rangle_{\mathcal{H}} - \frac{1}{n} \sum_{i=1}^n \mathds{1}(M_i=m) \left\langle \Phi(P_\theta^{(m)}) , \Phi(X_i^{(m)}) \right\rangle_{\mathcal{H}} \right\} \\
   & = \sum_{m\in\{0,1\}^d} \left\{ \frac{1}{n} \sum_{i=1}^n \mathds{1}(M_i=m) - \mathbb{P}[M=m] \right\} \left\| \Phi(P_\theta^{(m)}) \right\|_{\mathcal{H}}^2 \\
    & \quad + 2 \sum_{m\in\{0,1\}^d} \left\langle \Phi(P_\theta^{(m)}) , \mathbb{P}[M=m] \Phi(\PmargmXMm) - \frac{1}{n} \sum_{i=1}^n \mathds{1}(M_i=m) \Phi(X_i^{(m)}) \right\rangle_{\mathcal{H}} .
\end{align*}
Then, using the triangle inequality, Cauchy-Schwarz's inequality and the boundedness of the kernel:
\begin{align*}
    & \sup_{\theta\in\Theta} \left| \widehat{L_n}(\theta) - L(\theta) \right| \\
    & \leq \sum_{m\in\{0,1\}^d} \left| \mathbb{P}[M=m] - \frac{1}{n} \sum_{i=1}^n \mathds{1}(M_i=m) \right| \sup_{\theta\in\Theta} \left\| \Phi(P_\theta^{(m)}) \right\|_{\mathcal{H}}^2 \\
    & \quad \quad \quad + 2 \sum_{m\in\{0,1\}^d} \sup_{\theta\in\Theta} \left\| \Phi(P_\theta^{(m)}) \right\|_{\mathcal{H}} \cdot \left\| \mathbb{P}[M=m] \Phi(\PmargmXMm) - \frac{1}{n} \sum_{i=1}^n \mathds{1}(M_i=m) \Phi(X_i^{(m)}) \right\|_{\mathcal{H}} \\
    & \leq \sum_{m} \underbrace{\left| \mathbb{P}[M=m] - \frac{1}{n} \sum_{i=1}^n \mathds{1}(M_i=m) \right|}_{\xrightarrow[\ n \to +\infty]{\Pjoint-\textnormal{a.s.}} 0} + 2 \sum_{m} \underbrace{\left\| \mathbb{P}[M=m] \Phi(\PmargmXMm) - \frac{1}{n} \sum_{i=1}^n \mathds{1}(M_i=m) \Phi(X_i^{(m)}) \right\|_{\mathcal{H}}}_{\xrightarrow[\ n \to +\infty]{\Pjoint-\textnormal{a.s.}} 0} ,
\end{align*}
where the two consistency properties in the last line are direct consequences of the strong law of large numbers. This concludes the proof.
\end{proof}

We also provide a uniform law of large numbers for second-order derivatives that will be useful to establish the asymptotic normality of the estimator.

\begin{lemma}\label{lemma_ULLN_2}
If Conditions \ref{cond_normality_4} and \ref{cond_normality_5} are fulfilled, then the Hessians of both the empirical loss $\widehat{L_n}$ and of $L$ exist on $K$, and each coefficient $\partial^2\widehat{L_n}/\partial\theta_k\partial\theta_j$ almost surely converges uniformly to $\partial^2L/\partial\theta_k\partial\theta_j$, i.e.\ for any pair $(j,k)$,
$$
\sup_{\theta\in K} \left| \frac{\partial^2\widehat{L_n}(\theta)}{\partial\theta_k\partial\theta_j} - \frac{\partial^2L(\theta)}{\partial\theta_k\partial\theta_j} \right| \xrightarrow[\ n \to +\infty]{\Pjoint-\textnormal{a.s.}} 0 .
$$
\end{lemma}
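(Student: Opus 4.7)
The plan is to mirror the argument used in the proof of Lemma \ref{lemma_ULLN}, but applied to the second-order derivatives of the loss. First, Condition \ref{cond_normality_4} --- which asserts that $\theta \mapsto \Phi(\Pmargmtheta)$ is twice (Fr\'echet) continuously differentiable on $O \supset K$ with interchangeability of integration and differentiation --- guarantees that both $\widehat{L_n}(\theta)$ and $L(\theta)$ are twice continuously differentiable on $K$, and that
$$
\frac{\partial^2 \widehat{L_n}(\theta)}{\partial\theta_k\partial\theta_j} = \frac{1}{n}\sum_{i=1}^n \frac{\partial^2 \ell(X_i,M_i;\theta)}{\partial\theta_k\partial\theta_j}, \qquad \frac{\partial^2 L(\theta)}{\partial\theta_k\partial\theta_j} = \mathbb{E}_{(X,M)}\!\left[\frac{\partial^2 \ell(X,M;\theta)}{\partial\theta_k\partial\theta_j}\right].
$$
From the expression $\ell(x,m;\theta) = \|\Phi(\Pmargmtheta)\|_{\mathcal{H}}^2 - 2\langle \Phi(\Pmargmtheta), \Phi(x^{(m)})\rangle_{\mathcal{H}}$, the product rule in $\mathcal{H}$ yields the pointwise formula
$$
\frac{\partial^2 \ell(x,m;\theta)}{\partial\theta_k\partial\theta_j} = 2\left\langle \frac{\partial \Phi(\Pmargmtheta)}{\partial\theta_j}, \frac{\partial \Phi(\Pmargmtheta)}{\partial\theta_k}\right\rangle_{\mathcal{H}} + 2\left\langle \Phi(\Pmargmtheta) - \Phi(x^{(m)}), \frac{\partial^2 \Phi(\Pmargmtheta)}{\partial\theta_k\partial\theta_j}\right\rangle_{\mathcal{H}}.
$$

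Next, I would stratify both Hessians by missingness pattern $m \in \{0,1\}^d$ and write their difference as a finite sum over $m$, in direct analogy with the decomposition used in the proof of Lemma \ref{lemma_ULLN}. The first type of summand is an empirical-minus-population pattern frequency $\widehat{p}_m - \PMm$, where $\widehat{p}_m = \tfrac{1}{n}\sum_i \mathbbm{1}(M_i=m)$, multiplied by the purely model-side quantity
$$
2\left\{\langle \partial_j\Phi(\Pmargmtheta), \partial_k\Phi(\Pmargmtheta)\rangle_{\mathcal{H}} + \langle \Phi(\Pmargmtheta), \partial_{k}\partial_{j}\Phi(\Pmargmtheta)\rangle_{\mathcal{H}}\right\}.
$$
The second type is a Hilbert-space inner product between $\partial_{k}\partial_{j}\Phi(\Pmargmtheta)$ and the empirical residual $\PMm\,\Phi(\PmargmXMm) - \tfrac{1}{n}\sum_i \mathbbm{1}(M_i=m)\Phi(X_i^{(m)})$. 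Applying the Cauchy--Schwarz inequality in $\mathcal{H}$ to each term and passing to $\sup_{\theta\in K}$, one bounds each summand by the product of a deterministic uniform-in-$\theta$ constant and a data-dependent factor that converges almost surely to zero by the strong law of large numbers (the first factor in $\mathbb{R}$, the second in $\mathcal{H}$), exactly as in Lemma \ref{lemma_ULLN}. Summing over the finitely many patterns $m$ then gives the claim.

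The only subtlety I expect is the extraction of the deterministic constants, which requires three uniform bounds on $K$: boundedness of $\|\Phi(\Pmargmtheta)\|_{\mathcal{H}}$ and $\|\Phi(x^{(m)})\|_{\mathcal{H}}$ (both by $1$, from boundedness of the kernel); boundedness of $\|\partial_{k}\partial_{j}\Phi(\Pmargmtheta)\|_{\mathcal{H}}$ (directly from Condition \ref{cond_normality_5}); and boundedness of $\|\partial_j\Phi(\Pmargmtheta)\|_{\mathcal{H}}$. The last bound is not stated explicitly but follows from Condition \ref{cond_normality_4}: twice continuous differentiability makes $\theta \mapsto \partial_j\Phi(\Pmargmtheta)$ continuous on the open neighbourhood $O$, and hence bounded on the compact $K \subset O$. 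With these three bounds in hand, the proof reduces cleanly to the same pattern-by-pattern strong law of large numbers argument used in Lemma \ref{lemma_ULLN}, making the technical obstacle essentially bookkeeping rather than anything substantively new.
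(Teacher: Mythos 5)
Your proposal is correct and follows essentially the same route as the paper: the same pattern-by-pattern decomposition of the Hessian difference into a frequency-deviation term and an $\mathcal{H}$-valued empirical-residual term, the same Cauchy--Schwarz bounds, and the same strong law of large numbers conclusion. The only (immaterial) difference is how the uniform bound on $\|\partial_j\Phi(\Pmargmtheta)\|_{\mathcal{H}}$ over $K$ is justified --- you use continuity of the first derivative on the compact $K$ from Condition \ref{cond_normality_4}, while the paper derives it from the uniform bound on the second derivative plus compactness; both are valid.
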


\begin{proof}
As previously, for any pair $(j,k)$:
\begin{align*}
    \sup_{\theta\in K} & \left| \frac{\partial^2\widehat{L_n}(\theta)}{\partial\theta_k\partial\theta_j} - \frac{\partial^2L(\theta)}{\partial\theta_k\partial\theta_j} \right| \\
    & = \sup_{\theta\in K} \bigg| \sum_{m\in\{0,1\}^d} \left( \mathbb{P}[M=m] - \frac{1}{n} \sum_{i=1}^n \mathds{1}(M_i=m) \right) 2 \left( \left\langle \frac{\partial^2\Phi(P_\theta^{(m)})}{\partial\theta_k\partial\theta_j} , \Phi(P_\theta^{(m)}) \right\rangle + \left\langle \frac{\partial\Phi(P_\theta^{(m)})}{\partial\theta_j} , \frac{\partial\Phi(P_\theta^{(m)})}{\partial\theta_k} \right\rangle \right) \\
    & \quad \quad \quad \quad  + 2 \sum_{m\in\{0,1\}^d} \left\langle \frac{\partial^2\Phi(P_\theta^{(m)})}{\partial\theta_k\partial\theta_j} , \mathbb{P}[M=m] \Phi(\PmargmXMm) - \frac{1}{n} \sum_{i=1}^n \mathds{1}(M_i=m) \Phi(X_i^{(m)}) \right\rangle_{\mathcal{H}} \bigg| \\
    & \leq 2 \sum_{m\in\{0,1\}^d} \underbrace{\left| \mathbb{P}[M=m] - \frac{1}{n} \sum_{i=1}^n \mathds{1}(M_i=m) \right|}_{\xrightarrow[\ n \to +\infty]{\Pjoint-\textnormal{a.s.}} 0}  \underbrace{\sup_{\theta\in K} \left\{ \left\| \frac{\partial^2\Phi(P_\theta^{(m)})}{\partial\theta_k\partial\theta_j} \right\| \cdot \left\| \Phi(P_\theta^{(m)}) \right\| + \left\| \frac{\partial\Phi(P_\theta^{(m)})}{\partial\theta_j} \right\| \cdot \left\| \frac{\partial\Phi(P_\theta^{(m)})}{\partial\theta_k} \right\| \right\}}_{<+\infty} \\
    & \quad \quad \quad \quad  + 2 \sum_{m\in\{0,1\}^d} \underbrace{\sup_{\theta\in K} \left\| \frac{\partial^2\Phi(P_\theta^{(m)})}{\partial\theta_k\partial\theta_j} \right\|_{\mathcal{H}}}_{<+\infty} \cdot \underbrace{\left\| \mathbb{P}[M=m] \Phi(\PmargmXMm) - \frac{1}{n} \sum_{i=1}^n \mathds{1}(M_i=m) \Phi(X_i^{(m)}) \right\|_{\mathcal{H}}}_{\xrightarrow[\ n \to +\infty]{\Pjoint-\textnormal{a.s.}} 0} 
\end{align*}
where we used the assumptions to ensure the existence of the involved quantities and the boundedness of the supremum in the last line. Note that the uniform boundedness of the first-order derivative follows from the uniform boundedness of the second-order derivative and the fact that $K$ is compact.
\end{proof}

\subsubsection{Consistency and asymptotic normality}

We can now prove the consistency of the MMD estimator which holds in one line:

\begin{proof}[Proof of Theorem \ref{thm_consistency}]
Invoking Theorem 2.1 in~\cite{NeweyMcFadden}, Assumption \ref{cond_consistency1} and \ref{cond_consistency2} along with Lemma \ref{lemma_ULLN} imply the strong consistency of the minimizer $\widehat{\theta}_n$ of $\widehat{L_n}$ towards the unique minimizer of $L$.
\end{proof}

The proof of asymptotic normality is a little longer.

\begin{proof}[Proof of Theorem \ref{thm_normality}]
According to Condition~\ref{cond_normality_4}, $\widehat{L_n}$ is twice differentiable on a neighborhood of $\MMDinf$.
Moreover, due to the the consistency of $\MMDn$ (\ref{cond_consistency1} + \ref{cond_consistency2} satisfied), we can assume that $\MMDn$ belongs to such a neighborhood for $n$ large enough. As \ref{cond_normality_3} holds, the first-order condition is
\begin{equation}
\label{second-order-condition}
    0 =  \nabla_\theta \widehat{L_n}(\MMDn) = \nabla_\theta \widehat{L_n}(\MMDinf) + \nabla^2_{\theta,\theta} \widehat{L_n}(\bar\theta_n) (\MMDn-\MMDinf),
\end{equation}
where $\bar\theta_{n}$ is a random vector whose components lie between those of $\MMDinf$ and $\MMDn$.
Let us now study the asymptotic behavior of the Hessian matrix $H_n=\nabla^2_{\theta,\theta} \widehat{L_n}(\bar\theta_n)$ and of $\nabla_\theta \widehat{L_n}(\MMDinf)$.


Consistency of $H_n$: Notice first that as $\bar\theta_{n}$ lies between $\MMDn$ and $\MMDinf$ componentwise, then $\bar\theta_{n} \rightarrow  \MMDinf$ ($P^*$-a.s.) as $n\rightarrow+\infty$ and we have existing second-order derivatives of $\widehat{L}_n$ at $\bar\theta_n\in K$ for $n$ large enough (which ensures the existence of $H_n$). For any pair $(j,k)$,
\begin{align*}
\bigg| \frac{\partial^2\widehat{L_n}(\bar\theta_n)}{\partial\theta_k\partial\theta_j} - \frac{\partial^2L(\MMDinf)}{\partial\theta_k\partial\theta_j} \bigg| & \leq \left| \frac{\partial^2\widehat{L_n}(\bar\theta_n)}{\partial\theta_k\partial\theta_j} - \frac{\partial^2L(\bar\theta_n)}{\partial\theta_k\partial\theta_j} \right| + \bigg| \frac{\partial^2L(\bar\theta_n)}{\partial\theta_k\partial\theta_j}  - \frac{\partial^2L(\MMDinf)}{\partial\theta_k\partial\theta_j} \bigg| \\
& \leq \, \sup_{\theta\in K} \left| \frac{\partial^2\widehat{L_n}(\theta)}{\partial\theta_k\partial\theta_j} - \frac{\partial^2L(\theta)}{\partial\theta_k\partial\theta_j} \right| + \bigg| \frac{\partial^2L(\bar\theta_n)}{\partial\theta_k\partial\theta_j}  - \frac{\partial^2L(\MMDinf)}{\partial\theta_k\partial\theta_j} \bigg| .
\end{align*}
According to Lemma \ref{lemma_ULLN_2} (\ref{cond_normality_4} and \ref{cond_normality_5} are satisfied), the first term in the bound is going to zero almost surely, as well as the second one due to the continuity of ${\partial^2L(\cdot;)}/{\partial\theta_k\partial\theta_j}$ at $\MMDinf$ \eqref{cond_normality_4}. Notice then that the definition of $B$ in \ref{cond_normality_6} matches $\nabla_{\theta,\theta}^2L(\MMDinf)$, 
and we obtain the almost sure convergence of the matrix $H_n \rightarrow B$ ($P^*$-a.s.) as $n\rightarrow+\infty$.

Asymptotic normality of $\nabla_\theta \widehat{L_n}(\MMDinf)$: We start by noticing that for any $\theta$, $j$,
$$
\frac{\partial\widehat{L_n}(\theta)}{\partial\theta_j} - \frac{\partial L(\theta)}{\partial\theta_j} = \frac{1}{n} \sum_{i=1}^n \frac{\partial\ell(X_i,M_i;\theta)}{\partial\theta_j}  - \frac{\partial\mathbb{E}_{(X,M)}\left[\ell(X,M;\theta)\right]}{\partial\theta_j} .
$$
and when evaluated in particular at the minimizer $\MMDinf\in\text{int}(\Theta)$ \eqref{cond_normality_3} of $L$:
$$
\frac{\partial\widehat{L_n}(\MMDinf)}{\partial\theta_j} = \frac{1}{n} \sum_{i=1}^n \frac{\partial\ell(X_i,M_i;\MMDinf)}{\partial\theta_j} - \mathbb{E}_{(X,M)}\left[\frac{\partial\ell(X,M;\MMDinf)}{\partial\theta_j}\right] .
$$
According to \eqref{cond_normality_7}, the variance of the gradient exists and is equal to
\begin{align*}
    &\mathbb{V}_{(X,M)\sim\Pjoint}\left[\nabla_\theta\ell(X,M;\MMDinf)\right] \\
    & = \mathbb{V}_{(X,M)}\left[\nabla_{\theta} \MMDtwo{P_{\MMDinf}^{(M)}}{\delta_{\{X^{(M)}\}}}\right] \\
    & = \mathbb{E}_{M\sim\PM}\left[\mathbb{V}_{X\sim\PXM}\left[\nabla_{\theta} \MMDtwo{P_{\MMDinf}^{(M)}}{\delta_{\{X^{(M)}\}}}\right]\right] + \mathbb{V}_{M\sim\PM}\left[\mathbb{E}_{X\sim\PXM}\left[\nabla_{\theta} \MMDtwo{P_{\MMDinf}^{(M)}}{\delta_{\{X^{(M)}\}}}\right]\right] \\
    & = \mathbb{E}_{M\sim\PM}\left[\mathbb{V}_{X\sim\PXM}\left[\nabla_{\theta} \MMDtwo{P_{\MMDinf}^{(M)}}{\delta_{\{X^{(M)}\}}}\right]\right] + \mathbb{V}_{M\sim\PM}\left[\nabla_{\theta} \MMDtwo{P_{\MMDinf}^{(M)}}{\PmargMXM}\right] \\
    & = \Sigma_1 + \Sigma_2 ,
\end{align*}
which then leads using the Central Limit Theorem to
$$
\sqrt{n} \nabla_\theta \widehat{L}_n(\MMDinf) \xrightarrow[n\rightarrow+\infty]{\mathcal{L}}  \mathcal{N}\left( 0 , \Sigma_1 + \Sigma_2 \right) .
$$

Finally, as $B$ is nonsingular, the matrix $H_n$ is a.s. invertible for a sufficiently large $n$, and using Slutsky’s lemma in Equation \ref{second-order-condition}, we get
$$
\sqrt{n} \left(\MMDn-\MMDinf\right) =  - H_n^{-1}\nabla_\theta \widehat{L_n}(\MMDn) \xrightarrow[n\rightarrow+\infty]{\mathcal{L}} \mathcal{N}\left(0,B^{-1}\Sigma_1 B^{-1} + B^{-1}\Sigma_2 B^{-1}\right) .
$$

\end{proof}

\subsection{Proofs from Section \ref{Sec_robust}}

\begin{proof}[Proof of Theorem \ref{MLE_Robust}]
We recall that $d=1$, that the distribution of interest is the actual data distribution $\PX=\Pthetastar$, and that we denote $\pi(X)=\mathbb{P}[M=0|X]$. We then have for any pattern $m$,
\begin{align*}
    \textnormal{KL}\left(\mathbb{P}_{X|M=0}\|\PX\right) & =  \textnormal{KL}\left(\mathbb{P}_{X|M=0}\|\PX\right) \\
    & = \E_{X\sim \mathbb{P}_{X|M=0}}\left[ \log\left(\frac{\mathrm{d}\mathbb{P}_{X|M=0}}{\mathrm{d}\PX}(X)\right) \right] \\
    & = \E_{X\sim \PX}\left[  \log\left( \frac{\pi(X)}{\pi} \cdot \frac{\mathrm{d}\PX}{\mathrm{d}\PX}(X) \right) \frac{\pi(X)}{\pi} \right] \\
    & = \E_{X\sim \PX}\left[  \log\left( \frac{\pi(X)}{\pi} \right) \frac{\pi(X)}{\pi} \right] \\
    & = \E_{X\sim \PX}\left[  \log\left( 1 + \frac{\pi(X)-\pi}{\pi} \right) \frac{\pi(X)}{\pi} \right] \\
    & \leq \E_{X\sim \PX}\left[  \left( \frac{\pi(X)-\pi}{\pi} \right) \frac{\pi(X)}{\pi} \right] \\
    & = \E_{X\sim \PX}\left[  \left( \frac{\pi(X)}{\pi} \right)^2 \right] - 1 \\
    & = \E_{X\sim \PX}\left[  \frac{\pi(X)^2-\pi^2}{\pi^2} \right] \\
    & = \frac{\mathbb{V}_{X\sim \PX}\left[\pi(X)\right]}{\pi^2} .
\end{align*}
Then using respectively the triangle inequality, Pinsker's inequality and the definition of $\MLEinf$:
\begin{align*}
    \textnormal{TV}\left(P_{\theta^*},P_{\theta^\textnormal{MLE}_\infty}\right) & \leq \textnormal{TV}\left(\mathbb{P}_{X|M=0},P_{\theta^*}\right) + \textnormal{TV}\left(\mathbb{P}_{X|M=0},P_{\theta^\textnormal{MLE}_\infty}\right) \\
    & \leq \sqrt{\frac{1}{2}\textnormal{KL}\left(\mathbb{P}_{X|M=0}\|P_{\theta^*}\right)} + \sqrt{\frac{1}{2}\textnormal{KL}\left(\mathbb{P}_{X|M=0}\|P_{\theta^\textnormal{MLE}_\infty}\right)} \\
    & \leq \sqrt{2\cdot\textnormal{KL}\left(\mathbb{P}_{X|M=0}\|\Pthetastar\right)} \\
    & = \sqrt{2\cdot\textnormal{KL}\left(\mathbb{P}_{X|M=0}\|\PX\right)} \\
    & = \sqrt{2\cdot\frac{\mathbb{V}_{X\sim \PX}\left[\pi(X)\right]}{\pi^2}} ,
\end{align*}
which ends the proof.
\end{proof}

\begin{proof}[Proof of Theorem \ref{MMD_Robust}]
Denoting $\pi_m(X)=\PMmX$ the missingness mechanism, 
we start from the definition of the MMD distance: for any pattern $m$,
\begin{align*}
    \mathbb{D}\left(\PmargmX,\PmargmXMm\right)
    & = \left\lVert \E_{X^{(m)}\sim\PmargmX} \left[ k(X^{(m)},\cdot) \right] - \E_{X^{(m)}\sim\PmargmXMm} \left[ k(X^{(m)},\cdot) \right] \right\rVert_{\mathcal{H}} \\
    & = \left\lVert \E_{X\sim\PX} \left[ k(X^{(m)},\cdot) \right] - \E_{X\sim\PXMm} \left[ k(X^{(m)},\cdot) \right] \right\rVert_{\mathcal{H}} \\
    & = \left\lVert \E_{X\sim\PX} \left[ k(X^{(m)},\cdot) \right] - \E_{X\sim \PX} \left[ \frac{\pi_m(X)}{\pi_m} k(X^{(m)},\cdot) \right] \right\rVert_{\mathcal{H}} \\
    & = \left\lVert \E_{X\sim \PX} \left[ \frac{\pi_m(X)-\pi_m}{\pi_m} \cdot k(X^{(m)},\cdot) \right] \right\rVert_{\mathcal{H}} \\
    & \leq \E_{X\sim \PX} \left[ \frac{\left|\pi_m(X)-\pi_m\right|}{\pi_m} \cdot \left\lVert k(X^{(m)},\cdot) \right\rVert_{\mathcal{H}} \right]  \\
    & \leq \frac{\E_{X\sim \PX} \left[ \left|\pi_m(X)-\pi_m\right| \right]}{\pi_m} \\
    & \leq \frac{\sqrt{\mathbb{V}_{X\sim \PX} \left[ \pi_m(X) \right]}}{\pi_m} .
\end{align*}
In the case where $\PX=\Pthetastar$, we can conclude using the triangle inequality and the definition of $\MMDinf$:
\begin{align*}
    \E_{M \sim \PM}\left[\mathbb{D}^2\left(P_{\MMDinf}^{(M)},P_{\theta^*}^{(M)}\right)\right] & \leq 2 \E_{M \sim \PM}\left[\mathbb{D}^2\left(P_{\MMDinf}^{(M)},\PmargMXM\right)\right] + 2 \E_{M \sim \PM}\left[\mathbb{D}^2\left(\PmargMXM,\PmargMthetastar\right)\right] \\
    & \leq 2 \E_{M \sim \PM}\left[\mathbb{D}^2\left(P_{\theta^*}^{(M)},\PmargMXM\right)\right] + 2 \E_{M \sim \PM}\left[\mathbb{D}^2\left(\PmargMXM,\PmargMthetastar\right)\right] \\
    & = 4 \E_{M \sim \PM}\left[\mathbb{D}^2\left(\PmargMX,\PmargMXM\right)\right] \\
    & \leq 4 \cdot \E_{M \sim \PM}\left[ \frac{\mathbb{V}_{X\sim \PX} \left[ \pi_M(X) \right]}{\pi_M^2} \right] .
\end{align*}
In the misspecified setting, we have for any value of $\theta$:
\begin{align*}
    \E_{M \sim \PM}\left[\mathbb{D}^2\left(P_{\MMDinf}^{(M)},\PmargMX\right)\right] & \leq 2 \E_{M \sim \PM}\left[\mathbb{D}^2\left(P_{\MMDinf}^{(M)},\PmargMXM\right)\right] + 2 \E_{M \sim \PM}\left[\mathbb{D}^2\left(\PmargMXM,\PmargMX\right)\right] \\
    & \leq 2 \E_{M \sim \PM}\left[\mathbb{D}^2\left(P_{\theta}^{(M)},\PmargMXM\right)\right] + 2 \E_{M \sim \PM}\left[\mathbb{D}^2\left(\PmargMXM,\PmargMX\right)\right] \\
    & \leq 2 \E_{M \sim \PM}\left[\mathbb{D}^2\left(P_{\theta}^{(M)},\PmargMX\right)\right] + 4 \E_{M \sim \PM}\left[\mathbb{D}^2\left(\PmargMXM,\PmargMX\right)\right] \\
    & \leq 2 \cdot \E_{M \sim \PM}\left[\mathbb{D}^2\left(\PmargMtheta,\PmargMX\right)\right] \\
    & \quad \quad + 4 \cdot \E_{M \sim \PM}\left[ \frac{\mathbb{V}_{X\sim \PX} \left[ \pi_M(X) \right]}{\pi_M^2} \right] .
\end{align*}
and taking the infimum in the r.h.s.\ over $\theta\in\Theta$ ends the proof. 
\end{proof}

\begin{proof}[Proof of Theorem \ref{thm_finite_main}] First observe that by definition:
$$
\MMDn = \argmax_{\theta\in\Theta} \mathbb{D}\left(P_n,P_{\theta}\right) \quad \textnormal{where} \quad P_n = \frac{1}{|\{i:M_i=0\}|} \sum_{i:M_i=0} \delta_{\{X_i\}} .
$$
We implicitly assume from now on that the quantity $|\{i:M_i=0\}|$ in the denominator is actually equal to $\max(1,|\{i:M_i=0\}|)$ so that $P_n$ is well-defined.
We then have for any $\theta\in\Theta$:
\begin{align*}
    \mathbb{D}\left(P_{\MMDn},\PX\right) & \leq \mathbb{D}\left(P_{\MMDn},P_n\right) + \mathbb{D}\left(\PX,P_n\right) \\
    & \leq \mathbb{D}\left(P_\theta,P_n\right) +  \mathbb{D}\left(\PX,P_n\right) \\
    & \leq \mathbb{D}\left(P_\theta,\PX\right) + 2\mathbb{D}\left(\PX,P_n\right) \\
    & \leq \mathbb{D}\left(P_\theta,\PX\right) + 2\mathbb{D}\left(\PX,\mathbb{P}_{X|M=0}\right) + 2\mathbb{D}\left(\mathbb{P}_{X|M=0},P_n\right) \\
    & \leq \mathbb{D}\left(P_\theta,\PX\right) + \frac{2\sqrt{\mathbb{V}_{X\sim \PX}\left[\pi(X)\right]}}{\pi} + 2\mathbb{D}\left(\mathbb{P}_{X|M=0},P_n\right) ,
\end{align*}
where the bound on the second term directly follows from the proof of Theorem \ref{MMD_Robust}. The third term can be controlled in expectation as follows. We first write:
\begin{align*}
    \mathbb{D}^2\left(\mathbb{P}_{X|M=0},P_n\right) & = \left\lVert \E_{X\sim\mathbb{P}_{X|M=0}}\left[\Phi(X)\right] - \frac{1}{|\{i:M_i=0\}|} \sum_{i:M_i=0} \Phi(X_i) \right\rVert_{\mathcal{H}}^2 \\
    & = \left\lVert \frac{1}{|\{i:M_i=0\}|} \sum_{i:M_i=0} \left\{ \Phi(X_i) - \E_{X\sim\mathbb{P}_{X|M=0}}\left[\Phi(X)\right] \right\} \right\rVert_{\mathcal{H}}^2 \\
    & = \frac{1}{|\{i:M_i=0\}|^2} \sum_{i:M_i=0} \left\lVert \Phi(X_i) - \E_{X\sim\mathbb{P}_{X|M=0}}\left[\Phi(X)\right] \right\rVert_{\mathcal{H}}^2 \\
    & + \frac{2}{|\{i:M_i=0\}|^2} \sum_{i<j:M_i=M_j=0} \left\langle \Phi(X_i) - \E_{X\sim\mathbb{P}_{X|M=0}}\left[\Phi(X)\right],\Phi(X_j) - \E_{X\sim\mathbb{P}_{X|M=0}}\left[\Phi(X)\right] \right\rangle_{\mathcal{H}} .
\end{align*}
The key point now lies in the fact that conditional on $\{M_i\}_{1\leq i \leq n}$, all the $X_i$'s such that $M_i=0$ are i.i.d.\ with common distribution $\mathbb{P}_{X|M=0}$. Then, in expectation over the sample $\mathcal{S}=\{(X_i,M_i)\}_{1\leq i \leq n}$, the first term in the last line becomes:
\begin{align*}
    \E_{\mathcal{S}} \bigg[ & \frac{1}{|\{i:M_i=0\}|^2} \sum_{i:M_i=0} \left\lVert \Phi(X_i) - \E_{X\sim\mathbb{P}_{X|M=0}}\left[\Phi(X)\right] \right\rVert_{\mathcal{H}}^2\bigg] \\
    & = \E_{\{M_i\}_{1\leq i \leq n}\sim\PM} \left[ \frac{1}{|\{i:M_i=0\}|^2} \sum_{i:M_i=0} \E_{X_i\sim\mathbb{P}_{X|M=0}}\left[ \left\lVert \Phi(X_i) - \E_{X\sim\mathbb{P}_{X|M=0}}\left[\Phi(X)\right] \right\rVert_{\mathcal{H}}^2\right] \right] \\
    & = \E_{\{M_i\}_{1\leq i \leq n}\sim\PM} \left[ \frac{1}{|\{i:M_i=0\}|} \cdot \E_{X\sim\mathbb{P}_{X|M=0}}\left[ \left\lVert \Phi(X) - \E_{X\sim\mathbb{P}_{X|M=0}}\left[\Phi(X)\right] \right\rVert_{\mathcal{H}}^2\right] \right] \\
    & \leq \E_{\{M_i\}_{1\leq i \leq n}\sim\PM} \left[ \frac{1}{|\{i:M_i=0\}|} \cdot \E_{X\sim\mathbb{P}_{X|M=0}}\left[ \left\lVert \Phi(X) \right\rVert_{\mathcal{H}}^2\right] \right] \\
    & \leq \E_{\{M_i\}_{1\leq i \leq n}\sim\PM} \left[ \frac{1}{|\{i:M_i=0\}|} \right] \\
    & \leq \frac{2}{n\cdot\mathbb{P}[M=0]} ,
\end{align*}
where the first inequality comes from the standard inequality $\textnormal{Variance}\leq\textnormal{Second order moment}$, the second one from the boundedness of the kernel, and the last one as a standard bound on the negative moments of a binomial random variable \citep{ChaoStrawderman1972} ($\E[1/(1+\textnormal{Bin(n,p)})]\leq1/np$). Note that we also used the inequality $1/x\leq2/(1+x)$.

\vspace{0.2cm}
Similarly, we have:
\begin{align*}
    \E_{\mathcal{S}} \bigg[ & \frac{2}{|\{i:M_i=0\}|^2} \sum_{i<j:M_i=M_j=0} \left\langle \Phi(X_i) - \E_{X\sim\mathbb{P}_{X|M=0}}\left[\Phi(X)\right],\Phi(X_j) - \E_{X\sim\mathbb{P}_{X|M=0}}\left[\Phi(X)\right] \right\rangle_{\mathcal{H}} \bigg] \\
    & = \E_{\{M_i\}_{1\leq i \leq n}\sim\PM} \bigg[ \frac{2}{|\{i:M_i=0\}|^2} \sum_{i<j:M_i=M_j=0} \\
    & \quad \quad \left\langle \E_{X_i\sim\mathbb{P}_{X|M=0}}\left[\Phi(X_i) - \E_{X\sim\mathbb{P}_{X|M=0}}\left[\Phi(X)\right]\right], \E_{X_j\sim\mathbb{P}_{X|M=0}}\left[\Phi(X_j) - \E_{X\sim\mathbb{P}_{X|M=0}}\left[\Phi(X)\right]\right] \right\rangle_{\mathcal{H}} \bigg] \\
    & = \E_{\{M_i\}_{1\leq i \leq n}\sim\PM} \bigg[ \frac{2}{|\{i:M_i=0\}|^2} \sum_{i<j:M_i=M_j=0} \left\langle 0, 0 \right\rangle_{\mathcal{H}} \bigg] \\
    & = 0 .
\end{align*}

This finally provides
\begin{align*}
    \E_{\mathcal{S}}\left[ \mathbb{D}\left(P_{\MMDn},\PX\right) \right] & \leq \mathbb{D}\left(\Pthetastar,\PX\right) + \frac{2\sqrt{\mathbb{V}_{X\sim \PX}\left[\pi(X)\right]}}{\pi} + 2\E_{\mathcal{S}}\left[\mathbb{D}\left(\mathbb{P}_{X|M=0},P_n\right)\right] \\
    & \leq \mathbb{D}\left(\Pthetastar,\PX\right) + \frac{2\sqrt{\mathbb{V}_{X\sim \PX}\left[\pi(X)\right]}}{\pi} + 2\sqrt{\E_{\mathcal{S}}\left[\mathbb{D}^2\left(\mathbb{P}_{X|M=0},P_n\right)\right]} \\
    & \leq \mathbb{D}\left(\Pthetastar,\PX\right) + \frac{2\sqrt{\mathbb{V}_{X\sim \PX}\left[\pi(X)\right]}}{\pi} + 2\sqrt{\frac{2}{n\cdot\mathbb{P}[M=0]}} .
\end{align*}
\end{proof}

\begin{proof}[Proof of Theorem \ref{MMD_Robust_cont}]
Theorem \ref{MMD_Robust} (using the squared absolute mean deviation $\mathbb{M}$ instead of the variance $\mathbb{V}$) leads to:
\begin{align*}
    &\E_{M \sim \PM}\left[\mathbb{D}^2\left(P_{\MMDinf}^{(M)},\PmargMthetastar\right)\right] \leq 2 \E_{M \sim \PM}\left[\mathbb{D}^2\left(P_{\MMDinf}^{(M)},\PmargMX\right)\right] + 2 \E_{M \sim \PM}\left[\mathbb{D}^2\left(\PmargMX,\PmargMthetastar\right)\right] \\
    & \leq 2 \left(2 \cdot \inf_{\theta\in\Theta}\E_{M \sim \PM}\left[\mathbb{D}^2\left(P_{\theta}^{(M)},\PmargMX\right)\right] + 4 \cdot \E_{M \sim \PM}\left[ \frac{\mathbb{M}_{X\sim \PX} \left[ \pi_M(X) \right]}{\mathbb{E}_{X\sim \PX} \left[ \pi_M(X) \right]^2} \right] \right) + 2 \E_{M \sim \PM}\left[\mathbb{D}^2\left(\PmargMX,\PmargMthetastar\right)\right] \\
    & = 4 \cdot \inf_{\theta\in\Theta}\E_{M \sim \PM}\left[\mathbb{D}^2\left(P_{\theta}^{(M)},\PmargMX\right)\right] + 8 \cdot \E_{M \sim \PM}\left[ \frac{\mathbb{M}_{X\sim \PX} \left[ \pi_M(X) \right]}{\mathbb{E}_{X\sim \PX} \left[ \pi_M(X) \right]^2} \right] + 2 \E_{M \sim \PM}\left[\mathbb{D}^2\left(\PmargMX,\PmargMthetastar\right)\right] \\
    & \leq 6 \cdot \E_{M \sim \PM}\left[\mathbb{D}^2\left(\PmargMthetastar,\PmargMX\right)\right] + 8 \cdot \E_{M \sim \PM}\left[ \frac{\mathbb{M}_{X\sim \PX} \left[ \pi_M(X) \right]}{\mathbb{E}_{X\sim \PX} \left[ \pi_M(X) \right]^2} \right] .
\end{align*}
The first term is bounded as follows:
\begin{align*}
    \E&_{M\sim \PM}\left[\mathbb{D}^2\left(\PmargMthetastar,\PmargMX\right)\right] = \E_{M\sim \PM}\left[\mathbb{D}^2\left(\PmargMthetastar,(1-\epsilon)\PmargMthetastar+\epsilon\mathbb{Q}_{X}^{(M)}\right)\right] \\
    & = \E_{M\sim \PM}\left[\bigg\lVert \E_{X\sim P_{\theta^*}} \left[ k(X^{(M)},\cdot) \right] - \bigg\{ (1-\epsilon) \E_{X\sim P_{\theta^*}} \left[ k(X^{(M)},\cdot) \right] + \epsilon \E_{X\sim \QX} \left[ k(X^{(M)},\cdot) \right] \bigg\} \bigg\rVert_{\mathcal{H}}^2\right] \\
    & = \E_{M\sim \PM}\left[\bigg\lVert \epsilon \bigg\{ \E_{X\sim P_{\theta^*}} \left[ k(X^{(M)},\cdot) \right] - \E_{X\sim \QX} \left[ k(X^{(M)},\cdot) \right] \bigg\} \bigg\rVert_{\mathcal{H}}^2\right]\\
    & \leq \epsilon^2 \cdot \E_{M\sim \PM}\left[\left(\big\lVert \E_{X\sim P_{\theta^*}} \left[ k(X^{(M)},\cdot) \right] \big\rVert_{\mathcal{H}} + \big\lVert \E_{X\sim \QX} \left[ k(X^{(M)},\cdot) \right] \big\rVert_{\mathcal{H}}\right)^2\right] \\
    & \leq 4 \epsilon^2 ,
\end{align*}
while the other variance term is bounded as in the previous proof above: with the notation $\pi_m^*=\E_{X\sim P_{\theta^*}}\left[\pi_m(X)\right]$, we have for any pattern $m$,
$$
\frac{\mathbb{M}_{X\sim \PX} \left[ \pi_M(X) \right]}{\mathbb{E}_{X\sim \PX} \left[ \pi_M(X) \right]^2} = \frac{\mathbb{M}_{X\sim(1-\epsilon) P_{\theta^*} + \epsilon \mathbb{Q}_X} \left[ \pi_M(X) \right]}{\mathbb{E}_{X\sim \PX} \left[ \pi_M(X) \right]^2}  \leq \frac{2(1-\epsilon)^2\mathbb{M}_{X\sim \Pthetastar} \left[ \pi_M(X) \right]+8\epsilon^2}{\mathbb{E}_{X\sim \PX} \left[ \pi_M(X) \right]^2} ,
$$
since for any measurable function $f\in(0,1)$, $\mathbb{E}_{X\sim(1-\epsilon) P_{\theta^*} + \epsilon \mathbb{Q}_X}[f(X)] = (1-\epsilon) \mathbb{E}_{X\sim P_{\theta^*} }[f(X)]  + \epsilon \mathbb{E}_{X\sim \mathbb{Q}_X}[f(X)]$ and 
\begin{align*}
\mathbb{M}^{1/2}_{X\sim(1-\epsilon) P_{\theta^*} + \epsilon \mathbb{Q}_X}\left[f(X)\right] & = \mathbb{E}_{X\sim(1-\epsilon) P_{\theta^*} + \epsilon \mathbb{Q}_X}\left[\big|f(X)-\{(1-\epsilon) \mathbb{E}_{X\sim P_{\theta^*} }[f(X)]  + \epsilon \mathbb{E}_{X\sim \mathbb{Q}_X}[f(X)]\}\big|\right] \\
& = \mathbb{E}_{X\sim(1-\epsilon) P_{\theta^*} + \epsilon \mathbb{Q}_X}\left[\big|\left\{ f(X)-\mathbb{E}_{X\sim P_{\theta^*} }[f(X)]  \right\} + \epsilon \left\{ \mathbb{E}_{X\sim P_{\theta^*}}[f(X)] - \mathbb{E}_{X\sim \mathbb{Q}_X}[f(X)] \right\} \big|\right] \\
& \leq \mathbb{E}_{X\sim(1-\epsilon) P_{\theta^*} + \epsilon \mathbb{Q}_X}\left[\big|f(X)- \mathbb{E}_{X\sim P_{\theta^*} }[ f(X)]\big|\right] + \epsilon \big| \mathbb{E}_{X\sim P_{\theta^*}}[f(X)] - \mathbb{E}_{X\sim \mathbb{Q}_X}[f(X)] \big| \\
& = (1-\epsilon) \mathbb{E}_{X\sim P_{\theta^*}}\left[\big|f(X)- \mathbb{E}_{X\sim P_{\theta^*} }[ f(X)]\big|\right] 
+ \epsilon \mathbb{E}_{X\sim\mathbb{Q}_X}\left[\big|f(X)- \mathbb{E}_{X\sim P_{\theta^*} }[f(X)]\big|\right] \\
& \quad \quad \quad \quad \quad \quad \quad \quad \quad \quad \quad \quad \quad \quad \quad \quad \quad \quad \quad + \epsilon \big| \mathbb{E}_{X\sim P_{\theta^*}}[f(X)] - \mathbb{E}_{X\sim \mathbb{Q}_X}[f(X)] \big| \\
& \leq (1-\epsilon) \cdot \mathbb{M}^{1/2}_{X\sim P_{\theta^*}}\left[f(X)\right] + 4\epsilon .
\end{align*}
Thus we get:
\begin{align*}
    \E&_{M \sim \PM}\left[\mathbb{D}^2\left(P_{\MMDinf}^{(M)},\PmargMthetastar\right)\right] \leq 6 \cdot \E_{M\sim\PM}\left[\mathbb{D}^2\left(\PmargMthetastar,\PmargMX\right)\right] + 8 \cdot \E_{M\sim\PM}\left[ \frac{\mathbb{M}_{X\sim \PX} \left[ \pi_M(X) \right]}{\mathbb{E}_{X\sim \PX} \left[ \pi_M(X) \right]^2} \right] \\
    & \leq 24\epsilon^2 + 16(1-\epsilon)^2 \cdot \mathbb{E}_{M \sim \PM}\left[\frac{\mathbb{M}_{X\sim \Pthetastar} \left[ \pi_M(X) \right]}{\mathbb{E}_{X\sim\PX} \left[ \pi_M(X) \right]^2}\right] + 64\epsilon^2 \cdot \mathbb{E}_{M \sim \PM}\left[\frac{1}{\mathbb{E}_{X\sim\PX} \left[ \pi_M(X) \right]^2}\right] \\
    & = 24\epsilon^2 + 16(1-\epsilon)^2 \cdot  \sum_m \E_{X\sim\PX}\left[\pi_m(X)\right] \frac{\mathbb{M}_{X\sim \Pthetastar} \left[ \pi_m(X) \right]}{\mathbb{E}_{X\sim\PX} \left[ \pi_m(X) \right]^2} + 64\epsilon^2 \sum_m \E_{X\sim\PX}\left[\pi_m(X)\right] \frac{1}{\mathbb{E}_{X\sim\PX} \left[ \pi_m(X) \right]^2} \\
    & = 24\epsilon^2 + 16(1-\epsilon)^2 \cdot  \sum_m \frac{\mathbb{M}_{X\sim \Pthetastar} \left[ \pi_m(X) \right]}{\mathbb{E}_{X\sim\PX} \left[ \pi_m(X) \right]} + 64\epsilon^2 \sum_m \frac{1}{\mathbb{E}_{X\sim\PX} \left[ \pi_m(X) \right]} \\
    & \leq 24\epsilon^2 + 16(1-\epsilon)^2 \cdot \sum_m \frac{\mathbb{M}_{X\sim \Pthetastar} \left[ \pi_m(X) \right]}{(1-\epsilon)\mathbb{E}_{X\sim\Pthetastar} \left[ \pi_m(X) \right]} + 64\epsilon^2 \sum_m \frac{1}{(1-\epsilon)\mathbb{E}_{X\sim\Pthetastar} \left[ \pi_m(X) \right]} \\
    & = 24\epsilon^2 + 16(1-\epsilon) \cdot \mathbb{E}_{M\sim P^*_M}\left[\frac{\mathbb{M}_{X\sim \Pthetastar} \left[ \pi_M(X) \right]}{\pi_M^{* 2}}\right] + \frac{64\epsilon^2}{1-\epsilon} \cdot \mathbb{E}_{M\sim P^*_M}\left[ \frac{1}{\pi_M^{* 2}}\right] 
\end{align*}
since $\mathbb{E}_{X\sim\PX} \left[ \pi_m(X) \right]=(1-\epsilon)\mathbb{E}_{X\sim \Pthetastar} \left[ \pi_m(X) \right]+\epsilon\mathbb{E}_{X\sim \QX} \left[ \pi_m(X) \right]\geq(1-\epsilon)\mathbb{E}_{X\sim \Pthetastar} \left[ \pi_m(X) \right]$,
and finally, using the identity:
\begin{align*}
\E_{M\sim\PM}\left[f(M)\right] & = \sum_m \E_{X\sim\PX}\left[\pi_m(X)\right] f(m) \\
& = \sum_m \E_{X\sim(1-\epsilon) P_{\theta^*} + \epsilon \mathbb{Q}_X } \left[\pi_m(X)\right] f(m) \\
& = (1-\epsilon) \sum_m \E_{X\sim P_{\theta^*}}\left[\pi_m(X)\right] f(m) + \epsilon \sum_m \E_{X\sim \QX}\left[\pi_m(X)\right] f(m) \\
& \geq (1-\epsilon) \sum_m \E_{X\sim P_{\theta^*}}\left[\pi_m(X)\right] f(m) \\
& = (1-\epsilon) \sum_m \pi_m f(m) \\
& = (1-\epsilon) \E_{M\sim P^*_M}\left[f(M)\right] ,
\end{align*}
we get
\begin{align*}
    \E_{M\sim P^*_M} \left[\mathbb{D}^2\left(P_{\MMDinf}^{(M)},\PmargMthetastar\right)\right] & \leq \frac{1}{1-\epsilon} \E_{M \sim \PM}\left[\mathbb{D}^2\left(P_{\MMDinf}^{(M)},\PmargMthetastar\right)\right] \\ 
    & \leq \frac{24\epsilon^2}{1-\epsilon} + 16 \cdot \mathbb{E}_{M\sim P^*}\left[\frac{\mathbb{M}_{X\sim \Pthetastar} \left[ \pi_M(X) \right]}{\pi_M^{* 2}}\right] + \frac{64\epsilon^2}{(1-\epsilon)^2} \cdot \mathbb{E}_{M\sim P^*}\left[ \frac{1}{\pi_M^{* 2}}\right] ,
\end{align*}
which ends the proof.
\end{proof}

\begin{proof}[Proof of Theorem \ref{MMD_Robust_cont_finite}] The proof follows the same line as for the proof of Theorem \ref{MMD_Robust_cont}. Theorem \ref{thm_finite_main} (using the squared absolute mean deviation $\mathbb{M}$ instead of the variance $\mathbb{V}$) leads to:
\begin{align*}
    \mathbb{D}\left(\Pthetastar,P_{\MMDn}\right) & \leq \mathbb{D}\left(\Pthetastar,\PX\right) + \mathbb{D}\left(\PX,\mathbb{P}_{X|M=0}\right) +     \mathbb{D}\left(\mathbb{P}_{X|M=0},P_n\right) + \mathbb{D}\left(P_n,P_{\MMDn}\right) \\
    & \leq \mathbb{D}\left(\Pthetastar,\PX\right) + \mathbb{D}\left(\PX,\mathbb{P}_{X|M=0}\right) +     \mathbb{D}\left(\mathbb{P}_{X|M=0},P_n\right) + \mathbb{D}\left(P_n,\Pthetastar\right) \\
    & \leq 2\mathbb{D}\left(\Pthetastar,\PX\right) + 2\mathbb{D}\left(\PX,\mathbb{P}_{X|M=0}\right) +     2\mathbb{D}\left(\mathbb{P}_{X|M=0},P_n\right) .
\end{align*}
The first term is controlled as follows:
\begin{align*}
    \mathbb{D}\left(\Pthetastar,\PX\right) & = \mathbb{D}\left(\Pthetastar,(1-\epsilon)\Pthetastar+\epsilon\QX\right) \\
    & = \big\lVert \E_{X\sim P_{\theta^*}} \left[ k(X,\cdot) \right] - \E_{X\sim (1-\epsilon)P_{\theta^*} + \epsilon \QX} \left[ k(X,\cdot) \right] \big\rVert_{\mathcal{H}} \\
    & = \bigg\lVert \E_{X\sim P_{\theta^*}} \left[ k(X,\cdot) \right] - \bigg\{ (1-\epsilon) \E_{X\sim P_{\theta^*}} \left[ k(X,\cdot) \right] + \epsilon \E_{X\sim \QX} \left[ k(X,\cdot) \right] \bigg\} \bigg\rVert_{\mathcal{H}} \\
    & = \epsilon \cdot \big\lVert \E_{X\sim P_{\theta^*}} \left[ k(X,\cdot) \right] - \E_{X\sim \QX} \left[ k(X,\cdot) \right] \big\rVert_{\mathcal{H}}\\
    & \leq \epsilon \cdot \left(\big\lVert \E_{X\sim P_{\theta^*}} \left[ k(X,\cdot) \right] \big\rVert_{\mathcal{H}} + \big\lVert \E_{X\sim \QX} \left[ k(X,\cdot) \right] \big\rVert_{\mathcal{H}}\right) \\
    & \leq 2 \epsilon ,
\end{align*}
the second term as follows:
\begin{align*}
    &\mathbb{D}\left(\PX,\mathbb{P}_{X|M=0}\right) \\
    & \leq \frac{\mathbb{E}_{X\sim \PX}\left[\left|\pi(X)-\mathbb{E}_{X\sim \PX}[\pi(X)]\right|\right]}{\mathbb{E}_{X\sim \PX}\left[\pi(X)\right]} \\
    & = \frac{\mathbb{E}_{X\sim \PX}\left[\left|\pi(X)-\mathbb{E}_{X\sim \Pthetastar}[\pi(X)]+\epsilon\{\mathbb{E}_{X\sim \Pthetastar}[\pi(X)]-\mathbb{E}_{X\sim \QX}[\pi(X)]\}\right|\right]}{\mathbb{E}_{X\sim \PX}\left[\pi(X)\right]} \\
    & \leq \frac{\mathbb{E}_{X\sim \PX}\left[\left|\pi(X)-\mathbb{E}_{X\sim \Pthetastar}[\pi(X)]\right|\right]}{\mathbb{E}_{X\sim \PX}\left[\pi(X)\right]} + \epsilon \cdot \frac{\left|\mathbb{E}_{X\sim \Pthetastar}[\pi(X)]-\mathbb{E}_{X\sim \QX}[\pi(X)]\right|}{\mathbb{E}_{X\sim \PX}\left[\pi(X)\right]} \\
    & \leq \frac{\mathbb{E}_{X\sim \PX}\left[\left|\pi(X)-\mathbb{E}_{X\sim \Pthetastar}[\pi(X)]\right|\right]}{\mathbb{E}_{X\sim \PX}\left[\pi(X)\right]} + \frac{2\epsilon}{\mathbb{E}_{X\sim \PX}\left[\pi(X)\right]} \\
    & = \frac{(1-\epsilon)\mathbb{E}_{X\sim \Pthetastar}\left[\left|\pi(X)-\mathbb{E}_{X\sim \Pthetastar}[\pi(X)]\right|\right]}{\mathbb{E}_{X\sim \PX}\left[\pi(X)\right]} + \frac{\epsilon\mathbb{E}_{X\sim \QX}\left[\left|\pi(X)-\mathbb{E}_{X\sim \Pthetastar}[\pi(X)]\right|\right]}{\mathbb{E}_{X\sim \PX}\left[\pi(X)\right]} + \frac{2\epsilon}{\mathbb{E}_{X\sim \PX}\left[\pi(X)\right]} \\
    & \leq \frac{(1-\epsilon)\mathbb{E}_{X\sim \Pthetastar}\left[\left|\pi(X)-\mathbb{E}_{X\sim \Pthetastar}[\pi(X)]\right|\right]}{(1-\epsilon)\mathbb{E}_{X\sim \Pthetastar}\left[\pi(X)\right]} + \frac{\epsilon\mathbb{E}_{X\sim \QX}\left[\left|\pi(X)-\mathbb{E}_{X\sim \Pthetastar}[\pi(X)]\right|\right]}{(1-\epsilon)\mathbb{E}_{X\sim \Pthetastar}\left[\pi(X)\right]} + \frac{2\epsilon}{(1-\epsilon)\mathbb{E}_{X\sim \Pthetastar}\left[\pi(X)\right]} \\
    & = \frac{\mathbb{E}_{X\sim \Pthetastar}\left[\left|\pi(X)-\mathbb{E}_{X\sim \Pthetastar}[\pi(X)]\right|\right]}{\mathbb{E}_{X\sim \Pthetastar}\left[\pi(X)\right]} + \frac{\epsilon}{1-\epsilon} \cdot \frac{\mathbb{E}_{X\sim \QX}\left[\left|\pi(X)-\mathbb{E}_{X\sim \Pthetastar}[\pi(X)]\right|\right]}{\mathbb{E}_{X\sim \Pthetastar}\left[\pi(X)\right]} + \frac{2\epsilon}{(1-\epsilon)\mathbb{E}_{X\sim \Pthetastar}\left[\pi(X)\right]} \\
    & \leq \frac{\mathbb{E}_{X\sim \Pthetastar}\left[\left|\pi(X)-\mathbb{E}_{X\sim \Pthetastar}[\pi(X)]\right|\right]}{\mathbb{E}_{X\sim \Pthetastar}\left[\pi(X)\right]} + \frac{2\epsilon}{1-\epsilon} \cdot \frac{1}{\mathbb{E}_{X\sim \Pthetastar}\left[\pi(X)\right]} + \frac{2\epsilon}{1-\epsilon} \cdot \frac{1}{\mathbb{E}_{X\sim \Pthetastar}\left[\pi(X)\right]} \\
    & = \frac{\mathbb{E}_{X\sim \Pthetastar}\left[\left|\pi(X)-\mathbb{E}_{X\sim \Pthetastar}[\pi(X)]\right|\right]}{\mathbb{E}_{X\sim \Pthetastar}\left[\pi(X)\right]} + \frac{4\epsilon}{1-\epsilon} \cdot \frac{1}{\mathbb{E}_{X\sim \Pthetastar}\left[\pi(X)\right]} ,
\end{align*}
and the third one in expectation as follows:
\begin{align*}
    \E_{\mathcal{S}}
    \left[\mathbb{D}\left(\mathbb{P}_{X|M=0},P_n\right)\right] & \leq \sqrt{\E_{\mathcal{S}}\left[\mathbb{D}^2\left(\mathbb{P}_{X|M=0},P_n\right)\right]} \\
    & \leq \sqrt{\E_{\{M_i\}_{1\leq i \leq n}\sim\PM} \left[ \frac{1}{|\{i:M_i=0\}|} \right]} \\
    & \leq \sqrt{\frac{2}{n\cdot\mathbb{E}_{X\sim \PX}\left[\pi(X)\right]}} \\
    & \leq \sqrt{\frac{2}{n\cdot(1-\epsilon)\mathbb{E}_{X\sim \Pthetastar}\left[\pi(X)\right]}} , 
\end{align*}
so that in the end:
\begin{align*}
    &\E_{\mathcal{S}}\left[\mathbb{D}\left(\Pthetastar,P_{\MMDn}\right)\right] \\
    &  \leq 2\mathbb{D}\left(\Pthetastar,\PX\right) + 2\mathbb{D}\left(\PX,\mathbb{P}_{X|M=0}\right) +     2\mathbb{D}\left(\mathbb{P}_{X|M=0},P_n\right) \\
    & \leq 4\epsilon + \frac{2\cdot\mathbb{E}_{X\sim \Pthetastar}\left[\left|\pi(X)-\mathbb{E}_{X\sim \Pthetastar}[\pi(X)]\right|\right]}{\mathbb{E}_{X\sim \Pthetastar}\left[\pi(X)\right]} + \frac{8\epsilon}{1-\epsilon} \cdot \frac{1}{\mathbb{E}_{X\sim \Pthetastar}\left[\pi(X)\right]} + 2\sqrt{\frac{2}{n\cdot(1-\epsilon)\mathbb{E}_{X\sim \Pthetastar}\left[\pi(X)\right]}} \\
    & \leq 4\epsilon + \frac{2\sqrt{\mathbb{V}_{X\sim \Pthetastar}\left[\pi(X)\right]}}{\pi^*} + \frac{8\epsilon}{1-\epsilon} \cdot \frac{1}{\pi^*} + 2\sqrt{\frac{2}{(1-\epsilon)n\pi^*}} .
\end{align*}
\end{proof}

\subsection{Proofs from Section \ref{Sec_exm}}

\begin{proof}[Proof of Corollary \ref{cor_trunc}]
Under the setting adopted in Example \ref{exm_trunc}, a finer version of Theorem \ref{MMD_Robust} (applying the triangle inequality to the MMD directly rather than to the squared MMD, and using the absolute mean deviation instead of the variance) simply leads to
$$
\mathbb{D}\left(P_{\MMDinf},P_{\theta^*}\right) \leq 2 \cdot \frac{\E_{X\sim \PX} \left[ \left|\mathbb{1}(X\in\mathcal{S})-\mathbb{P}[X\in\mathcal{S}]\right| \right]}{\mathbb{P}[X\in\mathcal{S}]} = 2 \cdot \frac{ 2 \mathbb{P}[X\in\mathcal{S}] (1-\mathbb{P}[X\in\mathcal{S}])}{\mathbb{P}[X\in\mathcal{S}]} = 4 \varepsilon .
$$
\end{proof}

\begin{proof}[Proof of Theorem \ref{cor_huber}]
With the notation $\widetilde{\pi}_m(X)=\mathbb{Q}[M=m|X]$, a finer application of Theorem \ref{MMD_Robust} gives 
\begin{align*}
&\E_{M \sim \PM}\left[\mathbb{D}^2\left(\mathbb{P}_{X}^{(M)},P_{\MMDinf}^{(M)}\right)\right]\\
& \leq 2\cdot \inf_{\theta\in\Theta}\E_{M \sim \PM}\left[\mathbb{D}^2\left(P_{\theta}^{(M)},\PmargMX\right)\right] + 4 \cdot \E_{M \sim \PM}\left[ \frac{\mathbb{V}_{X\sim \PX} \left[ \pi_M(X) \right]}{\E_{X\sim \PX} \left[\pi_M(X)\right]^2} \right] \\
& = 2\cdot \inf_{\theta\in\Theta}\E_{M \sim \PM}\left[\mathbb{D}^2\left(P_{\theta}^{(M)},\PmargMX\right)\right] + 4 \cdot \E_{M \sim \PM}\left[ \frac{\varepsilon^2\cdot\mathbb{V}_{X\sim \PX} \left[ \widetilde{\pi}_M(X) \right]}{\E_{X\sim \PX} \left[\pi_M(X)\right]^2} \right] \\
& \leq 2\cdot \inf_{\theta\in\Theta}\E_{M \sim \PM}\left[\mathbb{D}^2\left(P_{\theta}^{(M)},\PmargMX\right)\right] + 4 \varepsilon^2 \cdot \E_{M \sim \PM}\left[\frac{1}{\E_{X\sim \PX} \left[\pi_M(X)\right]^2}\right] \\
& = 2\cdot \inf_{\theta\in\Theta}\E_{M \sim \PM}\left[\mathbb{D}^2\left(P_{\theta}^{(M)},\PmargMX\right)\right] + 4 \varepsilon^2 \cdot \sum_m \E_{X\sim \PX} \left[\pi_m(X)\right] \frac{1}{\E_{X\sim \PX} \left[\pi_m(X)\right]^2} \\
& = 2\cdot \inf_{\theta\in\Theta}\E_{M \sim \PM}\left[\mathbb{D}^2\left(P_{\theta}^{(M)},\PmargMX\right)\right] + 4 \varepsilon^2 \cdot \sum_m \frac{1}{\E_{X\sim \PX} \left[\pi_m(X)\right]} \\
& = 2\cdot \inf_{\theta\in\Theta}\E_{M \sim \PM}\left[\mathbb{D}^2\left(P_{\theta}^{(M)},\PmargMX\right)\right] + 4 \varepsilon^2 \cdot \sum_m \frac{1}{(1-\varepsilon)\alpha_m+\varepsilon\cdot\E_{X\sim \PX} \left[\widetilde{\pi}_m(X)\right]} \\
& \leq 2\cdot \inf_{\theta\in\Theta}\E_{M \sim \PM}\left[\mathbb{D}^2\left(P_{\theta}^{(M)},\PmargMX\right)\right] + 4 \varepsilon^2 \cdot \sum_m \frac{1}{(1-\varepsilon)\alpha_m} \\
& = 2\cdot \inf_{\theta\in\Theta}\E_{M \sim \PM}\left[\mathbb{D}^2\left(P_{\theta}^{(M)},\PmargMX\right)\right] + \frac{4\varepsilon^2}{1-\varepsilon} \cdot \E_{M \sim (\alpha_m)}\left[\frac{1}{\alpha_M^2}\right] .
\end{align*}
Furthermore, we have for any pattern $m$:
\begin{align*}
    \mathbb{D}\left(P_{\theta^*}^{(m)},\mathbb{P}_{X}^{(m)}\right) & = \mathbb{D}\left(P_{\theta^*}^{(m)},(1-\epsilon)P_{\theta^*}^{(m)}+\epsilon\mathbb{Q}_X^{(m)}\right) \\
    & = \left\lVert \E_{X\sim\Pthetastar} \left[ k(X^{(m)},\cdot) \right] - \E_{X\sim(1-\epsilon)\Pthetastar+\epsilon\QX} \left[ k(X^{(m)},\cdot) \right] \right\rVert_{\mathcal{H}} \\
    & = \left\lVert \E_{X\sim\Pthetastar} \left[ k(X^{(m)},\cdot) \right] - \left\{(1-\epsilon)\E_{X\sim\Pthetastar} \left[ k(X^{(m)},\cdot) \right] + \epsilon \E_{X\sim\QX} \left[ k(X^{(m)},\cdot) \right] \right\} \right\rVert_{\mathcal{H}} \\
    & = \epsilon \cdot \left\lVert \E_{X\sim\Pthetastar} \left[ k(X^{(m)},\cdot) \right] -  \E_{X\sim\QX} \left[ k(X^{(m)},\cdot) \right] \right\rVert_{\mathcal{H}} \\
    & \leq \epsilon \cdot \left\{ \left\lVert \E_{X\sim\Pthetastar} \left[ k(X^{(m)},\cdot) \right] \right\rVert_{\mathcal{H}} + \left\lVert \E_{X\sim\QX} \left[ k(X^{(m)},\cdot) \right] \right\rVert_{\mathcal{H}} \right\} \\
    & \leq 2 \epsilon ,
\end{align*}
and thus
\begin{align*}
\E_{M \sim \PM}\left[\mathbb{D}^2\left(P_{\theta^*}^{(M)},P_{\MMDinf}^{(M)}\right)\right] & \leq 2 \cdot \E_{M \sim \PM}\left[\mathbb{D}^2\left(P_{\theta^*}^{(M)},\mathbb{P}_{X}^{(M)}\right)\right] + 2 \cdot \E_{M \sim \PM}\left[\mathbb{D}^2\left(\mathbb{P}_{X}^{(M)},P_{\MMDinf}^{(M)}\right)\right] \\
& \leq 8\epsilon^2 + 4\cdot \inf_{\theta\in\Theta}\E_{M \sim \PM}\left[\mathbb{D}^2\left(P_{\theta}^{(M)},\PmargMX\right)\right] + \frac{8\epsilon^2}{1-\varepsilon} \cdot \E_{M \sim (\alpha_m)}\left[\frac{1}{\alpha_M^2}\right] \\
& \leq 8\epsilon^2 + 4\cdot\E_{M \sim \PM}\left[\mathbb{D}^2\left(P_{\theta^*}^{(M)},\PmargMX\right)\right] + \frac{8\epsilon^2}{1-\varepsilon} \cdot \E_{M \sim (\alpha_m)}\left[\frac{1}{\alpha_M^2}\right] \\
& \leq 24\epsilon^2 + \frac{8\epsilon^2}{1-\varepsilon} \cdot \E_{M \sim (\alpha_m)}\left[\frac{1}{\alpha_M^2}\right] .
\end{align*}
Finally, using the identity
\begin{align*}
\E_{M\sim\PM}\left[f(M)\right] & = \sum_m \E_{X\sim\PX}\left[\pi_m(X)\right] f(m) \\
& = \sum_m \E_{X\sim\PX}\left[(1-\varepsilon)\alpha_m+\varepsilon\widetilde{\pi}(X)\right] f(m) \\
& = (1-\varepsilon) \sum_m \alpha_m f(m) + \varepsilon \sum_m \E_{X\sim\PX}\left[\widetilde{\pi}(X)\right] f(m) \\
& \geq (1-\varepsilon) \sum_m \alpha_m f(m) \\
& = (1-\varepsilon) \sum_m \alpha_m f(m) \\
& = (1-\varepsilon) \E_{M\sim(\alpha_m)}\left[f(M)\right] ,
\end{align*}
we have
\begin{align*}
\E_{M\sim(\alpha_m)}\left[\mathbb{D}^2\left(P_{\theta^*}^{(M)},P_{\MMDinf}^{(M)}\right)\right] & \leq \frac{1}{1-\varepsilon} \cdot \E_{M \sim \PM}\left[\mathbb{D}^2\left(P_{\theta^*}^{(M)},P_{\MMDinf}^{(M)}\right)\right] \\
& \leq \frac{24\epsilon^2}{1-\varepsilon} + \frac{8\varepsilon^2}{(1-\varepsilon)^2} \cdot \E_{M \sim (\alpha_m)}\left[\frac{1}{\alpha_M^2}\right] .
\end{align*}
\end{proof}

\begin{proof}[Proof of Theorem \ref{cor_huber_finite}] This is a straightforward application of a refined version of Theorem \ref{MMD_Robust_cont_finite}. Remind that Theorem \ref{MMD_Robust_cont_finite} is:
$$
\E_{\mathcal{S}}\left[ \mathbb{D}\left(P_{\MMDn},\PX\right) \right] \leq 4\cdot\epsilon + \frac{8\cdot\epsilon}{\E_{X\sim\Pthetastar}[\pi(X)](1-\epsilon)} + \frac{2\sqrt{\mathbb{V}_{X\sim \Pthetastar}\left[\pi(X)\right]}}{\E_{X\sim\Pthetastar}[\pi(X)]} + \frac{2\sqrt{2}}{\sqrt{n\E_{X\sim\Pthetastar}[\pi(X)](1-\epsilon)}} ,
$$
where the quantities $\E_{X\sim\Pthetastar}[\pi(X)](1-\epsilon)$ in the denominator come as rough lower bounds on $\E_{X\sim\PX}[\pi(X)]$. Hence, using those quantities directly as established in the proof of \ref{MMD_Robust_cont_finite}, we have:
$$
\E_{\mathcal{S}}\left[ \mathbb{D}\left(P_{\MMDn},\PX\right) \right] \leq 4\cdot\epsilon + \frac{2\sqrt{\mathbb{V}_{X\sim \PX}\left[\pi(X)\right]}}{\E_{X\sim\PX}[\pi(X)]} + \frac{2\sqrt{2}}{\sqrt{n\E_{X\sim\PX}[\pi(X)]}} .
$$
The arguments $\E_{X\sim\PX}[\pi(X)]\geq\alpha(1-\varepsilon)$, $\E_{X\sim\Pthetastar}[\pi(X)]\geq\alpha(1-\varepsilon)$ and $\sqrt{\mathbb{V}_{X\sim\PX}[\pi(X)]}\leq\varepsilon$ where $\pi(X) = (1-\varepsilon) \cdot \alpha + \varepsilon \cdot \mathbb{Q}[M=m|X]$ then lead to:
$$
\E_{\mathcal{S}}\left[ \mathbb{D}\left(P_{\MMDn},\PX\right) \right] \leq 4\cdot\epsilon + \frac{2\varepsilon}{\alpha(1-\varepsilon)} + \frac{2\sqrt{2}}{\sqrt{n\alpha(1-\varepsilon)}} .
$$
\end{proof}


\begin{proof}[Proof of Theorem \ref{cor_adver}] First observe that by definition:
$$
\MMDn = \argmax_{\theta\in\Theta} \mathbb{D}\left(\widetilde{P}_n,P_{\theta}\right) \quad \textnormal{where} \quad \widetilde{P}_n = \frac{1}{|\{i:\widetilde{M}_i=0\}|} \sum_{i:\widetilde{M}_i=0} \delta_{\left\{X_i\right\}} .
$$
With the properties $\big| |\{i:\widetilde{M}_i=0\}|-|\{i:M_i=0\}| \big| \leq \varepsilon/\alpha\cdot|\{i:M_i=0\}|$, $|\{i:\widetilde{M}_i=0\}|\geq(1-\varepsilon/\alpha)\cdot|\{i:M_i=0\}|$, $|\{i:M_i=0,\widetilde{M}_i=1\}|\leq\varepsilon/\alpha\cdot|\{i:M_i=0\}|$ and $|\{i:M_i=1,\widetilde{M}_i=0\}|\leq\varepsilon/\alpha\cdot|\{i:M_i=0\}|$, we have:
\begin{align*}
    \mathbb{D}\left(P_n,\widetilde{P}_n\right) & = \left\lVert \frac{1}{|\{i:M_i=0\}|} \sum_{i:M_i=0} \Phi(X_i) - \frac{1}{|\{i:\widetilde{M}_i=0\}|} \sum_{i:\widetilde{M}_i=0} \Phi(X_i) \right\rVert_{\mathcal{H}} \\
    & = \left\lVert \sum_{i=1}^n \left( \frac{\mathbbm{1}(M_i=0)}{|\{i:M_i=0\}|} - \frac{\mathbbm{1}(\widetilde{M}_i=0)}{|\{i:\widetilde{M}_i=0\}|} \right) \Phi(X_i) \right\rVert_{\mathcal{H}} \\
    & \leq \sum_{i=1}^n \left| \frac{\mathbbm{1}(M_i=0)}{|\{i:M_i=0\}|} - \frac{\mathbbm{1}(\widetilde{M}_i=0)}{|\{i:\widetilde{M}_i=0\}|} \right| \left\lVert \Phi(X_i) \right\rVert_{\mathcal{H}} \\
    & \leq \sum_{i=1}^n \left| \frac{\mathbbm{1}(M_i=0)}{|\{i:M_i=0\}|} - \frac{\mathbbm{1}(\widetilde{M}_i=0)}{|\{i:\widetilde{M}_i=0\}|} \right| \\
    & = \sum_{i=1}^n \frac{ \left| \mathbbm{1}(M_i=0)|\{i:\widetilde{M}_i=0\}|-\mathbbm{1}(\widetilde{M}_i=0)|\{i:M_i=0\}| \right| }{|\{i:M_i=0\}|\cdot|\{i:\widetilde{M}_i=0\}|} \\
    & = \sum_{i:M_i=0,\widetilde{M}_i=1} \frac{|\{i:\widetilde{M}_i=0\}|}{|\{i:M_i=0\}|\cdot|\{i:\widetilde{M}_i=0\}|} + \sum_{i:M_i=1,\widetilde{M}_i=0} \frac{|\{i:M_i=0\}|}{|\{i:M_i=0\}|\cdot|\{i:\widetilde{M}_i=0\}|} \\
    & \quad \quad \quad \quad \quad \quad \quad \quad \quad \quad \quad \quad \quad \quad \quad \quad \quad \quad + \sum_{i:M_i=\widetilde{M}_i=0} \frac{ \left| |\{i:\widetilde{M}_i=0\}|-|\{i:M_i=0\}| \right| }{|\{i:M_i=0\}|\cdot|\{i:\widetilde{M}_i=0\}|} \\
    & = \frac{|\{i:M_i=0,\widetilde{M}_i=1\}|}{|\{i:M_i=0\}|} + \frac{|\{i:M_i=1,\widetilde{M}_i=0\}|}{|\{i:\widetilde{M}_i=0\}|} \\
    & \quad \quad \quad \quad \quad \quad \quad \quad \quad \quad \quad \quad + \frac{ |\{i:M_i=\widetilde{M}_i=0\}| \cdot \left| |\{i:\widetilde{M}_i=0\}|-|\{i:M_i=0\}| \right| }{|\{i:M_i=0\}|\cdot|\{i:\widetilde{M}_i=0\}|} \\
    & \leq \frac{|\{i:M_i=0,\widetilde{M}_i=1\}|}{|\{i:M_i=0\}|} + \frac{|\{i:M_i=1,\widetilde{M}_i=0\}|}{|\{i:\widetilde{M}_i=0\}|} + \frac{ |\{i:M_i=\widetilde{M}_i=0\}| \cdot \varepsilon/\alpha\cdot|\{i:M_i=0\}| }{|\{i:M_i=0\}|\cdot|\{i:\widetilde{M}_i=0\}|} \\
    & = \frac{|\{i:M_i=0,\widetilde{M}_i=1\}|}{|\{i:M_i=0\}|} + \frac{|\{i:M_i=1,\widetilde{M}_i=0\}|}{|\{i:\widetilde{M}_i=0\}|} + \frac{ |\{i:M_i=\widetilde{M}_i=0\}| \cdot \varepsilon/\alpha }{\{i:\widetilde{M}_i=0\}|} \\
    & \leq \frac{|\{i:M_i=0,\widetilde{M}_i=1\}|}{|\{i:M_i=0\}|} + \frac{|\{i:M_i=1,\widetilde{M}_i=0\}|}{(1-\varepsilon/\alpha)\cdot|\{i:M_i=0\}|} + \frac{ |\{i:M_i=\widetilde{M}_i=0\}| \cdot \varepsilon/\alpha }{(1-\varepsilon/\alpha)\cdot|\{i:M_i=0\}|} \\
    & = \frac{(1-\varepsilon/\alpha)\cdot|\{i:M_i=0,\widetilde{M}_i=1\}| + |\{i:M_i=1,\widetilde{M}_i=0\}| + |\{i:M_i=\widetilde{M}_i=0\}| \cdot \varepsilon/\alpha }{(1-\varepsilon/\alpha)\cdot|\{i:M_i=0\}|}  \\
    & \leq \frac{(1-\varepsilon/\alpha)\cdot\varepsilon/\alpha\cdot|\{i:M_i=0\}| + \varepsilon/\alpha\cdot|\{i:M_i=0\}| + |\{i:M_i=0\}| \cdot \varepsilon/\alpha }{(1-\varepsilon/\alpha)\cdot|\{i:M_i=0\}|}  \\
    & = \frac{(1-\varepsilon/\alpha)\cdot\varepsilon/\alpha + \varepsilon/\alpha + \varepsilon/\alpha }{1-\varepsilon/\alpha}  \\
    & = \frac{(3-\varepsilon/\alpha)\cdot\varepsilon/\alpha}{1-\varepsilon/\alpha}  \\
    & \leq \frac{3\cdot\varepsilon}{\alpha-\varepsilon} .
\end{align*}
Once again, if $\PX=(1-\epsilon)\Pthetastar+\epsilon\QX$:
\begin{align*}
    \mathbb{D}\left(\Pthetastar,P_{\MMDn}\right) & \leq \mathbb{D}\left(\Pthetastar,\PX\right) + \mathbb{D}\left(\PX,\widetilde{P}_n\right) +     \mathbb{D}\left(\widetilde{P}_n,P_{\MMDn}\right)\\
    & \leq \mathbb{D}\left(\Pthetastar,\PX\right) + \mathbb{D}\left(\PX,\widetilde{P}_n\right) + \mathbb{D}\left(\widetilde{P}_n,\Pthetastar\right)\\
    & \leq 2\mathbb{D}\left(\Pthetastar,\PX\right) + 2\mathbb{D}\left(\PX,\widetilde{P}_n\right) \\
    & \leq 2\mathbb{D}\left(\Pthetastar,\PX\right) + 2\mathbb{D}\left(\PX,P_n\right) + 2\mathbb{D}\left(P_n,\widetilde{P}_n\right) \\
    & \leq 4\epsilon + 2\mathbb{D}\left(\PX,P_n\right) + 2\mathbb{D}\left(P_n,\widetilde{P}_n\right) .
\end{align*}
Since the base uncontaminated missingness mechanism is M(C)AR, $P_n = \frac{1}{|\{i:M_i=0\}|} \sum_{i:M_i=0} \delta_{\{X_i\}}$ is an empirical estimate of $\PX$ for which:
$$
\mathbb{E}_{\mathcal{S}}\left[\mathbb{D}^2\left(\PX,P_n\right)\right] \leq \E_{\{M_i\}_{1\leq i \leq n}\sim\PM} \left[ \frac{1}{|\{i:M_i=0\}|} \right] \leq \frac{2}{n\cdot\alpha} ,
$$
and so:
$$
\mathbb{E}_{\mathcal{S}} \left[\mathbb{D}\left(\Pthetastar,P_{\MMDn}\right) \right] \leq 4\epsilon + 2\mathbb{E}_{\mathcal{S}}\left[\mathbb{D}\left(\PX,P_n\right)\right] + 2\mathbb{E}_{\mathcal{S}}\left[\mathbb{D}\left(P_n,\widetilde{P}_n\right)\right] \leq 4\epsilon + 2\sqrt{\frac{2}{n\cdot\alpha}} + 2\cdot \frac{3\cdot\varepsilon}{\alpha-\varepsilon} .
$$
\end{proof}

\end{document}